\newif\ifarxiv
\newif\iftplp
\ifarxiv \usepackage{hyperref} \fi
\ifarxiv \hypersetup{breaklinks} \fi
\newcommand{\mycomment}[1]{}
\newcommand{\mysim}{\sim\!}
\newcommand{\lsem}{\mbox{$\lbrack\!\lbrack$}}
\newcommand{\rsem}{\mbox{$\rbrack\!\rbrack$}}
\newcommand{\ctrue}{\mathsf{true}}
\newcommand{\cfalse}{\mathsf{false}}
\newcommand{\mtrue}{\mathit{true}}
\newcommand{\mfalse}{\mathit{false}}
\newcommand{\mundef}{\mathit{undef}}
\newcommand{\mnot}{\sim\!\!}
\newcommand{\per}{\mbox{{\tt .}}}
\newcommand{\lpa}{\mbox{{\tt (}}}
\newcommand{\rpa}{\mbox{{\tt )}}}
\newtheorem{lemma}{Lemma}
\newtheorem{proposition}{Proposition}
\newtheorem{definition}{Definition}
\newtheorem{theorem}{Theorem}
\newtheorem{corollary}{Corollary}
\newtheorem{example}{Example}
\begin{document}
\title[Approximation Fixpoint Theory and the Well-Founded Semantics of HOLP]{Approximation Fixpoint Theory and the Well-Founded Semantics of Higher-Order Logic Programs}

\author[A. Charalambidis, P. Rondogiannis and I. Symeonidou]
			 {ANGELOS CHARALAMBIDIS\\
				Institute of Informatics and Telecommunications, NCSR ``Demokritos'', Greece \\
				\email{acharal@iit.demokritos.gr}
				\and PANOS RONDOGIANNIS, IOANNA SYMEONIDOU \\
				Department of Informatics and Telecommunications, University of Athens, Greece \\
		    \email{\{prondo,sioanna\}@di.uoa.gr}}

\pagerange{\pageref{firstpage}--\pageref{lastpage}}
\volume{\textbf{10} (3):}
\jdate{March 2002}
\setcounter{page}{1}
\pubyear{2002}

\maketitle

\begin{abstract}
We define a novel, extensional, three-valued semantics for higher-order logic programs
with negation. The new semantics is based on interpreting the types of the source language
as three-valued Fitting-monotonic functions at all levels of the type hierarchy. We prove
that there exists a bijection between such Fitting-monotonic functions and pairs of two-valued-result
functions where the first member of the pair is monotone-antimonotone and the second member is
antimonotone-monotone. By deriving an extension of {\em consistent approximation fixpoint
theory}~\cite{DMT04} and utilizing the above bijection, we define an iterative procedure
that produces for any given higher-order logic program a distinguished extensional model.
We demonstrate that this model is actually a {\em minimal} one. Moreover, we prove that our
construction generalizes the familiar well-founded semantics for classical logic programs,
making in this way our proposal an appealing formulation for capturing the {\em well-founded
semantics for higher-order logic programs}.
\ifarxiv This paper is under consideration for acceptance in TPLP.\fi
\end{abstract}

\begin{keywords}
Higher-Order Logic Programming, Negation in Logic Programming, Approximation Fixpoint Theory.
\end{keywords}

\section{Introduction}
An intriguing and difficult question regarding logic programming, is whether it can be extended
to a higher-order setting without sacrificing its semantic simplicity and clarity. Research results
in this direction~\cite{Wa91a,Bezem99,CharalambidisHRW13,RS16,RondogiannisS17} strongly suggest
that it is possible to design higher-order logic programming languages that have powerful expressive
capabilities, and which, at the same time, retain all the desirable semantic properties of classical
first-order logic programming. In particular, it has been shown that higher-order logic programming can be
given an {\em extensional semantics}, namely one in which program predicates denote sets.
Under such a semantics one can use standard set-theoretic concepts in order to understand
the meaning of programs and reason about them. For a more detailed discussion of extensionality
and its importance for higher-order logic programming, the interested reader can consult the
discussion in Section 2 of~\cite{RondogiannisS17}.

The above line of research started many years ago by W. W. Wadge~\cite{Wa91a} who considered
{\em positive} higher-order logic programs (i.e., programs without negation in clause bodies).
Wadge argued that if such a program obeys some simple and natural syntactic rules, then it has a unique
{\em minimum} Herbrand model. It is well-known that the {\em minimum model property} is a cornerstone of
the theory of first-order logic programming~\cite{EmdenK76}. In this respect, Wadge's result suggested that it might
be possible to extend all the elegant theory of classical logic programming to the higher-order case.
The results in~\cite{Wa91a} were obtained using standard techniques from denotational semantics
involving {\em continuous} interpretations and Kleene's least fixpoint theorem.
A few years after Wadge's initial result, M. Bezem came to similar conclusions~\cite{Bezem99}
but from a different direction. In particular, Bezem demonstrated that by using a fixpoint
construction on the ground instantiation of the source higher-order program, one can
obtain a model of the original program that satisfies an extensionality condition
defined in~\cite{Bezem99}. Despite their different philosophies, Wadge's and
Bezem's approaches have recently been shown~\cite{CharalambidisRS17} to have close
connections. Apart from the above results, recent work~\cite{CharalambidisHRW13}
has also shown that we can define a sound and complete proof procedure for positive
higher-order logic programs, which generalizes classical SLD-resolution. In other words,
the central results for positive first-order logic programs, generalize
to the higher-order case.

A natural question that arises is whether one can still obtain an extensional semantics
if negation is added to programs. Surprisingly, this question proved harder to resolve.
The first result in this direction was reported in~\cite{CharalambidisER14}, where it was
demonstrated that every higher-order logic program with negation has a minimum extensional
Herbrand model constructed over a logic with an infinite number of truth values. This result was
obtained using domain-theoretic techniques as-well-as an extension of Kleene's fixpoint
theorem that applies to a class of functions that are potentially non-monotonic~\cite{ER15}.
More recently, it was shown in~\cite{RS16} that Bezem's technique for positive programs can also
be extended to apply to higher-order logic programs with negation, provided that it is interpreted
under the same infinite-valued logic used in~\cite{CharalambidisER14}. The above results, although
satisfactory from a mathematical point of view, left open an annoying natural question: ``Is it
possible to define a {\em three-valued} extensional semantics for higher-order logic programs
with negation that generalizes the standard well-founded semantics for classical logic
programs?''.

The above question was recently undertaken in~\cite{RondogiannisS17}. The surprising
result was obtained that if Bezem's approach is interpreted under a three-valued logic,
then the resulting semantics {\em can not be extensional} in the general case. One can
see that similar arguments hold for the technique of~\cite{CharalambidisER14}. Therefore, if we seek
an extensional three-valued semantics for higher-order logic programs with negation,
we need to follow an approach that is radically different from both~\cite{CharalambidisER14}
and~\cite{RondogiannisS17}.

In this paper we undertake exactly the above problem. We demonstrate that we can indeed
define a three-valued extensional semantics for higher-order logic programs with negation,
which generalizes the familiar well-founded semantics of first-order logic programs~\cite{GelderRS91}.
Our results heavily utilize the technique of {\em approximation fixpoint theory}~\cite{DMT00,DMT04},
which proved to be an indispensable tool in our investigation. The main contributions of
the present paper can be outlined as follows:
\begin{itemize}
\item We define the first (to our knowledge) extensional three-valued semantics for higher-order logic programs
      with negation. Our semantics is based on interpreting the predicate types of our language as three-valued Fitting-monotonic
      functions (at all levels of the type hierarchy). We prove that there exists a bijection between such
      Fitting-monotonic functions and pairs of two-valued-result functions of the form $(f_1,f_2)$, where
      $f_1$ is monotone-antimonotone, $f_2$ is antimonotone-monotone, and $f_1\leq f_2$ (these notions
      will be explained in detail in Section~\ref{bijection}).

\item By deriving an extension of {\em consistent approximation fixpoint theory}~\cite{DMT04} and
      utilizing the above bijection, we define an iterative procedure that produces for any given
      higher-order logic program a distinguished extensional model. We prove that this model
      is actually a {\em minimal} one and we demonstrate that our construction generalizes the
      familiar well-founded semantics for classical logic programs. Therefore, we argue that our
      proposal is an appealing formulation for capturing the {\em well-founded semantics for
      higher-order logic programs}, paving in this way the road for a further study of negation
      in higher-order logic programming.
\end{itemize}

The rest of the paper is organized as follows. Section~\ref{intuitive} presents in an intuitive
way the main ideas developed in the paper. Section~\ref{syntax_of_language} introduces
the syntax and Section~\ref{semantics_of_language} the semantics of our source language.
Section~\ref{bijection} demonstrates the bijection between Fitting-monotonic functions
and pairs of monotone-antimonotone and antimonotone-monotone functions. Section~\ref{well_founded}
develops the well-founded semantics of higher-order logic programs with negation, based on
an extension of consistent approximation fixpoint theory. Section~\ref{conclusions}
compares the present work with that of~\cite{CharalambidisER14,RondogiannisS17},
and concludes by identifying some promising research directions. The proofs of most
results of the paper are given in the 
\iftplp supplementary material corresponding to this paper at the TPLP archives.\fi
\ifarxiv appendices.\fi

\section{An Intuitive Overview of the Proposed Approach}\label{intuitive}
In this section we describe in an intuitive way the main ideas and results obtained
in the paper. As we have already mentioned, our goal is to derive a generalization
of the well-founded semantics for higher-order logic programs with negation.

We start with our source language ${\cal HOL}$ which, intuitively speaking, allows {\em distinct}
predicate variables (but not predicate constants) to appear in the heads of clauses. This is a
syntactic restriction initially introduced in~\cite{Wa91a}, which has been preserved
and used by all subsequent articles in the area. As an example, consider the following
program (for the moment we use ad-hoc Prolog-like syntax):
\begin{example}\label{prolog-notation}
The program below defines the {\tt subset} relation over two unary predicates {\tt P} and {\tt Q}:
\[
\begin{array}{l}
\mbox{\tt subset(P,Q) $\leftarrow$ $\mysim $ nonsubset(P,Q).}\\
\mbox{\tt nonsubset(P,Q) $\leftarrow$ P(X), $\mysim $ Q(X).}
\end{array}
\]
Intuitively, {\tt P} is a subset of {\tt Q}
if it is not the case that {\tt P} is a non-subset of {\tt Q}; and {\tt P}
is a non-subset of {\tt Q} if there exists some {\tt X} for which {\tt P}
is true while {\tt Q} is false.
\end{example}

The syntax we will introduce in Section~\ref{syntax_of_language} will allow
a more compact notation using $\lambda$-expressions as the bodies of clauses
(see Example~\ref{subset-predicate-example} later in the paper).

We would like, for programs such as the above that are higher-order and use negation,
to devise a three-valued extensional semantics. The key idea when assigning extensional
semantics to {\em positive} higher-order logic programs~\cite{Wa91a,CharalambidisHRW13}
is to interpret the predicate types of the language as monotonic and continuous functions.
This is a well-known idea in the area of denotational semantics~\cite{Tennent} and
is a key assumption for obtaining the least fixpoint semantics for functional programs.
This same idea was used in~\cite{Wa91a,CharalambidisHRW13} for obtaining the
minimum Herbrand model semantics for positive higher-order logic programs. Unfortunately,
this idea breaks down when we consider programs with negation: predicates defined using
negation in clause bodies are not-necessarily monotonic. Non-monotonicity means that a higher-order
predicate may be true of an input relation, but it may be false for a superset of this relation.
For example, consider the predicate {\tt p} below:
\[
\begin{array}{l}
\mbox{\tt p(Q) $\leftarrow$ $\mysim $ Q(a).}
\end{array}
\]
Obviously, {\tt p} is true of the empty relation $\{\,\,\}$ but it is false of the relation
$\{{\tt a}\}$. Notice that the notion of monotonicity we just discussed is usually called
{\em monotonicity with respect to the (standard) truth ordering}.

Fortunately, there is another notion of monotonicity which is obeyed by higher-order
logic programs with negation, namely {\em Fitting-monotonicity} (or {\em monotonicity
with respect to the information ordering})~\cite{Fitting}. Consider the program:
\[
\begin{array}{l}
\mbox{\tt p(Q) $\leftarrow$ $\mysim $ Q(a).}\\
\mbox{\tt r(a) $\leftarrow$ $\mysim $ r(a).}\\
\mbox{\tt s(a).}
\end{array}
\]

Under the standard well-founded semantics for classical (first-order) logic programs, the truth value
assigned to {\tt r(a)} is {\em undefined}; on the other hand, {\tt s(a)} is {\em true}
in the same semantics. In other words, {\tt r} corresponds to the 3-valued relation
$\{({\tt a},\textit{undef})\}$ while {\tt s} to the relation $\{({\tt a},\textit{true})\}$.
Fitting-monotonicity intuitively states that if a relation takes as argument a {\em more
defined} relation, then it returns a more defined result. In our case this means that
we expect the answer to the query {\tt p(r)} to be less defined (alternatively, to
{\em have less information}) than the answer to the query {\tt p(s)} (more specifically,
we expect {\tt p(r)} to be {\em undefined} and {\tt p(s)} to be {\em false}).

Based on the above discussion, we interpret the predicate types of our language
as Fitting-monotonic functions. Then, an {\em interpretation} of a program
is a function that assigns Fitting-monotonic functions to the predicates of the program.
Given a program $\mathsf{P}$, it is straightforward to define its {\em immediate
consequence operator} $\Psi_{\mathsf{P}}$, which, as usual, takes as input a
Herbrand interpretation of the program and returns a new one. It is easy to prove
that $\Psi_{\mathsf{P}}$ is Fitting-monotonic. It is now tempting to assume
that the least fixpoint of $\Psi_{\mathsf{P}}$ with respect to the Fitting ordering,
is the well-founded model that we are looking for. However, this is not the case:
the least fixpoint of $\Psi_{\mathsf{P}}$ is minimal with respect to the
Fitting (i.e., information) ordering, while the well-founded model should be minimal
with respect to the standard truth ordering. In order to get the correct model, we need
a few more steps.

We prove that there exists a bijection between Fitting-monotonic functions and
pairs of functions of the form $(f_1,f_2)$, where  $f_1$ is monotone-antimonotone,
$f_2$ is antimonotone-monotone, and $f_1\leq f_2$ (where $\leq$ corresponds to
the standard truth ordering). A similar bijection is established between
three-valued interpretations and pairs of two-valued-result ones. This bijection
allows us to use the powerful tool of approximation fixpoint theory~\cite{DMT00,DMT04}.
In particular, starting from a pair consisting of an underdefined interpretation
and an overdefined one, and by iterating an appropriate operator, we demonstrate that
we get to a pair of interpretations that is the limit of this sequence. Using our
bijection, we show that this limit pair can be converted to a three-valued interpretation
${\cal M}_{\mathsf{P}}$ which is a three-valued model of our program $\mathsf{P}$ and actually a minimal
one with respect to the standard truth ordering. We argue that this is the well-founded
semantics of $\mathsf{P}$, because its construction is a generalization of the construction
in~\cite{DMT04} for the well-founded semantics of classical logic programs.

\section{The Syntax of the Higher-Order Language ${\cal HOL}$}\label{syntax_of_language}
In this section we introduce ${\cal HOL}$, a higher-order language based on a simple type system that supports two base types: $o$, the boolean
domain, and $\iota$, the domain of individuals (data objects). The composite
types are partitioned into three classes: {\em functional} (assigned to
individual constants, individual variables and function symbols),
{\em predicate} (assigned to predicate constants and variables) and {\em argument}
(assigned to parameters of predicates).
\begin{definition}
A type can either be {\em functional}, {\em predicate}, or {\em argument}, denoted by
$\sigma$, $\pi$ and $\rho$ respectively and defined as:
\begin{align*}
\sigma & :=  \iota \mid \iota \rightarrow \sigma  \\
\pi   & := o \mid \rho \rightarrow \pi  \\
\rho & :=  \iota \mid \pi
\end{align*}
We will use $\tau$ to denote an arbitrary type (either functional, predicate or argument).
\end{definition}

The binary operator $\rightarrow$ is right-associative. A
functional type that is different from $\iota$ will often be written
in the form $\iota^n \rightarrow \iota$, $n\geq 1$ (which stands for
$\iota \rightarrow \iota \rightarrow \cdots \rightarrow \iota$ $(n+1)$-times).
It can be easily seen that every predicate type $\pi$ can be written uniquely
in the form $\rho_1 \rightarrow \cdots \rightarrow \rho_n \rightarrow o$,
$n\geq 0$ (for $n=0$ we assume that $\pi=o$). We now define the alphabet,
the expressions, and the program clauses of ${\cal HOL}$:
\begin{definition}
The \emph{alphabet} of the higher-order language ${\cal HOL}$ consists
of the following:
\begin{enumerate}
\item {\em Predicate variables} of every predicate type $\pi$
      (denoted by capital letters such as
      $\mathsf{P}$ and $\mathsf{Q}$).

\item {\em Predicate constants} of every predicate type $\pi$
      (denoted by lowercase letters such as
      $\mathsf{p}$ and $\mathsf{q}$).

\item {\em Individual variables} of type $\iota$
      (denoted by capital letters such as
      $\mathsf{X}$ and $\mathsf{Y}$).

\item {\em Individual constants} of type $\iota$ (denoted by lowercase
      letters such as $\mathsf{a}$ and $\mathsf{b}$).

\item {\em Function symbols} of every functional type $\sigma \neq \iota$
      (denoted by lowercase letters such as $\mathsf{f}$ and $\mathsf{g}$).

\item The following {\em logical constant symbols}: the constants
      $\cfalse$ and $\ctrue$ of type $o$; the equality  constant $\approx$
      of type $\iota \rightarrow \iota \rightarrow o$; the generalized disjunction
      and conjunction constants $\bigvee_{\pi}$ and $\bigwedge_{\pi}$ of type
      $\pi \rightarrow \pi \rightarrow \pi$, for every predicate type $\pi$;
      the generalized inverse implication constants $\leftarrow_{\pi}$ of type
      $\pi \rightarrow \pi \rightarrow o$, for every predicate type
      $\pi$; the existential quantifier $\exists_{\rho}$ of type
      $(\rho \rightarrow o)\rightarrow o$, for every argument type
      $\rho$; the negation constant $\mnot\,$ of type $o \rightarrow o$.

\item The {\em abstractor} $\lambda$ and the parentheses ``$\mathsf{(}$'' and ``$\mathsf{)}$''.
\end{enumerate}
The set consisting of the predicate variables and the individual
variables of ${\cal HOL}$ will be called the set of {\em argument
variables} of ${\cal HOL}$. Argument variables will be denoted
by $\mathsf{R}$.
\end{definition}
\begin{definition}

The set of {\em expressions} of the higher-order language ${\cal HOL}$
is defined as follows:
\begin{enumerate}
\item Every predicate variable (respectively, predicate constant)
      of type $\pi$ is an expression of type $\pi$; every
      individual variable (respectively, individual constant) of
      type $\iota$ is an expression of type $\iota$; the
      propositional constants $\cfalse$ and $\ctrue$ are
      expressions of type $o$.

\item If $\mathsf{f}$ is an $n$-ary function symbol and $\mathsf{E}_1,
      \ldots, \mathsf{E}_n$ are expressions of type $\iota$,
      then $(\mathsf{f}\,\,\mathsf{E}_1 \cdots \mathsf{E}_n)$ is an
      expression of type $\iota$.

\item If $\mathsf{E}_1$ is an expression of type $\rho \rightarrow
      \pi$ and $\mathsf{E}_2$ is an expression of type $\rho$, then
      $(\mathsf{E}_1\ \mathsf{E}_2)$ is an expression of type $\pi$.

\item If $\mathsf{R}$ is an argument variable of type $\rho$ and
      $\mathsf{E}$ is an expression of type $\pi$, then
      $(\lambda\mathsf{R}.\mathsf{E})$ is an expression of type
      $\rho \rightarrow \pi$.

\item If $\mathsf{E}_1,\mathsf{E}_2$ are expressions of type $\pi$,
      then  $(\mathsf{E}_1 \bigwedge_{\pi} \mathsf{E}_2)$ and
      $(\mathsf{E}_1 \bigvee_{\pi} \mathsf{E}_2)$ are expressions
      of type $\pi$.

\item If $\mathsf{E}$ is an expression of type $o$, then
      $(\mnot \mathsf{E})$ is an expression of type $o$.

\item If $\mathsf{E}_1,\mathsf{E}_2$ are expressions of type $\iota$,
      then $(\mathsf{E}_1 \approx \mathsf{E}_2)$ is an expression
      of type $o$.

\item If $\mathsf{E}$ is an expression of type $o$ and $\mathsf{R}$ is
      a variable of type $\rho$ then $(\exists_{\rho}
      \mathsf{R}\,\mathsf{E})$ is an expression of type $o$.
\end{enumerate}
\end{definition}

To denote that an expression $\mathsf{E}$ has type $\tau$ we will write
$\mathsf{E} : \tau$. The notions of \emph{free} and \emph{bound} variables
of an expression are defined as usual. An expression is called \emph{closed}
if it does not contain any free variables. An expression of type $\iota$ will
be called a {\em term}; if it does not contain any individual variables, it will
be called a {\em ground term}.
\begin{definition}
A {\em program clause} of ${\cal HOL}$ is of the form $\mathsf{p} \leftarrow_\pi \mathsf{E}$
where $\mathsf{p}$ is a predicate constant of type $\pi$ and $\mathsf{E}$
is a closed expression of type $\pi$.
A {\em program} is a finite set of program clauses.
\end{definition}
\begin{example}\label{subset-predicate-example}
We rewrite the program of Example~\ref{prolog-notation} using the syntax of ${\cal HOL}$.
For every argument type $\rho$, the {\tt subset} predicate of type $(\rho\rightarrow o)\rightarrow (\rho\rightarrow o) \rightarrow o$
takes as arguments two relations of type $\rho \rightarrow o$ and returns $\mathit{true}$ if the first relation is a subset of the second:
%
%
%
%
\begin{eqnarray*}
\mathtt{subset} \leftarrow_{(\rho\rightarrow o)\rightarrow (\rho\rightarrow o) \rightarrow o}
    \lambda\mathtt{P}\per \lambda\mathtt{Q}.
        \mnot\exists_{\rho}\mathtt{X}\lpa\lpa\mathtt{P}\ \mathtt{X}\rpa \wedge
                               \mnot\lpa\mathtt{Q}\ \mathtt{X}\rpa\rpa
\end{eqnarray*}
The use of $\lambda$-expressions obviates the need to have the formal parameters
of the predicate in the left-hand side of the definition.
\end{example}

\section{The Semantics of the Higher-Order Language ${\cal HOL}$}\label{semantics_of_language}

In this section we begin the development of the semantics of the language ${\cal HOL}$.
We start with the semantics of types, proceed with the semantics of expressions,
and then with that of programs. We assume a familiarity with the basic notions
of partially ordered sets (see~\ref{appendix-of-section-4} \iftplp in the supplementary material corresponding to this paper at the TPLP archives \fi for the main definitions).

The semantics of the base boolean domain is three-valued. The semantics of types of
the form $\pi_1\rightarrow \pi_2$ is the set of {\em Fitting-monotonic} functions
from the domain of type $\pi_1$ to that of type $\pi_2$. We define, simultaneously
with the meaning of every type $\tau$, two partial orders on the elements of type $\tau$:
the relation $\leq_{\tau}$ which represents the {\em truth} ordering, and the relation
$\preceq_{\tau}$ which represents the {\em information} or {\em Fitting} ordering.
\begin{definition}\label{our_domains}
Let $D$ be a nonempty set. For every type $\tau$ we define recursively the set of
possible meanings of elements of ${\cal HOL}$ of type $\tau$, denoted by $\lsem \tau \rsem_D$,
as follows:
  \begin{itemize}
    \item $\lsem o \rsem _D = \{ \mfalse, \mtrue, \mundef \}$. The partial order $\leq_o$
          is the usual one induced by the ordering  $\mfalse <_o \mundef <_o \mtrue$;
          the partial order $\preceq_o$ is the one induced by the ordering  $\mundef \prec_o \mfalse$
          and $\mundef \prec_o \mtrue$.

    \item $\lsem \iota \rsem_D = D$. The partial order $\leq_\iota$ is defined as $d \leq_\iota d$ for all $d \in D$.
          The partial order $\preceq_\iota$ is also defined as $d \preceq_\iota d$ for all $d\in D$.

    \item $\lsem \iota^n \rightarrow \iota \rsem_D = D^n \rightarrow D$. No ordering relations are defined for these types.

    \item $\lsem \iota \rightarrow \pi \rsem_D = D \rightarrow \lsem \pi \rsem_D$.
          The partial order $\leq_{\iota\rightarrow\pi}$ is defined as follows: for all $f,g \in \lsem \iota \rightarrow \pi \rsem_D$,
          $f \leq_{\iota\rightarrow\pi} g$ iff $f(d) \leq_\pi g(d)$ for all $d \in D$. The partial order $\preceq_{\iota \rightarrow \pi}$
          is defined as follows: for all $f,g \in \lsem \iota \rightarrow \pi \rsem_D$, $f \preceq_{\iota\rightarrow\pi} g$
          iff $f(d) \preceq_\pi g(d)$ for all $d \in D$.

    \item $\lsem \pi_1 \rightarrow \pi_2 \rsem_D = [ \lsem \pi_1 \rsem_D \rightarrow \lsem \pi_2 \rsem_D ]$, namely the $\preceq$-monotonic
          functions\footnote{Function $f\in \lsem \pi_1 \rightarrow \pi_2\rsem_D$ is $\preceq$-monotonic if for all
          $d_1,d_2\in \lsem \pi_1\rsem_D$, $d_1\preceq_{\pi_1} d_2$ implies $f(d_1)\preceq_{\pi_2}f(d_2)$.} from \lsem $\pi_1 \rsem_D$ to $\lsem \pi_2 \rsem_D$.
          The partial order $\leq_{\pi_1 \rightarrow \pi_2}$ is defined as follows: for all $f,g \in \lsem \pi_1 \rightarrow \pi_2 \rsem_D$,
          $f \leq_{\pi_1\rightarrow\pi_2} g$ iff $f(d) \leq_{\pi_2} g(d)$ for all $d \in \lsem \pi_1 \rsem_D$. The partial order
          $\preceq_{\pi_1 \rightarrow \pi_2}$ is defined as follows: for all $f,g \in \lsem \pi_1 \rightarrow \pi_2 \rsem_D$,
          $f \preceq_{\pi_1\rightarrow\pi_2} g$ iff $f(d) \preceq_{\pi_2} g(d)$ for all $d \in \lsem \pi_1 \rsem_D$.
  \end{itemize}
\end{definition}
The subscripts in the above partial orders will often be omitted when they are obvious from context.
For every type $\pi$, the set $\lsem \pi\rsem_D$ has a least element
$\perp_{\leq_{\pi}}$ and a greatest element $\top_{\leq_{\pi}}$, called the {\em bottom} and the
{\em top} elements of $\lsem \pi\rsem_D$ with respect to $\leq_{\pi}$, respectively. In particular,
$\perp_{\leq_{o}} = \mfalse$ and $\top_{\leq_{o}} = \mtrue$; $\perp_{\leq_{\iota \rightarrow \pi}}(d) = \perp_{\leq_{\pi}}$ and
$\top_{\leq_{\iota \rightarrow \pi}}(d) = \top_{\leq_{\pi}}$, for all $d \in D$;
$\perp_{\leq_{\pi_1 \rightarrow \pi_2}}(d) = \perp_{\leq_{\pi_2}}$ and
$\top_{\leq_{\pi_1 \rightarrow \pi_2}}(d) = \top_{\leq_{\pi_2}}$, for all $d \in \lsem \pi_1 \rsem_D$.
Moreover, for every type $\pi$, the set $\lsem \pi\rsem_D$ has a least element with respect to
$\preceq_{\pi}$, denoted by $\perp_{\preceq_{\pi}}$ and called the {\em bottom}
element of $\lsem \pi\rsem_D$ with respect to $\preceq_{\pi}$. In particular,
$\perp_{\preceq_o}= \mundef$. The element $\perp_{\preceq_{\pi}}$ for $\pi \neq o$
can be defined in the obvious way as above. We will simply write $\perp$ to denote the
bottom element of any of the above partially ordered sets, when the ordering relation and the specific domain are
obvious from context.

We have the following proposition, whose proof is given in~\ref{appendix-of-section-4}\iftplp \ in the supplementary material\fi:
\begin{proposition}\label{semantics_of_types_lattice_cpo}
Let $D$ be a nonempty set. For every predicate type $\pi$, $(\lsem \pi \rsem_D, \leq_\pi)$
is a complete lattice and $(\lsem \pi \rsem_D, \preceq_\pi)$ is a chain complete poset.
\end{proposition}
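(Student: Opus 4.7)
I would proceed by structural induction on the predicate type $\pi$, proving both parts of the statement simultaneously and strengthening the induction hypothesis with an auxiliary lemma linking $\leq_\pi$ and $\preceq_\pi$.

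The base case $\pi = o$ is immediate: $(\lsem o \rsem_D, \leq_o)$ is a three-element chain and hence a complete lattice, while $(\lsem o \rsem_D, \preceq_o)$ has $\mundef$ as least element with $\mfalse, \mtrue$ as incomparable maximal elements, so every $\preceq_o$-chain has at most two elements.

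For the inductive step I would distinguish two cases. When $\pi = \iota \rightarrow \pi'$ there is no monotonicity side-condition: $\lsem \pi \rsem_D$ is the full function space, both orderings are pointwise, and pointwise suprema exist by the induction hypothesis on $\pi'$ and are trivially least upper bounds. When $\pi = \pi_1 \rightarrow \pi_2$ with $\pi_1$ a predicate type, the underlying set is restricted to $\preceq$-monotonic functions. Given a set $S \subseteq \lsem \pi \rsem_D$ (respectively a $\preceq_\pi$-chain $C$), I would define $h$ pointwise from the $\leq_{\pi_2}$-supremum (respectively the $\preceq_{\pi_2}$-supremum), noting that $\{f(x) : f \in C\}$ is indeed a $\preceq_{\pi_2}$-chain because $\preceq_\pi$ is pointwise. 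Since the orderings on $\lsem \pi \rsem_D$ are pointwise, $h$ is automatically the unique candidate for a least upper bound in the ambient function space, so the only content in the step is verifying that $h$ is itself $\preceq$-monotonic, as otherwise it would fall outside $\lsem \pi \rsem_D$.

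The chain-completeness side of this verification is routine: if $x_1 \preceq_{\pi_1} x_2$ and $f \in C$, then $f(x_1) \preceq_{\pi_2} f(x_2) \preceq_{\pi_2} \bigsqcup_{g \in C} g(x_2)$, so $\bigsqcup_g g(x_2)$ upper-bounds $\{f(x_1) : f \in C\}$ and hence dominates $\bigsqcup_g g(x_1)$. The complete-lattice side is the main obstacle, because $\preceq$ is not preserved by $\leq$-suprema in a general doubly-ordered poset. I would handle it by adding to the induction the auxiliary claim: \emph{for every predicate type $\tau$ and families $\{a_i\}_{i \in I}, \{b_i\}_{i \in I} \subseteq \lsem \tau \rsem_D$ with $a_i \preceq_\tau b_i$ for all $i$, the $\leq_\tau$-suprema satisfy $\bigvee_i a_i \preceq_\tau \bigvee_i b_i$.} At $\tau = o$ this reduces to a short case analysis using the fact that $a \preceq_o b$ forces $a = b$ or $a = \mundef$: if the $\leq_o$-supremum of the $a_i$ equals $\mtrue$ or $\mfalse$, some $a_i$ realises it and cannot be $\mundef$, forcing the matching $b_i$ to agree, whereas if it equals $\mundef$ then $\preceq_o$ to anything is automatic. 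At higher predicate types the auxiliary claim lifts mechanically from the pointwise definitions of both orderings and of $\leq$-suprema. Applying it to the families $\{f(x_1)\}_{f \in S}$ and $\{f(x_2)\}_{f \in S}$ whenever $x_1 \preceq_{\pi_1} x_2$ yields $h(x_1) \preceq_{\pi_2} h(x_2)$ and closes the induction.
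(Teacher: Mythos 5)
Your proof follows essentially the same route as the paper's: pointwise suprema, with the whole weight of the complete-lattice case at type $\pi_1\rightarrow\pi_2$ carried by the auxiliary claim that $\leq$-suprema preserve $\preceq$ (proved by induction with a case analysis at type $o$), which is exactly the paper's auxiliary statement, and with the chain-completeness verification done directly from the $\preceq$-monotonicity of the members of the chain, again as in the paper. The only quibble is that in the $\mfalse$ branch of your case analysis at $o$ you need that \emph{all} $a_i$ (and hence all $b_i$) equal $\mfalse$, not merely that some $a_i$ realises the supremum; the conclusion is unaffected.
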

We can now proceed to define the semantics of ${\cal HOL}$:
\begin{definition}
A (three-valued) interpretation ${\cal I}$ of ${\cal HOL}$ consists of:
\begin{enumerate}
\item a nonempty set $D$ called the {\em domain} of ${\cal I}$;
\item an assignment to each individual constant symbol $\mathsf{c}$, of an element
      ${\cal I}(\mathsf{c}) \in D$;
\item an assignment to each predicate constant $\mathsf{p}:\pi$, of an element
      ${\cal I}(\mathsf{p}) \in \lsem \pi \rsem_D$;
\item an assignment to each function symbol $\mathsf{f} : \iota^n \to \iota$,
      of a function ${\cal I}(\mathsf{f}) \in D^n\!\rightarrow D$.
\end{enumerate}
\end{definition}
\begin{definition}
Let $D$ be a nonempty set. A {\em state} $s$ of ${\cal HOL}$ over $D$ is a function
that assigns to each argument variable $\mathsf{R}$ of type $\rho$ of ${\cal HOL}$,
an element $s(\mathsf{R}) \in \lsem \rho \rsem_D$.
\end{definition}

We define: $\mathit{true}^{-1}=\mathit{false}$, $\mathit{false}^{-1}=\mathit{true}$
and $\mathit{undef}^{-1}=\mathit{undef}$.
\begin{definition}\label{definition-semantics-of-expressions}
Let $D$ be a nonempty set, let ${\cal I}$ be an interpretation over $D$,
and let $s$ be a state over $D$. The semantics of expressions of ${\cal HOL}$
with respect to ${\cal I}$ and $s$, is defined as follows:
\begin{enumerate}
\item $\lsem \cfalse \rsem_s ({\cal I}) = \mathit{false}$, and $\lsem \ctrue \rsem_s ({\cal I}) = \mathit{true}$


\item $\lsem \mathsf{c} \rsem_s ({\cal I}) = {\cal I}(\mathsf{c})$, for every
      individual constant $\mathsf{c}$

\item $\lsem \mathsf{p} \rsem_s ({\cal I}) = {\cal I}(\mathsf{p})$, for every
      predicate constant $\mathsf{p}$

\item $\lsem \mathsf{R} \rsem_s ({\cal I}) = s(\mathsf{R})$, for every
      argument variable $\mathsf{R}$

\item $\lsem (\mathsf{f}\,\,\mathsf{E}_1\cdots \mathsf{E}_n) \rsem_s ({\cal I}) =
      {\cal I}(\mathsf{f})\,\,\lsem \mathsf{E}_1\rsem_s ({\cal I}) \cdots \lsem \mathsf{E}_n\rsem_s ({\cal I})$,
      for every $n$-ary function symbol $\mathsf{f}$

\item $\lsem \mathsf{(}\mathsf{E}_1\mathsf{E}_2\mathsf{)} \rsem_s
      ({\cal I})= \lsem \mathsf{E}_1 \rsem_s ({\cal I})(\lsem \mathsf{E}_2\rsem_s({\cal I}))$

\item $\lsem \mathsf{(\lambda R.E)} \rsem_s ({\cal I}) =\lambda d.\lsem
      \mathsf{E}\rsem_{s[\mathsf{R}/d]}({\cal I})$, where if $\mathsf{R}:\rho$ then $d$ ranges over
      $\lsem \rho \rsem_D$

\item $\lsem (\mathsf{E}_1 \bigvee_{\pi} \mathsf{E}_2)\rsem_s ({\cal I}) =
      \bigvee_{\leq_{\pi}}\{\lsem \mathsf{E}_1\rsem_s({\cal I}),\lsem \mathsf{E}_2\rsem_s({\cal I})\}$

\item $\lsem (\mathsf{E}_1 \bigwedge_{\pi} \mathsf{E}_2)\rsem_s ({\cal I}) =
      \bigwedge_{\leq_{\pi}}\{\lsem \mathsf{E}_1\rsem_s({\cal I}),\lsem \mathsf{E}_2\rsem_s({\cal I})\}$

\item $\lsem (\mnot \mathsf{E}) \rsem_s ({\cal I}) = (\lsem \mathsf{E} \rsem_s ({\cal I}))^{-1}$

\item $\lsem (\mathsf{E}_1 \,\mathsf{\approx}\, \mathsf{E}_2)\rsem_s ({\cal I}) = \left\{\begin{array}{ll}
                                               \mathit{true}, & \mbox{if $\lsem \mathsf{E}_1 \rsem_s({\cal I}) = \lsem \mathsf{E}_2 \rsem_s({\cal I})$}\\
                                               \mathit{false}, & \mbox{otherwise}
                                                   \end{array} \right. $

\item $\lsem (\exists_{\rho} \mathsf{R}\, \mathsf{E}) \rsem_s ({\cal I})= \bigvee_{\leq_o} \{\lsem \mathsf{E} \rsem_{s[\mathsf{R}/d]}({\cal I}) \mid d \in \lsem \rho\rsem_D\}$
\end{enumerate}
\end{definition}

For closed expressions $\mathsf{E}$ we will often write $\lsem
\mathsf{E} \rsem({\cal I})$ instead of $\lsem \mathsf{E} \rsem_s({\cal I})$
(since, in this case, the meaning of $\mathsf{E}$ is independent of
$s$). The following lemma demonstrates that our semantic valuation
function returns elements that belong to the appropriate domain
(the proof of the lemma by structural induction on $\mathsf{E}$, is easy
and omitted).
\begin{lemma}\label{expressions-well-defined}
Let $\mathsf{E} : \rho$ be an expression and let $D$ be a nonempty set.
Moreover, let $s$ be a state over $D$ and let ${\cal I}$ be an interpretation over $D$.
Then, $\lsem \mathsf{E} \rsem_s({\cal I}) \in \lsem \rho \rsem_D$.
\end{lemma}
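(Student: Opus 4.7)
The plan is to proceed by structural induction on $\mathsf{E}$. The base cases (propositional constants $\cfalse, \ctrue$, individual/predicate constants, and argument variables) are immediate: the first two land in $\lsem o \rsem_D$ by definition, and the others inherit their type from $\mathcal{I}$ or $s$. The function-application case $(\mathsf{f}\,\mathsf{E}_1\cdots\mathsf{E}_n)$ and the predicate-application case $(\mathsf{E}_1\mathsf{E}_2)$ follow by applying the inductive hypothesis to the subexpressions and then invoking, respectively, the typing of $\mathcal{I}(\mathsf{f}) \in D^n \to D$ and the fact that any element of $\lsem \rho \rightarrow \pi \rsem_D$ is a function from $\lsem \rho \rsem_D$ to $\lsem \pi \rsem_D$. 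The connective cases $\bigvee_\pi, \bigwedge_\pi$ and the existential $\exists_\rho$ are handled using Proposition~\ref{semantics_of_types_lattice_cpo}, which guarantees that the requisite suprema and infima with respect to $\leq$ exist inside $\lsem \pi \rsem_D$; the negation and equality cases reduce to the closure of $\{\mathit{true}, \mathit{false}, \mathit{undef}\}$ under the stated operations.

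The substantive step is the $\lambda$-abstraction case. For $(\lambda\mathsf{R}.\mathsf{E})$ with $\mathsf{R} : \rho$ and $\mathsf{E} : \pi$, the candidate meaning is $\lambda d.\lsem \mathsf{E}\rsem_{s[\mathsf{R}/d]}(\mathcal{I})$. When $\rho = \iota$, this is a function $D \to \lsem \pi \rsem_D$, and membership in $\lsem \iota \rightarrow \pi \rsem_D$ is immediate from the inductive hypothesis applied pointwise. When $\rho$ is a predicate type, however, the definition of $\lsem \pi_1 \rightarrow \pi_2 \rsem_D$ additionally demands $\preceq$-monotonicity of the resulting function. This is the main obstacle: structural induction on $\mathsf{E}$ alone does not supply the monotonic dependence on the bound argument.

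To overcome this, I would strengthen the induction hypothesis and prove, simultaneously with well-typedness, the following monotonicity statement: for any two states $s, s'$ over $D$ with $s(\mathsf{R}') \preceq s'(\mathsf{R}')$ for every argument variable $\mathsf{R}'$, one has $\lsem \mathsf{E}\rsem_s(\mathcal{I}) \preceq \lsem \mathsf{E}\rsem_{s'}(\mathcal{I})$. This auxiliary property propagates through each clause of Definition~\ref{definition-semantics-of-expressions}: it is vacuous for constants and trivial for variables; it passes through application $(\mathsf{E}_1 \mathsf{E}_2)$ using precisely the $\preceq$-monotonicity built into the codomain $\lsem \rho \rightarrow \pi \rsem_D$; it passes through $\bigvee_\pi$, $\bigwedge_\pi$, $\mnot$, $\approx$ and $\exists_\rho$ because each of the underlying three-valued operations is itself $\preceq$-monotonic on its arguments; and at $\lambda$-abstraction it is the pointwise version of the body's monotonicity. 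With this simultaneous property available, the $\lambda$-case of the original statement is discharged: given $d_1 \preceq_{\rho} d_2$, apply the monotonicity claim to $s[\mathsf{R}/d_1]$ and $s[\mathsf{R}/d_2]$ to conclude that $\lambda d. \lsem \mathsf{E}\rsem_{s[\mathsf{R}/d]}(\mathcal{I})$ is $\preceq$-monotonic and therefore lies in $\lsem \rho \rightarrow \pi \rsem_D$, completing the induction.
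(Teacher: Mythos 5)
Your proof is correct, and it follows the route the paper merely gestures at: the paper declares the lemma ``easy by structural induction'' and omits the argument entirely, so the real value of your write-up is that you identify the one point where a naive structural induction actually breaks down. You are right that in the case $(\lambda\mathsf{R}.\mathsf{E})$ with $\mathsf{R}$ of predicate type, membership in $\lsem \pi_1\rightarrow\pi_2\rsem_D$ requires $\preceq$-monotonicity of $\lambda d.\lsem\mathsf{E}\rsem_{s[\mathsf{R}/d]}({\cal I})$, which the plain induction hypothesis does not deliver; strengthening the induction to prove simultaneously that $s\preceq s'$ (pointwise on argument variables) implies $\lsem\mathsf{E}\rsem_s({\cal I})\preceq\lsem\mathsf{E}\rsem_{s'}({\cal I})$ is exactly the right fix, and it is the state-analogue of the paper's Lemma~\ref{expressions_fitting} (monotonicity in the interpretation), whose proof the paper also omits. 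Two small points worth making explicit if you write this out in full: the application case $(\mathsf{E}_1\,\mathsf{E}_2)$ with $\mathsf{E}_2$ of predicate type needs \emph{both} the pointwise ordering on the function space and the $\preceq$-monotonicity of $\lsem\mathsf{E}_1\rsem_s({\cal I})$ itself (i.e.\ the typing part of the induction hypothesis feeds the monotonicity part, so the two claims genuinely must be proved together); and the $\bigvee_\pi$, $\bigwedge_\pi$ and $\exists_\rho$ cases of the monotonicity claim are precisely the content of the paper's Corollary~\ref{prop-lub-preceq}, which you could cite rather than reprove.
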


Finally, we define the notion of {\em model} for ${\cal HOL}$ programs:
\begin{definition}
Let $\mathsf{P}$ be a ${\cal HOL}$ program and let $M$ be an interpretation of $\mathsf{P}$.
Then $M$ will be called a {\em model} of $\mathsf{P}$ iff for all clauses
$\mathsf{p} \leftarrow_\pi \mathsf{E}$ of $\mathsf{P}$,
it holds $\lsem \mathsf{E} \rsem(M) \leq_\pi M(\mathsf{p})$.
\end{definition}

\section{An Alternative View of Fitting-Monotonic Functions}\label{bijection}
In this section we demonstrate that every Fitting-monotonic function $f$
can be equivalently represented as a pair of functions $(f_1,f_2)$, where $f_1$ is
monotone-antimonotone, $f_2$ is antimonotone-monotone and $f_1\leq f_2$.
Consider for example a function  $f$ of type $o \rightarrow o$, i.e.,
$f:\{\textit{true},\textit{false},\textit{undef}\} \rightarrow \{\textit{true},\textit{false},\textit{undef}\}$.
One can view the truth values as pairs where $\textit{true}$ corresponds to
$(\textit{true},\textit{true})$, $\textit{false}$ corresponds to
$(\textit{false},\textit{false})$, and $\textit{undef}$ corresponds to
$(\textit{false},\textit{true})$. Therefore, $f$ can also equivalently
be seen as a function $f'$ that takes pairs and returns pairs. We can then
``break'' $f'$ into two components $f_1$ and $f_2$ where $f_1$ returns the
first element of the pair that $f'$ returns while $f_2$ returns the second.
The monotone-antimonotone and antimonotone-monotone requirements ensure that
the pair $(f_1,f_2)$ retains the property of Fitting-monotonicity of the
original function $f$. These ideas can be generalized to arbitrary types.
The formal details of this equivalence are described below. The following definitions
will be used:
\begin{definition}
Let $L_1,L_2$ be sets and let $\leq$ be a partial order on $L_1\cup L_2$.
We define: $L_1 \otimes_{\leq} L_2 = \{(x,y)\in L_1\times L_2: x \leq y\}$.
\end{definition}
We will omit the $\leq$ from $\otimes_{\leq}$ when it is obvious from context.
\begin{definition}
Let $L_1,L_2$ be sets and let $\leq$ be a partial order on $L_1\cup L_2$.
Also, let $(A,\leq_A)$ be a partially ordered set. A function
$f:(L_1\otimes L_2) \rightarrow A$ will be called
{\em monotone-antimonotone} (respectively {\em antimonotone-monotone}) if for all
$(x,y),(x',y') \in L_1\otimes L_2$ with $x \leq x'$ and $y' \leq y$, it holds that
$f(x,y) \leq_A f(x',y')$ (respectively $f(x',y') \leq_A f(x,y)$).
We denote by $[(L_1\otimes L_2) \stackrel{\mathsf{ma}}\rightarrow A]$ the
set of functions that are monotone-antimonotone and by
$[(L_1\otimes L_2) \stackrel{\mathsf{am}}\rightarrow A]$
those that are antimonotone-monotone.
\end{definition}

In order to establish the bijection between Fitting-monotonic functions
and pairs of monotone-antimonotone and antimonotone-monotone functions,
we reinterpret the predicate types of ${\cal HOL}$ in an alternative way.
\begin{definition}\label{alternative_denotation_of_types}
Let $D$ be a nonempty set. For every type $\tau$ we define the monotone-antimonotone
and the antimonotone-monotone meanings of the elements of type $\tau$
with respect to $D$, denoted respectively by $\lsem \tau \rsem_{D}^\mathsf{ma}$
and $\lsem \tau \rsem_D^\mathsf{am}$.
At the same time we define a partial order $\leq_\tau$ between the elements
of $\lsem \tau \rsem_D^\mathsf{ma} \cup \lsem \tau \rsem_D^\mathsf{am}$.
\begin{itemize}
  \item $\lsem o \rsem_D^\mathsf{ma} = \lsem o \rsem_D^\mathsf{am} = \{ \mfalse, \mtrue \}$.
       The partial order $\leq_o$ is the usual one induced by the ordering $\mfalse \leq_o \mtrue$.

  \item $\lsem \iota \rsem^\mathsf{ma} = \lsem \iota \rsem^\mathsf{am} = D$.
       The partial order $\leq_\iota$ is defined as $d \leq_\iota d$,
       for all $d\in D$.

  \item $\lsem \iota^n \rightarrow \iota \rsem^\mathsf{ma} = \lsem \iota^n \rightarrow \iota \rsem^\mathsf{am} = D^n \rightarrow D$. There is no partial order for elements of type $\iota^n \rightarrow \iota$.

  \item $\lsem \iota \rightarrow \pi \rsem_D^\mathsf{ma} = D\rightarrow \lsem \pi \rsem_D^\mathsf{ma}$
        and $\lsem \iota \rightarrow \pi \rsem_D^\mathsf{am} = D \rightarrow \lsem \pi \rsem_D^\mathsf{am}$.
        The partial order $\leq_{\iota \rightarrow \pi}$ is defined as follows:
        for all $f,g \in \lsem \iota \rightarrow \pi \rsem_D^\mathsf{ma} \cup \lsem \iota \rightarrow \pi \rsem_D^\mathsf{am}$,
        $f \leq_{\iota\rightarrow\pi} g$ iff $f(d) \leq_\pi g(d)$ for all $d \in D$.

  \item $\lsem \pi_1 \rightarrow \pi_2 \rsem_D^\mathsf{ma} =
        [ (\lsem \pi_1 \rsem_D^\mathsf{ma} \otimes \lsem \pi_1 \rsem_D^\mathsf{am}) \stackrel{\mathsf{ma}}{\rightarrow} \lsem \pi_2 \rsem_D^\mathsf{ma}]$,
        and $\lsem \pi_1 \rightarrow \pi_2 \rsem_D^\mathsf{am} =
        [ (\lsem \pi_1 \rsem_D^\mathsf{ma} \otimes \lsem \pi_1 \rsem_D^\mathsf{am}) \stackrel{\mathsf{am}}\rightarrow \lsem \pi_2 \rsem_D^\mathsf{am}]$.
        The relation $\leq_{\pi_1 \rightarrow \pi_2}$ is the partial order defined as follows:
        for all $f,g \in \lsem \pi_1 \rightarrow \pi_2 \rsem_D^\mathsf{ma} \cup \lsem \pi_1 \rightarrow \pi_2 \rsem_D^\mathsf{am}$,
        $f \leq_{\pi_1\rightarrow\pi_2} g$ iff $f(d_1,d_2) \leq_{\pi_2} g(d_1,d_2)$ for all
        $(d_1,d_2) \in \lsem \pi_1 \rsem_D^\mathsf{ma} \otimes \lsem \pi_1 \rsem_D^\mathsf{am}$.

\end{itemize}
\end{definition}

For every $\pi$, the bottom and top elements of $\lsem \pi \rsem_D^\mathsf{ma}$ and
$\lsem \pi \rsem_D^\mathsf{am}$ can be defined in the obvious way. We have the following
proposition (see~\ref{appendix-of-section-5} \iftplp in the supplementary material corresponding to this paper at the TPLP archives \fi for the proof):
\begin{proposition}\label{ma-am-lattices}
Let $D$ be a nonempty set. For every predicate type $\pi$, $(\lsem \pi \rsem_D^{\mathsf{ma}}, \leq_\pi)$ and
$(\lsem \pi \rsem_D^{\mathsf{am}}, \leq_\pi)$ are complete lattices.
\end{proposition}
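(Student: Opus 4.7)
The plan is to proceed by structural induction on the predicate type $\pi$, proving simultaneously that both $(\lsem \pi \rsem_D^{\mathsf{ma}}, \leq_\pi)$ and $(\lsem \pi \rsem_D^{\mathsf{am}}, \leq_\pi)$ are complete lattices. For the base case $\pi = o$, both domains reduce to the two-element chain $\{\mfalse, \mtrue\}$ with the standard ordering, which is trivially a complete lattice.

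For the inductive step there are two subcases. When $\pi = \iota \rightarrow \pi'$, we have $\lsem \iota \rightarrow \pi'\rsem_D^{\mathsf{ma}} = D \rightarrow \lsem \pi'\rsem_D^{\mathsf{ma}}$ equipped with the pointwise order. By the induction hypothesis $(\lsem \pi'\rsem_D^{\mathsf{ma}}, \leq_{\pi'})$ is a complete lattice, and the standard fact that the pointwise function space from an arbitrary set into a complete lattice is again a complete lattice (with sups and infs computed pointwise) yields the claim; the same argument handles the $\mathsf{am}$-version.

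The main work is in the case $\pi = \pi_1 \rightarrow \pi_2$. Here I would take an arbitrary family $S \subseteq \lsem \pi_1 \rightarrow \pi_2\rsem_D^{\mathsf{ma}}$ and define its supremum pointwise by $(\bigvee S)(d_1,d_2) = \bigvee_{\leq_{\pi_2}}\{f(d_1,d_2) : f \in S\}$ for $(d_1,d_2)\in \lsem \pi_1\rsem_D^{\mathsf{ma}} \otimes \lsem \pi_1\rsem_D^{\mathsf{am}}$. The pointwise supremum is well-defined because $(\lsem \pi_2\rsem_D^{\mathsf{ma}}, \leq_{\pi_2})$ is a complete lattice by the induction hypothesis. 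The key verification — and the main obstacle — is that $\bigvee S$ is itself monotone-antimonotone, which is what makes the argument go beyond the trivial pointwise lifting. Given $(d_1,d_2),(d_1',d_2') \in \lsem \pi_1\rsem_D^{\mathsf{ma}} \otimes \lsem \pi_1\rsem_D^{\mathsf{am}}$ with $d_1 \leq_{\pi_1} d_1'$ and $d_2' \leq_{\pi_1} d_2$, each $f \in S$ satisfies $f(d_1,d_2) \leq_{\pi_2} f(d_1',d_2')$, so taking suprema on both sides preserves the inequality. The dual argument, replacing pointwise sup by pointwise inf, shows that every $S$ also has a greatest lower bound in $\lsem \pi_1 \rightarrow \pi_2\rsem_D^{\mathsf{ma}}$, establishing the complete lattice structure. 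The case of $\lsem \pi_1 \rightarrow \pi_2\rsem_D^{\mathsf{am}}$ is entirely symmetric: antimonotonicity in the first argument and monotonicity in the second are likewise preserved under pointwise sup and inf.

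Finally, I would note that the existence of bottom and top elements — mentioned in the paragraph preceding the proposition — follows automatically from completeness ($\bot = \bigvee \emptyset$, $\top = \bigwedge \emptyset$), so no separate verification is needed. The only genuinely non-routine point throughout is the preservation of the mixed (mono-anti or anti-mono) variance conditions under pointwise suprema and infima; everything else is a direct appeal to the induction hypothesis combined with standard complete-lattice arithmetic.
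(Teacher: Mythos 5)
Your proof is correct and follows essentially the same route as the paper: structural induction on $\pi$, with pointwise suprema and the key verification that the mixed variance condition (monotone-antimonotone, resp.\ antimonotone-monotone) is preserved under pointwise sups. The only cosmetic difference is that the paper invokes the characterization of complete lattices via a least element plus lubs of nonempty subsets, whereas you additionally check glbs and empty joins directly; both are fine.
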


We extend, in a pointwise way, our orderings to apply to pairs. For simplicity, we overload
our notation and use the same symbols $\leq$ and $\preceq$ for the new orderings.
\begin{definition}\label{orderings_on_pairs}
Let $D$ be a nonempty set and let $\pi$ be a predicate type. We define the relations $\leq_{\pi}$ and  $\preceq_{\pi}$, so that
for all $(x,y),(x',y') \in \lsem \pi \rsem_D^{\mathsf{ma}} \otimes \lsem \pi \rsem_D^{\mathsf{am}}$:
\begin{itemize}
\item $(x, y) \leq_{\pi} (x', y')$ iff $x \leq_{\pi} x'$ and $y \leq_{\pi} y'$.
\item $(x, y) \preceq_{\pi} (x', y')$ iff $x \leq_{\pi} x'$ and $y' \leq_{\pi} y$.
\end{itemize}
\end{definition}
The following proposition is demonstrated in~\ref{appendix-of-section-5}\iftplp\ in the supplementary material\fi:
\begin{proposition}\label{pairs_complete_lattice_cpo}
Let $D$ be a nonempty set. For each predicate type $\pi$, $\lsem \pi \rsem_D^{\mathsf{ma}} \otimes \lsem \pi \rsem_D^{\mathsf{am}}$
is a complete lattice with respect to $\leq_{\pi}$ and a chain-complete poset with respect to $\preceq_{\pi}$.
\end{proposition}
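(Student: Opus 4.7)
The plan is to bootstrap from Proposition~\ref{ma-am-lattices}, which supplies complete lattice structure to each of $(\lsem \pi \rsem_D^{\mathsf{ma}}, \leq_\pi)$ and $(\lsem \pi \rsem_D^{\mathsf{am}}, \leq_\pi)$ separately. For the complete lattice claim under $\leq_\pi$, I would first note that the Cartesian product $\lsem \pi \rsem_D^{\mathsf{ma}} \times \lsem \pi \rsem_D^{\mathsf{am}}$ is a complete lattice under componentwise $\leq_\pi$-meets and joins, and then check that the subset defined by the constraint $x \leq_\pi y$ is closed under these operations. Given any family $\{(x_i,y_i)\}_{i\in I} \subseteq \lsem \pi \rsem_D^{\mathsf{ma}} \otimes \lsem \pi \rsem_D^{\mathsf{am}}$, the componentwise join is $(\bigvee_i x_i, \bigvee_i y_i)$, and I would verify $\bigvee_i x_i \leq_\pi \bigvee_i y_i$ by the estimate $x_i \leq_\pi y_i \leq_\pi \bigvee_j y_j$ for each $i$; the dual observation handles componentwise meets. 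Hence the constraint is preserved and $\otimes$ inherits the complete lattice structure.

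For the chain-completeness claim under $\preceq_\pi$, I would take an arbitrary $\preceq_\pi$-chain $C = \{(x_i,y_i)\}_{i\in I}$ and propose as its $\preceq_\pi$-supremum the pair $(\bigvee_i x_i, \bigwedge_i y_i)$, with the join computed in $\lsem \pi \rsem_D^{\mathsf{ma}}$ and the meet in $\lsem \pi \rsem_D^{\mathsf{am}}$ (both existing by Proposition~\ref{ma-am-lattices}). The key step, which I expect to be the main obstacle, is showing that this candidate actually lies in $\lsem \pi \rsem_D^{\mathsf{ma}} \otimes \lsem \pi \rsem_D^{\mathsf{am}}$, namely that $\bigvee_i x_i \leq_\pi \bigwedge_j y_j$. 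For this I would fix arbitrary $i,j \in I$ and use the chain hypothesis: either $(x_i,y_i) \preceq_\pi (x_j,y_j)$, yielding $x_i \leq_\pi x_j \leq_\pi y_j$, or $(x_j,y_j) \preceq_\pi (x_i,y_i)$, yielding $x_i \leq_\pi y_i \leq_\pi y_j$. Either way $x_i \leq_\pi y_j$ for every pair $i,j$, so $\bigvee_i x_i \leq_\pi y_j$ for each $j$, and hence $\bigvee_i x_i \leq_\pi \bigwedge_j y_j$.

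It then remains to check that $(\bigvee_i x_i, \bigwedge_i y_i)$ is genuinely the $\preceq_\pi$-supremum of $C$, which is immediate from Definition~\ref{orderings_on_pairs}: each $(x_k,y_k)$ satisfies $x_k \leq_\pi \bigvee_i x_i$ and $\bigwedge_i y_i \leq_\pi y_k$, so the candidate is a $\preceq_\pi$-upper bound; and any $\preceq_\pi$-upper bound $(a,b)$ of $C$ satisfies $x_i \leq_\pi a$ and $b \leq_\pi y_i$ for all $i$, hence $\bigvee_i x_i \leq_\pi a$ and $b \leq_\pi \bigwedge_i y_i$, giving $(\bigvee_i x_i, \bigwedge_i y_i) \preceq_\pi (a,b)$.

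Apart from the cross-coordinate inequality $x_i \leq_\pi y_j$, which requires the short case split on the chain ordering, every remaining step is routine pointwise manipulation against the componentwise structure already supplied by Proposition~\ref{ma-am-lattices}; no additional lattice-theoretic machinery is needed.
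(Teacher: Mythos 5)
Your construction is the same as the paper's: componentwise $\leq_\pi$-joins for the complete-lattice claim, and the pair (join of first components, meet of second components) as the $\preceq_\pi$-supremum of a chain, with the cross inequality $x_i \leq_\pi y_j$ extracted from the chain condition by the same case split the paper relies on (it points to Proposition 2.3 of \cite{DMT04} for this). The one step you dismiss as ``routine pointwise manipulation against the componentwise structure already supplied by Proposition~\ref{ma-am-lattices}'' is in fact the only place where something beyond that proposition is needed: from $x_i \leq_\pi y_j$ for all $i$ you conclude $\bigvee_i x_i \leq_\pi y_j$, but the join is computed in the lattice $\lsem \pi \rsem_D^{\mathsf{ma}}$ while the upper bound $y_j$ lives in the \emph{other} lattice $\lsem \pi \rsem_D^{\mathsf{am}}$, so the least-upper-bound property of $\lsem \pi \rsem_D^{\mathsf{ma}}$ does not by itself license the conclusion. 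The paper isolates exactly this fact (together with its dual for glbs, which you also need for your componentwise meets) as Lemma~\ref{interlattice_lub_lemma} and proves it by structural induction on $\pi$; the induction is easy because the two families consist of functions on a common domain ordered pointwise and coincide at type $o$, but it is a genuinely separate lemma rather than a consequence of the componentwise lattice structure alone. With that lemma supplied, your argument goes through verbatim.
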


In the rest of the paper we will denote the {\em first} and {\em second} selection
functions on pairs with the more compact notation $[\cdot]_1$ and $[\cdot]_2$:
given any pair $(x,y)$, it is $[(x,y)]_1 = x$ and $[(x,y)]_2 = y$.
We can now establish the bijection between $\lsem \pi \rsem_D$ and
$\lsem \pi \rsem_D^{\mathsf{ma}} \otimes \lsem \pi \rsem_D^{\mathsf{am}}$.
The following definition and two propositions (whose proofs are given in~\ref{appendix-of-section-5}\iftplp\ in the supplementary material\fi), explain how.
\begin{definition}
Let $D$ be a nonempty set. For every predicate type $\pi$, we define recursively the functions
$\tau_\pi: \lsem \pi \rsem_D \rightarrow (\lsem \pi \rsem_D^{\mathsf{ma}} \otimes \lsem \pi \rsem_D^{\mathsf{am}})$ and
$\tau^{-1}_\pi: (\lsem \pi \rsem_D^{\mathsf{ma}} \otimes \lsem \pi \rsem_D^{\mathsf{am}}) \rightarrow \lsem \pi \rsem_D$,
as follows.
\begin{itemize}
  \item $\tau_o(\mfalse) = (\mfalse, \mfalse)$, $\tau_o(\mtrue) = (\mtrue, \mtrue)$,
                        $\tau_o(\mundef) = (\mfalse, \mtrue)$

  \item $\tau_{\iota\rightarrow\pi}(f) = (\lambda d. [\tau_\pi(f(d))]_1, \lambda d. [\tau_\pi(f(d))]_2)$

  \item $\tau_{\pi_1\rightarrow\pi_2}(f) = (\lambda(d_1, d_2). [\tau_{\pi_2}(f(\tau^{-1}_{\pi_1}(d_1, d_2)))]_1, \lambda(d_1, d_2). [\tau_{\pi_2}(f(\tau^{-1}_{\pi_1}(d_1, d_2)))]_2)$

\end{itemize}
and
\begin{itemize}
\item $\tau^{-1}_o(\mfalse,\mfalse) = \mfalse$, $\tau^{-1}_o(\mtrue,\mtrue) = \mtrue$,
                              $\tau^{-1}_o(\mfalse, \mtrue) = \mundef$

\item $\tau^{-1}_{\iota\rightarrow\pi}(f_1, f_2) = \lambda d. \tau_\pi^{-1}(f_1(d), f_2(d))$

\item $\tau^{-1}_{\pi_1\rightarrow\pi_2}(f_1, f_2) = \lambda d. \tau^{-1}_{\pi_2}(f_1(\tau_{\pi_1}(d)), f_2(\tau_{\pi_1}(d)))$.

\end{itemize}
\end{definition}
\begin{proposition}\label{tau-monotonic}
Let $D$ be a nonempty set and let $\pi$ be a predicate type. Then, for every $f,g \in \lsem \pi \rsem_D$
and for every $(f_1,f_2),(g_1,g_2) \in \lsem \pi \rsem_D^{\mathsf{ma}}\otimes \lsem \pi \rsem_D^{\mathsf{am}}$,
the following statements hold:
  \begin{enumerate}
  \item $\tau_{\pi}(f) \in  (\lsem \pi \rsem_D^{\mathsf{ma}} \otimes \lsem \pi \rsem_D^{\mathsf{am}})$
        and $\tau^{-1}_{\pi}(f_1,f_2) \in \lsem \pi \rsem_D$.
  \item If $f \preceq_{\pi} g$ then $\tau_\pi(f) \preceq_{\pi} \tau_\pi(g)$.
  \item If $f \leq_{\pi} g$ then $\tau_\pi(f) \leq_{\pi} \tau_\pi(g)$.
  \item If $(f_1,f_2) \preceq_{\pi} (g_1,g_2)$ then $\tau_\pi^{-1}(f_1,f_2) \preceq_{\pi} \tau_\pi^{-1}(g_1,g_2)$.
  \item If $(f_1,f_2) \leq_{\pi} (g_1,g_2)$ then $\tau^{-1}_\pi(f_1,f_2) \leq_{\pi} \tau^{-1}_\pi(g_1,g_2)$.
\end{enumerate}
\end{proposition}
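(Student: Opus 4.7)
The plan is to prove all five statements simultaneously by structural induction on the predicate type $\pi$, since the inductive step for any one statement appeals to the inductive hypotheses for the others — for instance, showing that $\tau_{\pi_1 \rightarrow \pi_2}(f)$ lands in the tensor product requires that $\tau^{-1}_{\pi_1}$ preserves $\preceq$ and that $\tau_{\pi_2}$ preserves $\preceq$, which are statements~4 and~2 at smaller types.

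For the base case $\pi = o$ everything reduces to a finite check: $\tau_o$ and $\tau^{-1}_o$ are defined by enumerating the three, resp.\ three, cases, and the five claims become direct inspections of these small tables under the orderings on $\lsem o \rsem_D$ and $\lsem o \rsem_D^{\mathsf{ma}} \otimes \lsem o \rsem_D^{\mathsf{am}}$. For the inductive case $\pi = \iota \rightarrow \pi'$ all definitions act pointwise in $d \in D$, so each of the five statements follows by pointwise application of the corresponding inductive hypothesis for $\pi'$.

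The substantive case is $\pi = \pi_1 \rightarrow \pi_2$. For statement~1 applied to $\tau_{\pi_1 \rightarrow \pi_2}(f)$, I take $(d_1, d_2) \leq_{\pi_1} (d_1', d_2')$ in the tensor product, i.e.\ $d_1 \leq d_1'$ and $d_2' \leq d_2$; by IH~4 for $\pi_1$ this yields $\tau^{-1}_{\pi_1}(d_1, d_2) \preceq \tau^{-1}_{\pi_1}(d_1', d_2')$, Fitting-monotonicity of $f$ propagates the $\preceq$-inequality, and IH~2 for $\pi_2$ transports it through $\tau_{\pi_2}$ into a $\preceq$ on pairs — which by definition of $\preceq$ on pairs is precisely the conjunction of monotone-antimonotonicity of the first projection and antimonotone-monotonicity of the second. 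The tensor-product constraint $[\tau(f)]_1 \leq [\tau(f)]_2$ comes for free from IH~1 for $\pi_2$ applied to each $f(\tau^{-1}_{\pi_1}(d_1, d_2))$. For $\tau^{-1}_{\pi_1 \rightarrow \pi_2}(f_1, f_2)$ I verify Fitting-monotonicity dually: given $d \preceq d'$ in $\lsem \pi_1 \rsem_D$, IH~2 for $\pi_1$ yields a pair-$\preceq$ inequality on the $\tau_{\pi_1}$-images, the monotone-antimonotone and antimonotone-monotone behaviours of $f_1$ and $f_2$ convert this into a pair-$\preceq$ between outputs, and IH~4 for $\pi_2$ finally gives $\preceq$ between the $\tau^{-1}_{\pi_2}$-images. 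Statements~2--5 at type $\pi_1 \rightarrow \pi_2$ then follow by essentially the same unwinding: each hypothesis between $f,g$ (resp.\ between $(f_1,f_2),(g_1,g_2)$) is pushed pointwise through the defining $\lambda$-abstractions and discharged by the same-numbered IH at $\pi_2$.

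The main obstacle is bookkeeping rather than conceptual depth: one must keep straight that $\leq_\pi$ on pairs is componentwise $\leq$ while $\preceq_\pi$ flips the second component, and that the monotone-antimonotone / antimonotone-monotone dichotomy is exactly what is needed to translate a pair-$\preceq$ on inputs to a pair-$\preceq$ on outputs. The only place where a real choice of IH matters is statement~1 for $\pi_1 \rightarrow \pi_2$: verifying that the output is in the tensor product uses statement~4 at $\pi_1$, whereas verifying Fitting-monotonicity of $\tau^{-1}(f_1,f_2)$ uses statement~2 at $\pi_1$. Once one commits to doing the five statements as a single mutual induction this pairing is forced, and the rest is mechanical.
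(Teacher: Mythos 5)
Your proof is correct and follows essentially the same route as the paper: a simultaneous structural induction on $\pi$, with the key step for $\pi_1\rightarrow\pi_2$ using IH~4 at $\pi_1$, Fitting-monotonicity of $f$, and IH~2 at $\pi_2$ to establish the monotone-antimonotone/antimonotone-monotone behaviour of the two projections, and the dual combination for $\tau^{-1}$. The only blemish is notational: the input hypothesis you write as $(d_1,d_2)\leq_{\pi_1}(d_1',d_2')$ is in fact the $\preceq_{\pi_1}$ ordering on pairs (as your own gloss ``$d_1\leq d_1'$ and $d_2'\leq d_2$'' makes clear), so the argument itself is unaffected.
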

\begin{proposition}\label{tau-elim}
Let $D$ be a nonempty set and let $\pi$ be a predicate type. Then, for every $f \in \lsem \pi \rsem_D$, $\tau^{-1}_{\pi}(\tau_{\pi}(f))=f$,
and for every $(f_1,f_2) \in \lsem \pi \rsem_D^{\mathsf{ma}}\otimes \lsem \pi \rsem_D^{\mathsf{am}}$,
$\tau_{\pi}(\tau^{-1}_{\pi}(f_1,f_2))=(f_1,f_2)$.
\end{proposition}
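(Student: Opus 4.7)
The proof will proceed by structural induction on the predicate type $\pi$, establishing both equalities $\tau^{-1}_{\pi}(\tau_{\pi}(f)) = f$ and $\tau_{\pi}(\tau^{-1}_{\pi}(f_1,f_2)) = (f_1,f_2)$ simultaneously. Since the definitions of $\tau_{\pi_1 \rightarrow \pi_2}$ and $\tau^{-1}_{\pi_1 \rightarrow \pi_2}$ each involve the \emph{other} operator at the argument type $\pi_1$, carrying both statements through the induction at once is the cleanest bookkeeping strategy.

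For the base case $\pi = o$, I would simply check each of the three truth values against the definitions: $\tau^{-1}_o(\tau_o(\mfalse)) = \tau^{-1}_o(\mfalse,\mfalse) = \mfalse$, and likewise for $\mtrue$ and $\mundef$; and conversely $\tau_o$ applied to each admissible pair $(\mfalse,\mfalse)$, $(\mtrue,\mtrue)$, $(\mfalse,\mtrue)$ recovers the original pair. Note that the definition of $\otimes$ precisely rules out $(\mtrue,\mfalse)$, so all pairs in $\lsem o \rsem_D^{\mathsf{ma}} \otimes \lsem o \rsem_D^{\mathsf{am}}$ are covered. For the case $\pi = \iota \rightarrow \pi'$, both equalities reduce pointwise in $d \in D$ to the corresponding equalities at $\pi'$, and the inductive hypothesis closes them immediately.

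The only case with any real content is $\pi = \pi_1 \rightarrow \pi_2$. For $\tau^{-1}_{\pi}(\tau_{\pi}(f)) = f$, I would unfold the definitions to obtain, for an arbitrary $d \in \lsem \pi_1 \rsem_D$,
\begin{equation*}
\tau^{-1}_{\pi_2}\!\bigl([\tau_{\pi_2}(f(\tau^{-1}_{\pi_1}(\tau_{\pi_1}(d))))]_1,\ [\tau_{\pi_2}(f(\tau^{-1}_{\pi_1}(\tau_{\pi_1}(d))))]_2\bigr),
\end{equation*}
apply the induction hypothesis at $\pi_1$ to collapse $\tau^{-1}_{\pi_1}(\tau_{\pi_1}(d))$ to $d$, and then apply the induction hypothesis at $\pi_2$ to collapse $\tau^{-1}_{\pi_2}(\tau_{\pi_2}(f(d)))$ to $f(d)$. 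For the converse equality, I would analogously unfold and, for arbitrary $(d_1,d_2) \in \lsem \pi_1 \rsem_D^{\mathsf{ma}} \otimes \lsem \pi_1 \rsem_D^{\mathsf{am}}$, use the induction hypothesis at $\pi_1$ to rewrite $\tau_{\pi_1}(\tau^{-1}_{\pi_1}(d_1,d_2))$ as $(d_1,d_2)$, then use the induction hypothesis at $\pi_2$ to dispose of the outer $\tau_{\pi_2} \circ \tau^{-1}_{\pi_2}$.

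The main obstacle is not any deep content but rather the fact that $\tau$ and $\tau^{-1}$ are defined by a mutual, nested recursion at the higher-order case, so each side of the inductive step invokes the \emph{other} operator at the subtype $\pi_1$. Beyond this, one should verify that the intermediate objects all lie in the correct domains (so that the inductive hypotheses apply): for instance, that $\tau^{-1}_{\pi_1}(d_1,d_2)$ indeed lies in $\lsem \pi_1 \rsem_D$ and that $\tau_{\pi_1}(d)$ lies in $\lsem \pi_1 \rsem_D^{\mathsf{ma}} \otimes \lsem \pi_1 \rsem_D^{\mathsf{am}}$. These well-typedness facts are exactly what Proposition~\ref{tau-monotonic}(1) supplies, so the proof is free to use them.
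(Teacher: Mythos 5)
Your proposal is correct and follows essentially the same route as the paper: a simultaneous structural induction on $\pi$, with the base case checked by direct enumeration, the $\iota\rightarrow\pi'$ case reduced pointwise, and the $\pi_1\rightarrow\pi_2$ case handled by unfolding the mutual definitions and applying the induction hypothesis first at $\pi_1$ (to collapse the inner $\tau^{-1}_{\pi_1}\circ\tau_{\pi_1}$ or $\tau_{\pi_1}\circ\tau^{-1}_{\pi_1}$) and then at $\pi_2$. Your observation that the well-typedness of the intermediate objects is supplied by Proposition~\ref{tau-monotonic}(1) is a point the paper leaves implicit but is exactly right.
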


\section{The Well-Founded Semantics for ${\cal HOL}$ Programs}\label{well_founded}
In this section we demonstrate that every program of ${\cal HOL}$ has a distinguished
{\em minimal Herbrand model} which can be obtained by an iterative procedure. This construction
generalizes the familiar well-founded semantics. Our main
results are based on a mild generalization of the consistent approximation fixpoint
theory of~\cite{DMT04}. We start with the relevant definitions.
\begin{definition}
Let $\mathsf{P}$ be a program. The Herbrand universe $U_{\mathsf{P}}$ of $\mathsf{P}$ is the set of all ground terms
that can be formed out of the individual constants\footnote{As usual, if $\mathsf{P}$ has no constants, we assume the existence of an arbitrary one.}
and the function symbols of $\mathsf{P}$.
\end{definition}

\begin{definition}
A (three-valued) {\em Herbrand interpretation} ${\cal I}$ of a program $\mathsf{P}$ is an interpretation such that:
\begin{enumerate}
\item the domain of ${\cal I}$ is the Herbrand universe $U_{\mathsf{P}}$ of $\mathsf{P}$;
\item for every individual constant $\mathsf{c}$ of $\mathsf{P}$, ${\cal I}(\mathsf{c})=\mathsf{c}$;
\item for every predicate constant $\mathsf{p}:\pi$ of $\mathsf{P}$, ${\cal I}(\mathsf{p}) \in \lsem \pi \rsem_{U_{\mathsf{P}}}$;
\item for every $n$-ary function symbol $\mathsf{f}$ of $\mathsf{P}$ and for all $\mathsf{t}_1,\ldots,\mathsf{t}_n \in U_{\mathsf{P}}$, ${\cal I}(\mathsf{f})\, \mathsf{t}_1\cdots \mathsf{t}_n = \mathsf{f}\,\mathsf{t}_1\cdots\mathsf{t}_n$.
\end{enumerate}
\end{definition}
We denote the set of all three-valued Herbrand interpretations of a program $\mathsf{P}$ by ${\cal H}_\mathsf{P}$.
A {\em Herbrand state} of $\mathsf{P}$ is a state whose underlying domain is $U_{\mathsf{P}}$.
A {\em Herbrand model} of $\mathsf{P}$ is a Herbrand interpretation that is a model of $\mathsf{P}$.
The truth and the information orderings easily extend to Herbrand interpretations:
\begin{definition}\label{ordering-interpretations}
Let $\mathsf{P}$ be a program. We define the partial orders $\leq$ and $\preceq$ on ${\cal H}_\mathsf{P}$
as follows: for all ${\cal I}, {\cal J} \in  {\cal H}_\mathsf{P}$, ${\cal I} \leq {\cal J}$ (respectively,
${\cal I} \preceq {\cal J}$) iff for every predicate type $\pi$ and for every predicate constant
$\mathsf{p} : \pi$ of $\mathsf{P}$, ${\cal I}(\mathsf{p}) \leq_\pi {\cal J}(\mathsf{p})$ (respectively,
${\cal I}(\mathsf{p}) \preceq_\pi {\cal J}(\mathsf{p})$).
\end{definition}
The proof of the following proposition is analogous to that of Proposition~\ref{semantics_of_types_lattice_cpo} and omitted:
\begin{proposition}\label{semantics_of_interpretations_lattice_cpo}
Let $\mathsf{P}$ be a program. Then, $({\cal H}_{\mathsf{P}}, \leq)$ is a complete lattice and
$({\cal H}_{\mathsf{P}}, \preceq)$ is a chain complete poset.
\end{proposition}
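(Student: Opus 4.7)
The plan is to lift Proposition~\ref{semantics_of_types_lattice_cpo} from each individual predicate type $\pi$ up to the space ${\cal H}_{\mathsf{P}}$, by exploiting the fact that, modulo the rigid components common to all Herbrand interpretations (the domain $U_{\mathsf{P}}$, the identity assignments to individual constants, and the canonical assignments to function symbols), a Herbrand interpretation is completely determined by its action on the (finitely many) predicate constants of $\mathsf{P}$. Both orderings $\leq$ and $\preceq$ from Definition~\ref{ordering-interpretations} are exactly the componentwise orderings induced on the product $\prod_{\mathsf{p}:\pi} \lsem \pi \rsem_{U_{\mathsf{P}}}$, so the result reduces to standard preservation arguments for pointwise orderings on products of complete lattices and chain complete posets.

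For the complete lattice claim, I would take an arbitrary $S \subseteq {\cal H}_{\mathsf{P}}$ and define ${\cal I}^*$ to agree with every element of ${\cal H}_{\mathsf{P}}$ on the rigid components and to satisfy ${\cal I}^*(\mathsf{p}) = \bigvee_{\leq_\pi}\{{\cal I}(\mathsf{p}) : {\cal I} \in S\}$ for every predicate constant $\mathsf{p}:\pi$ of $\mathsf{P}$. By Proposition~\ref{semantics_of_types_lattice_cpo}, each such supremum exists in the complete lattice $(\lsem \pi \rsem_{U_{\mathsf{P}}}, \leq_\pi)$, so ${\cal I}^*$ is a well-defined Herbrand interpretation. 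A routine check against Definition~\ref{ordering-interpretations} then shows that ${\cal I}^*$ is the least upper bound of $S$; infima are obtained dually, establishing that $({\cal H}_{\mathsf{P}}, \leq)$ is a complete lattice.

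For the chain completeness statement, I would start from an arbitrary $\preceq$-chain $C \subseteq {\cal H}_{\mathsf{P}}$. Since $\preceq$ is componentwise, for every predicate constant $\mathsf{p}:\pi$ the set $\{{\cal I}(\mathsf{p}) : {\cal I} \in C\}$ is a $\preceq_\pi$-chain in $\lsem \pi \rsem_{U_{\mathsf{P}}}$ and therefore has a least upper bound there by Proposition~\ref{semantics_of_types_lattice_cpo}. Defining ${\cal I}^*$ componentwise via these suprema (and as usual on the rigid components) yields a Herbrand interpretation which one verifies is the $\preceq$-supremum of $C$. I do not expect a substantive obstacle: the only step requiring any attention is checking that the constructed ${\cal I}^*$ really qualifies as a Herbrand interpretation, but this is immediate because every interpretation in $S$ or $C$ assigns the canonical values to individual constants and function symbols, and these assignments transfer unchanged to ${\cal I}^*$.
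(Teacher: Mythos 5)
Your proposal is correct and matches the argument the paper intends: the paper omits the proof as ``analogous to Proposition~\ref{semantics_of_types_lattice_cpo}'', which amounts to exactly your reduction --- viewing ${\cal H}_{\mathsf{P}}$ with its componentwise orderings as a product of the domains $\lsem \pi \rsem_{U_{\mathsf{P}}}$ over the predicate constants, forming suprema componentwise, and invoking Proposition~\ref{semantics_of_types_lattice_cpo} for each component. The only cosmetic omission is an explicit mention of the $\preceq$-least element (assigning $\perp_{\preceq_\pi}$ to each $\mathsf{p}:\pi$) required by the definition of chain completeness, which is immediate.
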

The following lemma is also easy to establish, and its proof is omitted:
\begin{lemma}\label{expressions_fitting}
Let $\mathsf{P}$ be a program, let ${\cal I},{\cal J}\in {\cal H}_\mathsf{P}$, and let $s$ be a Herbrand state of $\mathsf{P}$.
For every expression $\mathsf{E}$, if ${\cal I} \preceq {\cal J}$ then $\lsem \mathsf{E} \rsem_s({\cal I}) \preceq \lsem \mathsf{E} \rsem_s({\cal J})$.
\end{lemma}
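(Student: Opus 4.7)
The plan is to proceed by structural induction on $\mathsf{E}$, strengthening the statement so the quantification over $s$ lies inside the induction hypothesis (this is needed for the $\lambda$ and $\exists$ cases, where the state gets extended). The base cases are straightforward: the propositional constants $\ctrue,\cfalse$ and any individual variable $\mathsf{R}$ have denotations independent of the interpretation; for an individual constant $\mathsf{c}$ the Herbrand requirement gives ${\cal I}(\mathsf{c}) = \mathsf{c} = {\cal J}(\mathsf{c})$; and for a predicate constant $\mathsf{p}$ the claim is exactly the defining condition of ${\cal I} \preceq {\cal J}$. In every base case the conclusion follows from reflexivity of $\preceq$ or directly from the hypothesis.

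For the inductive cases, I would handle the ``term-level'' constructs first. In the function application $(\mathsf{f}\,\mathsf{E}_1\cdots \mathsf{E}_n)$ and in the equality $(\mathsf{E}_1 \approx \mathsf{E}_2)$, all subexpressions have type $\iota$, where $\preceq_\iota$ coincides with equality; the induction hypothesis then yields $\lsem \mathsf{E}_i\rsem_s({\cal I}) = \lsem \mathsf{E}_i\rsem_s({\cal J})$, and the full expressions have identical denotations. The crux of the proof is the predicate application case $(\mathsf{E}_1\,\mathsf{E}_2)$ with $\mathsf{E}_1:\rho\rightarrow\pi$. The induction hypothesis gives $\lsem \mathsf{E}_1\rsem_s({\cal I}) \preceq \lsem \mathsf{E}_1\rsem_s({\cal J})$ and $\lsem \mathsf{E}_2\rsem_s({\cal I}) \preceq \lsem \mathsf{E}_2\rsem_s({\cal J})$. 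Applying the pointwise definition of $\preceq$ on $\lsem \rho\rightarrow\pi\rsem_D$ to the first inequality at the point $\lsem \mathsf{E}_2\rsem_s({\cal I})$ yields
\[
\lsem \mathsf{E}_1\rsem_s({\cal I})\bigl(\lsem \mathsf{E}_2\rsem_s({\cal I})\bigr) \preceq \lsem \mathsf{E}_1\rsem_s({\cal J})\bigl(\lsem \mathsf{E}_2\rsem_s({\cal I})\bigr),
\]
while Fitting-monotonicity of $\lsem \mathsf{E}_1\rsem_s({\cal J})$ (guaranteed by Lemma~\ref{expressions-well-defined} together with the way $\lsem \pi_1\rightarrow\pi_2\rsem_D$ is defined) gives $\lsem \mathsf{E}_1\rsem_s({\cal J})(\lsem \mathsf{E}_2\rsem_s({\cal I})) \preceq \lsem \mathsf{E}_1\rsem_s({\cal J})(\lsem \mathsf{E}_2\rsem_s({\cal J}))$; transitivity closes the case. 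The $\lambda$-abstraction case reduces to applying the (generalized) induction hypothesis to the extended state $s[\mathsf{R}/d]$ for each $d$ in the relevant domain and then reading off the pointwise ordering on function spaces.

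The remaining cases concern the three-valued logical operators. For negation it suffices to verify by a three-line case analysis that $(\cdot)^{-1}$ is $\preceq$-monotonic on $\{\mfalse,\mtrue,\mundef\}$ (the only nontrivial check being that $\mundef \preceq x$ implies $\mundef = \mundef^{-1} \preceq x^{-1}$). For conjunction, disjunction, and the existential quantifier, the task is to show that $\bigvee_{\leq_o}$ and $\bigwedge_{\leq_o}$ are $\preceq$-monotonic when applied family-wise to a collection of $\preceq$-related arguments; a short case analysis on whether $\mtrue$ (respectively $\mfalse$) or $\mundef$ occur in the two families settles this, and then the induction hypothesis applied to each subexpression finishes these cases.

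The main obstacle is not conceptual but administrative: one must remember to generalize over $s$ \emph{before} starting the induction, and in the predicate application case correctly interleave the pointwise $\preceq$ ordering on function spaces with the Fitting-monotonicity of the denotation $\lsem \mathsf{E}_1\rsem_s({\cal J})$. All the deep content has already been absorbed into the design of the semantic domains in Definition~\ref{our_domains}, so no additional creative step is required.
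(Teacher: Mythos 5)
The paper omits the proof of this lemma as routine, and your structural induction (generalized over the state, with the application case decomposed into the pointwise $\preceq$ ordering on $\lsem \rho\rightarrow\pi\rsem_D$ followed by the Fitting-monotonicity built into that domain) is exactly the intended argument. The only point to tighten is that $\bigwedge_{\pi}$ and $\bigvee_{\pi}$ in clause bodies are taken at an arbitrary predicate type $\pi$, not just at $o$; since lubs and glbs at higher types are defined pointwise, this reduces to your type-$o$ case analysis, and for lubs the needed fact is already recorded as Corollary~\ref{prop-lub-preceq}.
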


The bijection established in Section~\ref{bijection} extends also to interpretations.
More specifically, every three-valued Herbrand interpretation ${\cal I}$ of a program
$\mathsf{P}$ can be mapped by (an extension of) $\tau$ to a pair of functions $(I,J)$ such that:
\begin{itemize}
\item for every individual constant $\mathsf{c}$ of $\mathsf{P}$, $I(\mathsf{c})=J(\mathsf{c})=\mathsf{c}$;
\item for every predicate constant $\mathsf{p}:\pi$ of $\mathsf{P}$, $I(\mathsf{p}) \in \lsem \pi \rsem^{\mathsf{ma}}_{U_{\mathsf{P}}}$
      and $J(\mathsf{p}) \in \lsem \pi \rsem^{\mathsf{am}}_{U_{\mathsf{P}}}$;
\item for every $n$-ary function symbol $\mathsf{f}$ of $\mathsf{P}$ and for all $\mathsf{t}_1,\ldots,\mathsf{t}_n \in U_{\mathsf{P}}$, $I(\mathsf{f})\, \mathsf{t}_1\cdots \mathsf{t}_n = J(\mathsf{f})\, \mathsf{t}_1\cdots \mathsf{t}_n =\mathsf{f}\,\mathsf{t}_1\cdots\mathsf{t}_n$.
\end{itemize}

Functions of the form $I$ above will be called ``{\em monotone-antimonotone Herbrand interpretations}''
and functions of the form $J$ will be called  ``{\em antimonotone-monotone Herbrand interpretations}''.
We will denote by ${\cal H}^{\mathsf{ma}}_\mathsf{P}$ the set of functions of the former type
and by ${\cal H}^{\mathsf{am}}_\mathsf{P}$ those of the latter type. As in Definition~\ref{ordering-interpretations},
we can define a partial order $\leq$ on ${\cal H}^{\mathsf{ma}}_\mathsf{P} \cup {\cal H}^{\mathsf{am}}_\mathsf{P}$.
Similarly, as in Definition~\ref{orderings_on_pairs}, we can define partial orders
$\leq$ and $\preceq$ on ${\cal H}^{\mathsf{ma}}_\mathsf{P} \otimes {\cal H}^{\mathsf{am}}_\mathsf{P}$.
The proof of the following proposition is a direct consequence of the proofs of Propositions~\ref{pairs_complete_lattice_cpo}
and~\ref{ma-am-lattices}, and therefore omitted.
\begin{proposition}\label{interpretation_pairs_complete_lattice_cpo}
Let $\mathsf{P}$ be a program.  Then, $({\cal H}^{\mathsf{ma}}_\mathsf{P}, \leq)$  and $({\cal H}^{\mathsf{am}}_\mathsf{P},\leq)$ are complete
lattices having the same $\perp$ and $\top$ elements. Moreover, $({\cal H}^{\mathsf{ma}}_\mathsf{P} \otimes {\cal H}^{\mathsf{am}}_\mathsf{P}, \leq)$
is a complete lattice and $({\cal H}^{\mathsf{ma}}_\mathsf{P} \otimes {\cal H}^{\mathsf{am}}_\mathsf{P}, \preceq)$
is a chain-complete poset.
\end{proposition}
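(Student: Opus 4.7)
The plan is to lift the per-type results of Propositions~\ref{ma-am-lattices} and~\ref{pairs_complete_lattice_cpo} to the level of Herbrand interpretations by observing that an interpretation is essentially a tuple of per-predicate denotations, and that all the relevant orderings are defined pointwise over the predicate constants. Concretely, once the domain is fixed as $U_{\mathsf{P}}$, the values of a Herbrand interpretation on individual constants and function symbols are determined by the definition of Herbrand interpretation, so ${\cal H}^{\mathsf{ma}}_\mathsf{P}$ is in bijective correspondence with the Cartesian product $\prod_{\mathsf{p}:\pi\,\in\,\mathsf{P}} \lsem \pi \rsem_{U_{\mathsf{P}}}^{\mathsf{ma}}$ (and similarly for ${\cal H}^{\mathsf{am}}_\mathsf{P}$), with the ordering $\leq$ of Definition~\ref{ordering-interpretations} corresponding exactly to the componentwise ordering.

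First I would establish that $({\cal H}^{\mathsf{ma}}_\mathsf{P},\leq)$ and $({\cal H}^{\mathsf{am}}_\mathsf{P},\leq)$ are complete lattices. By Proposition~\ref{ma-am-lattices}, each factor $(\lsem \pi \rsem_{U_{\mathsf{P}}}^{\mathsf{ma}},\leq_\pi)$ and $(\lsem \pi \rsem_{U_{\mathsf{P}}}^{\mathsf{am}},\leq_\pi)$ is a complete lattice; since an arbitrary product of complete lattices under pointwise ordering is again a complete lattice (meets and joins are computed coordinate-wise), the result follows. For the claim that ${\cal H}^{\mathsf{ma}}_\mathsf{P}$ and ${\cal H}^{\mathsf{am}}_\mathsf{P}$ share the same $\perp$ and $\top$, I would proceed by induction on the structure of $\pi$ using Definition~\ref{alternative_denotation_of_types}: at the base case $\lsem o \rsem^{\mathsf{ma}}_D = \lsem o \rsem^{\mathsf{am}}_D = \{\mfalse,\mtrue\}$ with the same ordering, and in the inductive steps the constant functions $\lambda d.\perp_{\leq_{\pi_2}}$ and $\lambda d.\top_{\leq_{\pi_2}}$ (respectively $\lambda(d_1,d_2).\perp_{\leq_{\pi_2}}$, $\lambda(d_1,d_2).\top_{\leq_{\pi_2}}$) are trivially both monotone-antimonotone and antimonotone-monotone, so $\perp_{\leq}$ and $\top_{\leq}$ coincide in the two spaces. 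Lifting pointwise over predicate constants gives the identification of $\perp$ and $\top$ in ${\cal H}^{\mathsf{ma}}_\mathsf{P}$ and ${\cal H}^{\mathsf{am}}_\mathsf{P}$.

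Next I would handle the pair-space $({\cal H}^{\mathsf{ma}}_\mathsf{P}\otimes{\cal H}^{\mathsf{am}}_\mathsf{P},\leq)$. The key observation is that by virtue of the identification of $\perp$ and $\top$ just established, the constraint $I\leq J$ defining $\otimes$ on interpretations decomposes predicate-by-predicate into the per-type constraint $I(\mathsf{p})\leq_\pi J(\mathsf{p})$, so ${\cal H}^{\mathsf{ma}}_\mathsf{P}\otimes{\cal H}^{\mathsf{am}}_\mathsf{P}$ is isomorphic to the product $\prod_{\mathsf{p}:\pi} (\lsem \pi \rsem_{U_{\mathsf{P}}}^{\mathsf{ma}}\otimes\lsem \pi \rsem_{U_{\mathsf{P}}}^{\mathsf{am}})$ under componentwise $\leq$. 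Each factor is a complete lattice with respect to $\leq$ by Proposition~\ref{pairs_complete_lattice_cpo}, and again complete lattices are closed under arbitrary products, whence the claim. The same pointwise/product argument, using that chain-complete posets are closed under products (the componentwise supremum of a $\preceq$-chain in the product is obtained from the suprema of the projected chains in each factor), yields that $({\cal H}^{\mathsf{ma}}_\mathsf{P}\otimes{\cal H}^{\mathsf{am}}_\mathsf{P},\preceq)$ is a chain-complete poset.

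I do not anticipate any real obstacle here: the entire proof is a routine product construction applied to results already established at the level of individual types. The only subtlety worth being careful about is the decomposition of the pair-constraint $I\leq J$ into per-predicate constraints, which is what justifies treating ${\cal H}^{\mathsf{ma}}_\mathsf{P}\otimes{\cal H}^{\mathsf{am}}_\mathsf{P}$ as a genuine product of the per-type pair-spaces rather than as a restricted subset of ${\cal H}^{\mathsf{ma}}_\mathsf{P}\times{\cal H}^{\mathsf{am}}_\mathsf{P}$; but this decomposition is immediate from Definition~\ref{ordering-interpretations}.
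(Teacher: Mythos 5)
Your proof is correct and follows exactly the route the paper intends: the paper omits this proof, stating only that it is a direct consequence of Propositions~\ref{ma-am-lattices} and~\ref{pairs_complete_lattice_cpo}, and your pointwise/product lifting over the predicate constants (together with the easy induction showing the $\mathsf{ma}$ and $\mathsf{am}$ spaces share their extremal elements) is precisely that argument spelled out. The only cosmetic remark is that the decomposition of the constraint $I\leq J$ into per-predicate constraints does not rely on the identification of $\perp$ and $\top$ at all -- as you yourself note at the end, it is immediate from the pointwise definition of $\leq$ in Definition~\ref{ordering-interpretations}.
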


The bijection between ${\cal H}_{\mathsf{P}}$ and ${\cal H}^{\mathsf{ma}}_\mathsf{P} \otimes {\cal H}^{\mathsf{am}}_\mathsf{P}$
can be explained more formally as follows. Given ${\cal I} \in {\cal H}_{\mathsf{P}}$, we define $\tau({\cal I}) = (I,J)$, where
for every predicate constant $\mathsf{p}:\pi$ it holds $I(\mathsf{p}) = [\tau_{\pi}({\cal I}(\mathsf{p}))]_1$ and $J(\mathsf{p}) = [\tau_{\pi}({\cal I}(\mathsf{p}))]_2$.
Conversely, given a pair $(I,J) \in {\cal H}^{\mathsf{ma}}_\mathsf{P} \otimes {\cal H}^{\mathsf{am}}_\mathsf{P}$,
we define the three-valued Herbrand interpretation ${\cal I}$ as follows:
${\cal I}(\mathsf{p}) = \tau^{-1}_{\pi}(I(\mathsf{p}),J(\mathsf{p}))$. We now define
the three-valued and two-valued immediate consequence operators:
\begin{definition}
Let $\mathsf{P}$ be a program. The {\em three-valued immediate consequence operator}
$\Psi_\mathsf{P}: {\cal H}_{\mathsf{P}} \rightarrow {\cal H}_{\mathsf{P}}$ of $\mathsf{P}$ is defined for every $\mathsf{p}:\pi$ as:
$\Psi_\mathsf{P}({\cal I})(\mathsf{p}) = \bigvee_{\leq_{\pi}}\{ \lsem \mathsf{E} \rsem({\cal I}) \mid (\mathsf{p} \leftarrow_{\pi} \mathsf{E}) \in \mathsf{P}\}$.
\end{definition}
\begin{definition}
Let $\mathsf{P}$ be a program. The {\em two-valued immediate consequence operator}
$T_\mathsf{P}:({\cal H}^{\mathsf{ma}}_\mathsf{P} \otimes  {\cal H}^{\mathsf{am}}_\mathsf{P}) \rightarrow ({\cal H}^{\mathsf{ma}}_\mathsf{P} \otimes  {\cal H}^{\mathsf{am}}_\mathsf{P})$ of $\mathsf{P}$ is defined as: $T_\mathsf{P}(I,J) = \tau(\Psi_\mathsf{P}(\tau^{-1}(I,J)))$.
\end{definition}
From Proposition~\ref{tau-monotonic-for-interpretations} in~\ref{appendix-of-section-6b} \iftplp (found in the supplementary material corresponding to this paper at the TPLP archives) \fi 
it follows that $T_{\mathsf{P}}$ is well-defined. Moreover, it is Fitting-monotonic as the following lemma
demonstrates (see~\ref{appendix-of-section-6b} for the proof):
\begin{lemma}\label{T_P_is_Fitting_monotonic}
Let $\mathsf{P}$ be a program and let  $(I_1, J_1), (I_2, J_2) \in {\cal H}^{\mathsf{ma}}_\mathsf{P} \otimes  {\cal H}^{\mathsf{am}}_\mathsf{P}$.
If $(I_1, J_1) \preceq (I_2, J_2)$ then $T_\mathsf{P}(I_1, J_1) \preceq T_\mathsf{P}(I_2,J_2)$.
\end{lemma}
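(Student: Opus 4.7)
The plan is to read $T_\mathsf{P}$ according to its definition as the composition $T_\mathsf{P}(I,J) = \tau(\Psi_\mathsf{P}(\tau^{-1}(I,J)))$ and to verify that each of the three factors preserves the Fitting ordering. Concretely, I aim to establish: (i) $\tau^{-1}$ is $\preceq$-monotonic from ${\cal H}^{\mathsf{ma}}_\mathsf{P} \otimes {\cal H}^{\mathsf{am}}_\mathsf{P}$ into ${\cal H}_\mathsf{P}$; (ii) $\Psi_\mathsf{P}$ is $\preceq$-monotonic on ${\cal H}_\mathsf{P}$; and (iii) $\tau$ is $\preceq$-monotonic from ${\cal H}_\mathsf{P}$ into ${\cal H}^{\mathsf{ma}}_\mathsf{P} \otimes {\cal H}^{\mathsf{am}}_\mathsf{P}$. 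Chaining (i)--(iii) on the hypothesis $(I_1,J_1)\preceq(I_2,J_2)$ then immediately yields the conclusion.

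For (i) and (iii) I would simply lift Proposition~\ref{tau-monotonic} from predicate-typed values to interpretations, pointwise over the predicate constants; this is precisely the content of Proposition~\ref{tau-monotonic-for-interpretations} referenced later in the excerpt. For (i), the hypothesis unfolds to $(I_1(\mathsf{p}),J_1(\mathsf{p}))\preceq_\pi(I_2(\mathsf{p}),J_2(\mathsf{p}))$ for every predicate constant $\mathsf{p}:\pi$; part~4 of Proposition~\ref{tau-monotonic} then gives $\tau^{-1}_\pi(I_1(\mathsf{p}),J_1(\mathsf{p}))\preceq_\pi\tau^{-1}_\pi(I_2(\mathsf{p}),J_2(\mathsf{p}))$, i.e.\ $\tau^{-1}(I_1,J_1)\preceq\tau^{-1}(I_2,J_2)$ in ${\cal H}_\mathsf{P}$. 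Part (iii) is analogous, using part~2 of Proposition~\ref{tau-monotonic} instead.

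Step (ii) is the substantive one. Fix ${\cal I}_1\preceq{\cal I}_2$ and a predicate constant $\mathsf{p}:\pi$. By Lemma~\ref{expressions_fitting}, for every clause body $\mathsf{E}$ of a clause $\mathsf{p}\leftarrow_\pi\mathsf{E}$ one has $\lsem\mathsf{E}\rsem({\cal I}_1)\preceq_\pi\lsem\mathsf{E}\rsem({\cal I}_2)$. To conclude that $\Psi_\mathsf{P}({\cal I}_1)(\mathsf{p})\preceq_\pi\Psi_\mathsf{P}({\cal I}_2)(\mathsf{p})$ one still needs the fact that $\bigvee_{\leq_\pi}$ is $\preceq_\pi$-monotonic in its argument set: if each element of one family is $\preceq_\pi$-below the corresponding element of a second family, then so are their $\leq_\pi$-joins. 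This is the main obstacle; although a standard bilattice fact, it must be proved from scratch here. I would reduce it by a pointwise induction on the structure of $\pi$ to the base case $\pi=o$, where it is settled by a direct finite case analysis on $\{\mtrue,\mfalse,\mundef\}$, exploiting that $\mundef$ is the $\preceq_o$-bottom and that $\vee_{\leq_o}$ just picks the $\leq_o$-greatest argument. Composing (i)--(iii) then yields $T_\mathsf{P}(I_1,J_1)\preceq T_\mathsf{P}(I_2,J_2)$, as required.
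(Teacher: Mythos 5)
Your proposal is correct and follows essentially the same route as the paper: decompose $T_\mathsf{P}$ into $\tau\circ\Psi_\mathsf{P}\circ\tau^{-1}$, handle the two $\tau$-factors via the interpretation-level lift of Proposition~\ref{tau-monotonic} (Proposition~\ref{tau-monotonic-for-interpretations}), and reduce the $\preceq$-monotonicity of $\Psi_\mathsf{P}$ to Lemma~\ref{expressions_fitting} together with the fact that $\bigvee_{\leq_\pi}$ preserves $\preceq_\pi$ componentwise. The ``main obstacle'' you identify is exactly Corollary~\ref{prop-lub-preceq}, which the paper proves as the auxiliary statement inside Proposition~\ref{semantics_of_types_lattice_cpo} by the same induction on $\pi$ with a case analysis at type $o$ that you sketch.
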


We will use $T_{\mathsf{P}}$ to construct the well-founded model of program $\mathsf{P}$.
Our construction is based on a mild extension of consistent approximation fixpoint theory~\cite{DMT04}.
Therefore, in order for the following two definitions and subsequent theorem to be fully comprehended,
it would be helpful if the reader had some familiarity with the material in~\cite{DMT04}.
\begin{definition}
Let $\mathsf{P}$ be a program and let $(I,J) \in {\cal H}^{\mathsf{ma}}_\mathsf{P} \otimes  {\cal H}^{\mathsf{am}}_\mathsf{P}$.
Assume that $(I,J) \preceq T_{\mathsf{P}}(I,J)$. We define $I^{\uparrow} = \textit{lfp}([T_{\mathsf{P}}(I,\cdot)]_2)$
and $J^{\downarrow} = \textit{lfp}([T_{\mathsf{P}}(\cdot,J)]_1)$, where by $T_{\mathsf{P}}(\cdot,J)$ we denote the
function $f(x) = T_{\mathsf{P}}(x,J)$ and by $T_{\mathsf{P}}(I,\cdot)$ the function $g(x) = T_{\mathsf{P}}(I,x)$.
\end{definition}

It can be shown (see~\ref{appendix-of-section-6a}\iftplp\ in the supplementary material\fi) that $I^{\uparrow}$ and $J^{\downarrow}$
are well-defined, and this is due to the crucial assumption $(I,J) \preceq T_{\mathsf{P}}(I,J)$. This property was
introduced in~\cite{DMT04} where it is named $A$-reliability (in our case $A$ is the
$T_{\mathsf{P}}$ operator). Before proceeding to the definition of the well-founded
semantics, we need to define one more operator, namely the {\em stable revision operator}
(see~\cite{DMT04}[page 91] for the intuition and motivation behind this operator).
\begin{definition}
Let $\mathsf{P}$ be a program. We define the function ${\cal C}_{T_{\mathsf{P}}}$ which for every pair
$(I,J) \in {\cal H}^{\mathsf{ma}}_\mathsf{P} \otimes  {\cal H}^{\mathsf{am}}_\mathsf{P}$ with $(I,J) \preceq T_{\mathsf{P}}(I,J)$,
returns the pair $(J^{\downarrow},I^{\uparrow})$:
$${\cal C}_{T_{\mathsf{P}}}(I,J) = (J^{\downarrow},I^{\uparrow}) = (\textit{lfp}([T_{\mathsf{P}}(\cdot,J)]_1), \textit{lfp}([T_{\mathsf{P}}(I,\cdot)]_2))$$
The function ${\cal C}_{T_{\mathsf{P}}}$ will be called the {\em stable revision operator} for
$T_{\mathsf{P}}$.
\end{definition}
The following theorem is a direct consequence of Theorem~\ref{central-theorem} given in~\ref{appendix-of-section-6a} \iftplp in the supplementary material \fi 
(which extends Theorem 3.11 in~\cite{DMT04} to our case):
\begin{theorem}\label{iterative_definition_of_wfm}
Let $\mathsf{P}$ be a program. We define the following sequence of pairs of interpretations:
\[
\begin{array}{llll}
(I_0,J_0) & = & (\perp,\top) & \\
(I_{\lambda+1},J_{\lambda+1}) & = & {\cal C}_{T_{\mathsf{P}}}(I_{\lambda},J_{\lambda}) &\\
(I_\lambda, J_\lambda) & = & \bigvee_{\preceq}\{(I_{\kappa},J_{\kappa}) \mid \kappa < \lambda\} & \mbox{ for limit ordinals $\lambda$}
\end{array}
\]
Then, the above sequence of pairs of interpretations is well-defined. Moreover, there exists a
least ordinal $\delta$ such that $(I_{\delta},J_{\delta}) = {\cal C}_{T_{\mathsf{P}}}(I_\delta,J_\delta)$
and $(I_{\delta},J_{\delta}) \in {\cal H}^{\mathsf{ma}}_\mathsf{P} \otimes  {\cal H}^{\mathsf{am}}_\mathsf{P}$.
\end{theorem}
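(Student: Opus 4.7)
My plan is to derive this theorem as an instance of Theorem~\ref{central-theorem} (the generalization of Theorem 3.11 of~\cite{DMT04} proved in the appendix), and so the strategy reduces to verifying that the hypotheses of that central theorem apply to $T_{\mathsf{P}}$ on the poset $({\cal H}^{\mathsf{ma}}_\mathsf{P} \otimes {\cal H}^{\mathsf{am}}_\mathsf{P},\preceq)$. Three ingredients are already in place: by Proposition~\ref{interpretation_pairs_complete_lattice_cpo} the poset is chain-complete under $\preceq$ and has $(\perp,\top)$ as its $\preceq$-least element (recall that $\preceq$ demands the first coordinate go up and the second go down, so $(\perp,\top)$ sits at the bottom); by Lemma~\ref{T_P_is_Fitting_monotonic} the operator $T_{\mathsf{P}}$ is $\preceq$-monotonic; and the paragraph after the definition of $\mathcal{C}_{T_\mathsf{P}}$ assures that $I^\uparrow$ and $J^\downarrow$ exist whenever $(I,J)\preceq T_{\mathsf{P}}(I,J)$, so $\mathcal{C}_{T_\mathsf{P}}$ is well-defined on the $T_{\mathsf{P}}$-reliable pairs.

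With the setup in place, I would establish the three claims of the theorem by transfinite induction on $\lambda$, maintaining as invariants that (i) $(I_\lambda,J_\lambda)\in {\cal H}^{\mathsf{ma}}_\mathsf{P} \otimes {\cal H}^{\mathsf{am}}_\mathsf{P}$, (ii) $(I_\lambda,J_\lambda)$ is $T_{\mathsf{P}}$-reliable, and (iii) the sequence is $\preceq$-increasing, i.e.\ $(I_\kappa,J_\kappa)\preceq (I_\lambda,J_\lambda)$ for all $\kappa\le \lambda$. The base case $\lambda=0$ is immediate since $(\perp,\top)$ is the $\preceq$-minimum. For the successor step I would invoke the two core facts (to be imported from the central theorem): that $\mathcal{C}_{T_\mathsf{P}}$ preserves $T_{\mathsf{P}}$-reliability and that $(I,J)\preceq \mathcal{C}_{T_\mathsf{P}}(I,J)$ whenever $(I,J)$ is reliable and $\preceq$-below its $\mathcal{C}_{T_\mathsf{P}}$-image. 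For limit $\lambda$, the inductive chain has a $\preceq$-supremum by chain-completeness, and reliability lifts to the supremum because $T_{\mathsf{P}}$ is $\preceq$-monotonic (the supremum is dominated by $T_{\mathsf{P}}$ applied to it, as each approximant is).

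Once the sequence is known to be well-defined, inside $({\cal H}^{\mathsf{ma}}_\mathsf{P} \otimes {\cal H}^{\mathsf{am}}_\mathsf{P},\preceq)$, and strictly $\preceq$-increasing until stabilization, a standard cardinality argument gives a least ordinal $\delta$ at which $(I_{\delta+1},J_{\delta+1}) = (I_\delta,J_\delta)$; by construction this precisely means $(I_\delta,J_\delta)=\mathcal{C}_{T_\mathsf{P}}(I_\delta,J_\delta)$, proving the fixed-point claim. The final membership claim is exactly invariant (i).

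The main obstacle I anticipate is the successor step, and in particular the hidden claim that the stable revision operator $\mathcal{C}_{T_\mathsf{P}}$ preserves $T_{\mathsf{P}}$-reliability and is $\preceq$-monotonic on the reliable pairs. Unfolding the definition, $J^\downarrow$ and $I^\uparrow$ are least fixed points of parametrized operators obtained by freezing one coordinate of $T_{\mathsf{P}}$, and the delicate point is that these parametrized operators are themselves $\leq$-monotonic on the complete lattices $({\cal H}^{\mathsf{ma}}_\mathsf{P},\leq)$ and $({\cal H}^{\mathsf{am}}_\mathsf{P},\leq)$ (so their least fixed points exist by Knaster--Tarski), and that the dependence of these fixed points on the frozen parameter is itself well-behaved with respect to $\preceq$. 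This is exactly where the mild extension of consistent AFT must do its work; once this is granted by Theorem~\ref{central-theorem}, the remainder of the argument is bookkeeping on ordinals.
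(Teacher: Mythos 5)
Your overall strategy is exactly the paper's: Theorem~\ref{iterative_definition_of_wfm} is obtained there purely as an instance of Theorem~\ref{central-theorem} with $A=T_{\mathsf{P}}$, $L_1={\cal H}^{\mathsf{ma}}_\mathsf{P}$ and $L_2={\cal H}^{\mathsf{am}}_\mathsf{P}$, so reducing to that theorem and checking its hypotheses is the right move, and your explicit transfinite induction is then largely redundant bookkeeping. Two points deserve care, though. First, the hypotheses of the abstract framework include not only chain-completeness and $\preceq$-monotonicity but also the Interlattice Lub and Glb Properties relating ${\cal H}^{\mathsf{ma}}_\mathsf{P}$ and ${\cal H}^{\mathsf{am}}_\mathsf{P}$ (Lemma~\ref{interlattice_lub_lemma}); these are what make the intervals $[\perp,J]_{L_1}$ and $[I,\top]_{L_2}$ complete lattices and the frozen operators self-maps of those intervals, so they must be verified explicitly rather than folded into ``the setup is in place.'' Second, and more substantively, your inductive invariant (ii) is too weak: mere $T_{\mathsf{P}}$-reliability of $(I,J)$ does \emph{not} yield $(I,J)\preceq{\cal C}_{T_{\mathsf{P}}}(I,J)$. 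Reliability gives $J^{\downarrow}\le J$ and $I\le I^{\uparrow}\le J$, but the remaining inequality $I\le J^{\downarrow}$ is precisely the extra condition the appendix calls $A$-prudence, and your formulation of the needed fact --- ``$(I,J)\preceq {\cal C}_{T_{\mathsf{P}}}(I,J)$ whenever $(I,J)$ is reliable and $\preceq$-below its ${\cal C}_{T_{\mathsf{P}}}$-image'' --- assumes exactly what has to be proved. The induction closes only with prudence as the invariant: the appendix shows that a prudent pair is $\preceq$-below its image, that the image of a prudent pair is again prudent, and (the analogue of Proposition~3.10 of \cite{DMT04}) that prudence is preserved by suprema of $\preceq$-chains, which is the nontrivial part of the limit step that $\preceq$-monotonicity of $T_{\mathsf{P}}$ alone does not give you. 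Since you defer these facts to Theorem~\ref{central-theorem}, which does supply them, the proposal is acceptable as an instantiation argument; but as a free-standing induction it would not go through with reliability alone.
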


In the following, we will denote with ${\cal M}_{\mathsf{P}}$ the interpretation $\tau^{-1}(I_{\delta},J_{\delta})$.
The following two lemmas demonstrate that the pre-fixpoints of $T_{\mathsf{P}}$ correspond exactly
to the three-valued models of $\mathsf{P}$ (see~\ref{appendix-of-section-6b} \iftplp in the supplementary material \fi for the corresponding proofs).
\begin{lemma}\label{tp-fixpoint-model}
Let $\mathsf{P}$ be a program. If $(I, J)\in {\cal H}^{\mathsf{ma}}_\mathsf{P} \otimes  {\cal H}^{\mathsf{am}}_\mathsf{P}$
is a pre-fixpoint of $T_\mathsf{P}$ then $\tau^{-1}(I, J)$ is a model of $\mathsf{P}$.
\end{lemma}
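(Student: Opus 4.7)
The plan is to unfold the hypothesis $T_\mathsf{P}(I,J) \leq (I,J)$ using the definition of $T_\mathsf{P}$ and the bijection $\tau/\tau^{-1}$, and then read off the model condition clause by clause.

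First, I set $\mathcal{M} = \tau^{-1}(I,J)$. By definition, $T_\mathsf{P}(I,J) = \tau(\Psi_\mathsf{P}(\tau^{-1}(I,J))) = \tau(\Psi_\mathsf{P}(\mathcal{M}))$, while $(I,J) = \tau(\mathcal{M})$ by Proposition~\ref{tau-elim}. So the pre-fixpoint assumption becomes $\tau(\Psi_\mathsf{P}(\mathcal{M})) \leq \tau(\mathcal{M})$.

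Next I transport this inequality back to $\mathcal{H}_\mathsf{P}$. By Proposition~\ref{tau-monotonic}(5), applying $\tau^{-1}$ preserves $\leq$, and by Proposition~\ref{tau-elim} we recover $\Psi_\mathsf{P}(\mathcal{M}) \leq \mathcal{M}$ in $\mathcal{H}_\mathsf{P}$ (both sides are genuine three-valued Herbrand interpretations, and $\leq$ is defined componentwise on predicate constants). Concretely, for every predicate constant $\mathsf{p} : \pi$ of $\mathsf{P}$ this says
\[
\bigvee\nolimits_{\leq_\pi}\{\lsem \mathsf{E} \rsem(\mathcal{M}) \mid (\mathsf{p} \leftarrow_\pi \mathsf{E}) \in \mathsf{P}\} \;\leq_\pi\; \mathcal{M}(\mathsf{p}),
\]
by the definition of $\Psi_\mathsf{P}$.

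The final step is routine: since each element of a set is below its least upper bound, for every clause $\mathsf{p} \leftarrow_\pi \mathsf{E}$ in $\mathsf{P}$ we get $\lsem \mathsf{E} \rsem(\mathcal{M}) \leq_\pi \mathcal{M}(\mathsf{p})$, which is precisely the definition of $\mathcal{M} = \tau^{-1}(I,J)$ being a model of $\mathsf{P}$.

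The only subtlety I expect is making the transfer of the inequality through $\tau^{-1}$ fully rigorous: I need to be sure that the componentwise $\leq$ on pairs in ${\cal H}^{\mathsf{ma}}_\mathsf{P} \otimes {\cal H}^{\mathsf{am}}_\mathsf{P}$ corresponds, predicate-constant by predicate-constant, to the $\leq_\pi$ relation on $\lsem \pi \rsem_{U_\mathsf{P}}^{\mathsf{ma}} \otimes \lsem \pi \rsem_{U_\mathsf{P}}^{\mathsf{am}}$ used in Proposition~\ref{tau-monotonic}, which is immediate from Definitions~\ref{ordering-interpretations} and~\ref{orderings_on_pairs}. Beyond this bookkeeping the proof is a short calculation, with the real work already done in Section~\ref{bijection}.
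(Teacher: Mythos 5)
Your proof is correct and follows essentially the same route as the paper's: unfold $T_\mathsf{P}$, use the monotonicity of $\tau^{-1}$ (Proposition~\ref{tau-monotonic-for-interpretations}, the interpretation-level version of Proposition~\ref{tau-monotonic}) together with Proposition~\ref{tau-elim} to obtain $\Psi_\mathsf{P}(\tau^{-1}(I,J)) \leq \tau^{-1}(I,J)$, and then read off the model condition from the definition of $\Psi_\mathsf{P}$ as a least upper bound. Your last step, noting that each $\lsem \mathsf{E}\rsem(\mathcal{M})$ lies below the lub, is exactly the detail the paper leaves implicit.
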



%
\begin{lemma}\label{is_a_prefixpoint}
Let ${\cal M}\in {\cal H}_{\mathsf{P}}$ be a model of $\mathsf{P}$. Then, $\tau({\cal M})$ is a pre-fixpoint of $T_\mathsf{P}$.
\end{lemma}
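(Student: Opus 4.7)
The plan is to unwind the definitions and then invoke the machinery of Section~\ref{bijection} essentially verbatim, after which the result drops out in one line. The key observation is that ``pre-fixpoint of $T_{\mathsf{P}}$'' means $T_{\mathsf{P}}(\tau(\mathcal{M})) \leq \tau(\mathcal{M})$ (with respect to the truth ordering on ${\cal H}^{\mathsf{ma}}_{\mathsf{P}} \otimes {\cal H}^{\mathsf{am}}_{\mathsf{P}}$), so everything reduces to checking an inequality in the truth ordering and then transporting it across the bijection $\tau$.

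First I would unpack what ``$\mathcal{M}$ is a model'' gives us. By the definition of model, for every clause $\mathsf{p} \leftarrow_\pi \mathsf{E}$ of $\mathsf{P}$ we have $\lsem \mathsf{E} \rsem(\mathcal{M}) \leq_\pi \mathcal{M}(\mathsf{p})$. Since this holds uniformly across all clauses with head $\mathsf{p}$, the value $\mathcal{M}(\mathsf{p})$ is an upper bound of the set $\{\lsem \mathsf{E}\rsem(\mathcal{M}) \mid (\mathsf{p} \leftarrow_\pi \mathsf{E}) \in \mathsf{P}\}$, and hence dominates its join. By the definition of $\Psi_{\mathsf{P}}$ this is precisely the statement $\Psi_{\mathsf{P}}(\mathcal{M})(\mathsf{p}) \leq_\pi \mathcal{M}(\mathsf{p})$, i.e.\ $\Psi_{\mathsf{P}}(\mathcal{M}) \leq \mathcal{M}$ in the pointwise ordering on ${\cal H}_{\mathsf{P}}$ (Definition~\ref{ordering-interpretations}).

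Next I would apply the definition of $T_{\mathsf{P}}$ together with Proposition~\ref{tau-elim}. We compute
\[
T_{\mathsf{P}}(\tau(\mathcal{M})) \;=\; \tau(\Psi_{\mathsf{P}}(\tau^{-1}(\tau(\mathcal{M})))) \;=\; \tau(\Psi_{\mathsf{P}}(\mathcal{M})),
\]
where the second equality uses that $\tau^{-1}\circ\tau$ is the identity on ${\cal H}_{\mathsf{P}}$, which is the lifting of Proposition~\ref{tau-elim} to interpretations (immediate, as $\tau$ acts componentwise on predicate constants). Combining this with the monotonicity of $\tau$ with respect to $\leq$ (Proposition~\ref{tau-monotonic}(3), again lifted pointwise over predicate constants to give monotonicity at the interpretation level) applied to the inequality $\Psi_{\mathsf{P}}(\mathcal{M}) \leq \mathcal{M}$ from the previous step, we obtain
\[
T_{\mathsf{P}}(\tau(\mathcal{M})) \;=\; \tau(\Psi_{\mathsf{P}}(\mathcal{M})) \;\leq\; \tau(\mathcal{M}),
\]
which is exactly the pre-fixpoint condition for $\tau(\mathcal{M})$.

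There is essentially no obstacle: the only two nontrivial ingredients, namely the bijectivity of $\tau$ and its monotonicity with respect to $\leq$, are already established in Section~\ref{bijection}, and the extension of these properties from types to interpretations is purely bookkeeping because both sides of Definition~\ref{ordering-interpretations} and the action of $\tau$ on interpretations are defined predicate-constant-wise. If any subtle point deserves care, it is making explicit that the pre-fixpoint order here is the truth ordering $\leq$ on pairs from Definition~\ref{orderings_on_pairs} (rather than the information ordering $\preceq$ used, e.g., in Lemma~\ref{T_P_is_Fitting_monotonic}); once that is stated, the argument above is complete.
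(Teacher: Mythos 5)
Your proof is correct and follows essentially the same route as the paper's: derive $\Psi_{\mathsf{P}}({\cal M}) \leq {\cal M}$ from the model condition via the upper-bound/join argument, then transport this inequality across $\tau$ using its $\leq$-monotonicity together with $\tau^{-1}(\tau({\cal M})) = {\cal M}$ to obtain $T_{\mathsf{P}}(\tau({\cal M})) = \tau(\Psi_{\mathsf{P}}({\cal M})) \leq \tau({\cal M})$. The only cosmetic difference is that the paper invokes its interpretation-level Proposition~\ref{tau-monotonic-for-interpretations} directly, whereas you cite the type-level Propositions~\ref{tau-monotonic} and~\ref{tau-elim} and note the pointwise lifting over predicate constants, which amounts to the same thing.
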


Finally, the following two lemmas (see~\ref{appendix-of-section-6b} \iftplp in the supplementary material \fi for the proofs), provide
evidence that ${\cal M}_{\mathsf{P}}$ is an extension of the classical well-founded semantics
to the higher-order case:
\begin{theorem}\label{minimal-model}
Let $\mathsf{P}$ be a program. Then, ${\cal M}_{\mathsf{P}}$ is a $\leq$-minimal model of $\mathsf{P}$.
\end{theorem}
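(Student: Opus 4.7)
The plan is to verify two things: first, that ${\cal M}_{\mathsf{P}}$ is a model of $\mathsf{P}$, and second, that no strictly smaller model lies below it in the truth ordering. For the first point, Theorem~\ref{iterative_definition_of_wfm} guarantees that $(I_{\delta},J_{\delta})$ is a fixpoint of ${\cal C}_{T_\mathsf{P}}$, i.e.\ $I_{\delta} = \textit{lfp}([T_\mathsf{P}(\cdot,J_{\delta})]_1)$ and $J_{\delta} = \textit{lfp}([T_\mathsf{P}(I_{\delta},\cdot)]_2)$. Since every least fixpoint is in particular a fixpoint of its defining operator, we obtain $T_\mathsf{P}(I_{\delta},J_{\delta}) = (I_{\delta},J_{\delta})$, so $(I_{\delta},J_{\delta})$ is a pre-fixpoint of $T_\mathsf{P}$. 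Lemma~\ref{tp-fixpoint-model} then delivers that ${\cal M}_{\mathsf{P}} = \tau^{-1}(I_{\delta},J_{\delta})$ is a model of $\mathsf{P}$.

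For $\leq$-minimality, let ${\cal N}$ be any model of $\mathsf{P}$ with ${\cal N} \leq {\cal M}_{\mathsf{P}}$ and set $(I',J') = \tau({\cal N})$. Lemma~\ref{is_a_prefixpoint} gives that $(I',J')$ is a pre-fixpoint of $T_\mathsf{P}$, while the $\leq$-monotonicity of $\tau$ (Proposition~\ref{tau-monotonic}, lifted pointwise to interpretations) yields $I' \leq I_{\delta}$ and $J' \leq J_{\delta}$. Moreover, because the iteration of Theorem~\ref{iterative_definition_of_wfm} is $\preceq$-increasing, the second coordinates $J_{\lambda}$ are $\leq$-decreasing, so $J' \leq J_{\delta} \leq J_{\lambda}$ for every $\lambda \leq \delta$.

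The heart of the argument is a transfinite induction on $\lambda \leq \delta$ showing $I_{\lambda} \leq I'$. The base case $I_0 = \perp \leq I'$ is immediate, and the limit case follows from the observation (implicit in Definition~\ref{orderings_on_pairs}) that the first coordinate of a $\preceq$-supremum coincides with the $\leq$-supremum of the first coordinates. For the successor step, $J' \leq J_{\lambda}$ gives $(I',J_{\lambda}) \preceq (I',J')$; Fitting-monotonicity of $T_\mathsf{P}$ (Lemma~\ref{T_P_is_Fitting_monotonic}) combined with the pre-fixpoint property of $(I',J')$ then yields
\[
[T_\mathsf{P}(I',J_{\lambda})]_1 \;\leq\; [T_\mathsf{P}(I',J')]_1 \;\leq\; I',
\]
so $I'$ is a pre-fixpoint of $[T_\mathsf{P}(\cdot,J_{\lambda})]_1$, whence $I_{\lambda+1} = \textit{lfp}([T_\mathsf{P}(\cdot,J_{\lambda})]_1) \leq I'$. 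Instantiating at $\lambda = \delta$ and combining with $I' \leq I_{\delta}$ forces $I' = I_{\delta}$. Substituting this back, $(I_{\delta},J')$ is a pre-fixpoint of $T_\mathsf{P}$, so $J'$ is a pre-fixpoint of $[T_\mathsf{P}(I_{\delta},\cdot)]_2$, whose least fixpoint is $J_{\delta}$; hence $J_{\delta} \leq J'$, forcing $J' = J_{\delta}$. Proposition~\ref{tau-elim} then gives ${\cal N} = \tau^{-1}(I',J') = \tau^{-1}(I_{\delta},J_{\delta}) = {\cal M}_{\mathsf{P}}$.

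The main subtlety I anticipate is bookkeeping of the two orderings: a hypothesis $J' \leq J_{\lambda}$ in the truth ordering must be rewritten as $(I',J_{\lambda}) \preceq (I',J')$ in the Fitting ordering before Lemma~\ref{T_P_is_Fitting_monotonic} applies, and only after projecting out the first coordinate does one return to $\leq$. Everything else amounts to checking that the $\tau,\tau^{-1}$ bijection commutes with the operations involved, which is direct from Propositions~\ref{tau-monotonic} and~\ref{tau-elim}.
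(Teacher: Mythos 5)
Your proof is correct, but it takes a different route from the paper's. The paper's argument is a short modular reduction: it invokes Theorem~\ref{minimal-fixpoint} (stated in~\ref{appendix-of-section-6a} without proof, as a generalization of Theorem 19 of~\cite{DMT00}) to conclude that the fixpoint $(I_{\delta},J_{\delta})$ of ${\cal C}_{T_{\mathsf{P}}}$ is a $\leq$-minimal pre-fixpoint of $T_{\mathsf{P}}$, and then applies a separate transfer lemma (Lemma~\ref{minimal_prefixpoint}, built from Lemmas~\ref{tp-fixpoint-model} and~\ref{is_a_prefixpoint}) to carry $\leq$-minimal pre-fixpoints of $T_{\mathsf{P}}$ over to $\leq$-minimal models via $\tau^{-1}$. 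You instead inline the content of Theorem~\ref{minimal-fixpoint} for the particular fixpoint at hand: a transfinite induction along the approximation sequence showing $I_{\lambda}\leq I'$ for every pre-fixpoint $(I',J')$ below $(I_{\delta},J_{\delta})$, followed by the symmetric argument for the second coordinate. Your bookkeeping is sound — the passage from $J'\leq J_{\lambda}$ to $(I',J_{\lambda})\preceq(I',J')$, the use of Lemma~\ref{T_P_is_Fitting_monotonic}, and the Knaster--Tarski step ``least fixpoint below every pre-fixpoint'' are all legitimate (one could note that the successor case does not actually need the induction hypothesis; only the limit case does). What the paper's route buys is generality and reuse: Theorem~\ref{minimal-fixpoint} applies to \emph{every} fixpoint of ${\cal C}_{T_{\mathsf{P}}}$ (i.e., every stable fixpoint), a fact the paper exploits again in Section~\ref{conclusions}. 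What your route buys is self-containedness: it supplies an explicit proof of exactly the minimality fact needed here, which the paper leaves to an external reference, at the cost of applying only to the least fixpoint constructed from $(\perp,\top)$.
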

\begin{theorem}\label{backwards-compatible}
For every propositional program $\mathsf{P}$, ${\cal M}_{\mathsf{P}}$ coincides with
the well-founded model of $\mathsf{P}$.
\end{theorem}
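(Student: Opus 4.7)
The plan is to show that, when $\mathsf{P}$ is restricted to be propositional, our entire construction reduces verbatim to the characterisation of the well-founded model via consistent approximation fixpoint theory given in~\cite{DMT04}. Once this identification is established, the theorem follows from the main result of~\cite{DMT04}, namely that the $\preceq$-least fixpoint of the stable revision operator applied to the Fitting approximator of a normal propositional program is its well-founded model.

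For a propositional $\mathsf{P}$, every predicate constant has type $o$, so a Herbrand interpretation is nothing but a map from the Herbrand base of $\mathsf{P}$ into $\{\mathit{false},\mathit{undef},\mathit{true}\}$, and the bijection $\tau$ reduces to the componentwise application of $\tau_o$. Under this bijection, ${\cal H}_\mathsf{P}$ corresponds precisely to pairs $(I,J)$ of two-valued propositional interpretations with $I\leq J$, which is the standard pair-encoding of three-valued interpretations used in~\cite{DMT04}. Moreover, the connectives of Definition~\ref{definition-semantics-of-expressions} specialise to Kleene's strong three-valued connectives (join and meet in the truth ordering $\mathit{false}<_o\mathit{undef}<_o\mathit{true}$, together with the evident three-valued negation), so $\Psi_\mathsf{P}$ coincides with the Fitting three-valued immediate consequence operator of $\mathsf{P}$. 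A structural induction on propositional expressions $\mathsf{E}$ of type $o$ then yields that $T_\mathsf{P}=\tau\circ\Psi_\mathsf{P}\circ\tau^{-1}$ is exactly the standard Fitting approximator: for each clause body $\mathsf{E}$, the first component of $\tau_o(\lsem\mathsf{E}\rsem(\tau^{-1}(I,J)))$ equals the classical two-valued value of $\mathsf{E}$ obtained by reading positive atoms through $I$ and negated atoms through $J$, and symmetrically for the second component. Consequently ${\cal C}_{T_\mathsf{P}}$ is the stable revision operator of~\cite{DMT04}, the transfinite sequence of Theorem~\ref{iterative_definition_of_wfm} is the DMT04 iterative construction of the well-founded fixpoint, and hence ${\cal M}_\mathsf{P}=\tau^{-1}(I_\delta,J_\delta)$ is the well-founded model of $\mathsf{P}$.

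The main obstacle is the structural induction identifying $T_\mathsf{P}$ with the Fitting approximator. The base cases (propositional constants $\mathsf{p}$ of type $o$, and the constants $\ctrue$, $\cfalse$) are immediate from the definition of $\tau_o$. The inductive steps reduce to checking three tiny tables: $\tau_o$ sends Kleene conjunction and disjunction on $\{\mathit{false},\mathit{undef},\mathit{true}\}$ to componentwise conjunction and disjunction on pairs in $\{\mathit{false},\mathit{true}\}\otimes\{\mathit{false},\mathit{true}\}$, while sending Kleene negation to the swap-and-complement map $(x,y)\mapsto(\neg y,\neg x)$ — which are precisely the pair-operations on which the DMT04 approximator is built. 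With this lemma in hand, the componentwise match between ${\cal C}_{T_\mathsf{P}}$ and the DMT04 stable revision operator is immediate, and the theorem follows from their identification of the well-founded fixpoint with the well-founded model.
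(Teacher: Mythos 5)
Your proposal is correct and follows essentially the same route as the paper: the paper's own proof is just a terse appeal to Section~6 of~\cite{DMT04}, asserting that ``by a careful inspection'' the propositional construction there is a special case of the one developed here. You have in fact carried out that inspection explicitly (collapse of the type hierarchy to type $o$, the isomorphism $\tau_o$ between the three-valued chain and consistent pairs, and the identification of $T_\mathsf{P}$ with the Fitting approximator via the connective tables), so your argument is, if anything, more complete than the one printed in the paper.
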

In~\ref{appendix-of-section-6c} \iftplp in the supplementary material \fi we give an example construction of ${\cal M}_{\mathsf{P}}$ for a
given program $\mathsf{P}$.



\section{Related and Future Work}\label{conclusions}
In this section we compare our technique with the existing proposals
for assigning semantics to higher-order logic programs with negation
and we discuss possibly fruitful directions for future research.

The proposed extensional three-valued approach has important differences from the
existing alternative ones, namely~\cite{CharalambidisER14},~\cite{RS16} and~\cite{RondogiannisS17}.
As already mentioned in the introduction section, the technique in~\cite{RondogiannisS17}
is not extensional in the general case (it is however extensional if the source higher-order
programs are {\em stratified} - see~\cite{RondogiannisS17} for the formal definition of this notion).
In this respect, the present approach is more general since it assigns an extensional semantics to {\em all}
the programs of ${\cal HOL}$.

On the other hand, both of the techniques~\cite{CharalambidisER14} and~\cite{RS16} rely on an infinite-valued
logic, and give a very fine-grained semantics to programs. This fine-grained nature of the
infinite-valued approach makes it very appealing from a mathematical point of view.
As it was recently demonstrated in~\cite{Esik15,CarayolE16}, in the case of first-order logic
programs the infinite-valued approach satisfies all identities of {\em iteration theories}~\cite{BloomE93},
while the well-founded semantics does not. Since iteration theories provide an
abstract framework for the evaluation of the merits of various semantic approaches for
languages that involve recursion, these results appear to suggest that the infinite-valued
approach has advantages from a mathematical point of view. On the other hand, the well-founded
semantics is based on a much simpler three-valued logic, it is widely known
to the logic programming community, and it has been studied and used for almost
three decades. It is important however to emphasize that the differences between
the infinite-valued and the well-founded approaches are not only a matter of
mathematical elegance. In many programs, the two techniques behave differently.
For example, given the program:
\[
\begin{array}{l}
\mbox{\tt p $\leftarrow$ $\mysim$ ($\mysim$ p)}
\end{array}
\]
the approaches in~\cite{CharalambidisER14} and~\cite{RS16} will produce the
model $\{({\tt p},\textit{undef})\}$, while our present approach will produce the model
$\{({\tt p},\textit{false})\}$. In essence, our present approach
{\em cancels such nested negations} (see also the discussion in~\cite{DBV12}[page 185, Example 1] on this issue),
while the approaches in~\cite{CharalambidisER14} and~\cite{RS16} assign the value $\textit{undef}$
due to the circular dependence of {\tt p} on itself through negation.

Similarly, for the following program (taken from~\cite{RondogiannisS17}):
\[
\begin{array}{l}
\mbox{\tt s  $\leftarrow$ $\lambda$Q.Q (s Q)}\\
\mbox{\tt p  $\leftarrow$ $\lambda$R.R}\\
\mbox{\tt q  $\leftarrow$ $\lambda$R.$\mysim$ (w R)}\\
\mbox{\tt w  $\leftarrow$ $\lambda$R.($\mysim$ R)}
\end{array}
\]
the infinite-valued approaches will return the value $\textit{false}$ for the query
{\tt (s p)} and $\textit{undef}$ for {\tt (s q)}, while our present approach will
return the value $\textit{false}$ for both queries.

It is an interesting topic for future research to identify large classes of programs
where the infinite-valued approach and the present one coincide. Possibly a good candidate
for such a comparison would be the class of stratified higher-order logic programs~\cite{RS16}.
More generally, we believe that an investigation of the connections between the
well-founded semantics and the infinite-valued one, will be quite rewarding.

Another interesting direction for future research would be to consider other
possible semantics that can be revealed using approximation fixpoint theory.
It is well-known that for first-order logic programs, approximation fixpoint theory
can be used in order to define other useful fixpoints such as {\em stable},
{\em Kripke-Kleene}, and {\em supported} ones. We argue that using the approach
proposed in this paper, this can also be done for higher-order logic programs.
In particular, as in the first-order case, the fixpoints of $T_{\mathsf{P}}$
correspond to {\em $3$-valued supported models of} $\mathsf{P}$ (recall that by Lemma~\ref{tp-fixpoint-model}
every fixpoint of $T_{\mathsf{P}}$ is a model of $\mathsf{P}$). Moreover,
since $T_{\mathsf{P}}$ is Fitting-monotonic over
${\cal H}^{\mathsf{ma}}_\mathsf{P} \otimes  {\cal H}^{\mathsf{am}}_\mathsf{P}$
(which by Proposition~\ref{interpretation_pairs_complete_lattice_cpo} is a
chain-complete poset), it has a least fixpoint which we can take as the
{\em Kripke-Kleene fixpoint of} $T_{\mathsf{P}}$. Finally, as in the case of
first-order logic programs, the set of all fixpoints of ${\cal C}_{T_{\mathsf{P}}}$
is the set of {\em stable fixpoints of} $T_{\mathsf{P}}$, and can be taken as
the $3$-valued stable models of $\mathsf{P}$ (by Theorem~\ref{minimal-fixpoint}
in~\ref{appendix-of-section-6a}\iftplp\ in the supplementary material corresponding to this paper at the TPLP archives\fi, every fixpoint of ${\cal C}_{T_{\mathsf{P}}}$
is also a fixpoint of $T_{\mathsf{P}}$ and therefore a model of $\mathsf{P}$).

In contrast to the above $3$-valued semantics, the definition of {\em $2$-valued stable models}
for higher-order logic programs seems less direct to obtain.
In the case of first-order logic programs, the $2$-valued stable models are those
fixpoints of ${\cal C}_{T_{\mathsf{P}}}$ that are {\em exact}~\cite{DMT00,DMT04},
i.e., that are of the form $(I,I)$. In the higher-order case however, things are not
that simple. Consider for example the positive higher-order logic program consisting
only of the rule {\tt p(R) $\leftarrow$ R}, where {\tt p} is of type $o\rightarrow o$.
Since this is a positive program, it is reasonable to assume that it has a unique
2-valued stable model which assigns to {\tt p} the identity relation over the
set of classical two truth values. The meaning of this program under the semantics
proposed in the present paper is captured by the pair of interpretations $(I,J)$ where:
$I({\tt p})(\mathit{false},\mathit{false})= \mathit{false}$,
$I({\tt p})(\mathit{true},\mathit{true})= \mathit{true}$,
$I({\tt p})(\mathit{false},\mathit{true})= \mathit{false}$, and
$J({\tt p})(\mathit{false},\mathit{false})= \mathit{false}$,
$J({\tt p})(\mathit{true},\mathit{true})= \mathit{true}$,
$J({\tt p})(\mathit{false},\mathit{true})= \mathit{true}$.
Notice that $I\neq J$ and this is due to the fact that $I$ and
$J$ are $3$-valued interpretations and not $2$-valued ones as
in the first-order case. In other words, under our semantics there
does not exist an exact pair of interpretations that is a fixpoint
of ${\cal C}_{T_{\mathsf{P}}}$ which we could take as the 2-valued
stable semantics of the program. What needs to be done here is to generalize
the notion of ``{\em exact pair of interpretations''}. Informally
speaking, a pair $(I,J)$ of Herbrand interpretations of $\mathsf{P}$
will be called exact if for every predicate constant $\mathsf{p}$
of the program, $I(\mathsf{p})$ coincides with $J(\mathsf{p})$
when they are applied to arguments that are {\em essentially $2$-valued}
(we need to define inductively for all types what it means for a relation
to be essentially $2$-valued). Notice that $I({\tt p})$ agrees with
$J({\tt p})$ when applied to $2$-valued arguments, i.e., when applied
to $(\mathit{true},\mathit{true})$ and $(\mathit{false},\mathit{false})$.
We believe that the approach sketched above leads to a characterization
of the $2$-valued stable models, but the details need to be carefully
examined and specified.

In this paper we have claimed that the proposed approach is an appealing
formulation for capturing the well-founded semantics for higher-order
logic programs with negation. We have substantiated our claim by demonstrating
that the proposed semantics generalizes the well-founded one for propositional
programs. As suggested by one of the reviewers, this claim would be stronger
if one could define alternative semantics that lead to the same model.
One such approach would be to extend the original definition of the
well-founded semantics~\cite{GelderRS91} which was based on the notion of
{\em unfounded sets}. Another promising direction would be to derive an
extension of Przymusinski's {\em iterated least fixpoint construction}~\cite{Prz89}
to the higher-order case. Both of these directions seem quite fruitful
and non-trivial, and certainly require further investigation.

\bibliographystyle{acmtrans}
\bibliography{iclp2018}

\begin{thebibliography}{}

\bibitem[\protect\citeauthoryear{Bezem}{Bezem}{1999}]{Bezem99}
{\sc Bezem, M.} 1999.
\newblock Extensionality of simply typed logic programs.
\newblock In {\em Logic Programming: The 1999 International Conference, Las
  Cruces, New Mexico, USA, November 29 - December 4, 1999}, {D.~D. Schreye},
  Ed. {MIT} Press, 395--410.

\bibitem[\protect\citeauthoryear{Bloom and {\'{E}}sik}{Bloom and
  {\'{E}}sik}{1993}]{BloomE93}
{\sc Bloom, S.~L.} {\sc and} {\sc {\'{E}}sik, Z.} 1993.
\newblock {\em Iteration Theories - The Equational Logic of Iterative
  Processes}.
\newblock {EATCS} Monographs on Theoretical Computer Science. Springer.

\bibitem[\protect\citeauthoryear{Carayol and {\'{E}}sik}{Carayol and
  {\'{E}}sik}{2016}]{CarayolE16}
{\sc Carayol, A.} {\sc and} {\sc {\'{E}}sik, Z.} 2016.
\newblock An analysis of the equational properties of the well-founded fixed
  point.
\newblock In {\em Principles of Knowledge Representation and Reasoning:
  Proceedings of the Fifteenth International Conference, {KR} 2016, Cape Town,
  South Africa, April 25-29, 2016.}, {C.~Baral}, {J.~P. Delgrande}, {and}
  {F.~Wolter}, Eds. {AAAI} Press, 533--536.

\bibitem[\protect\citeauthoryear{Charalambidis, {\'{E}}sik, and
  Rondogiannis}{Charalambidis et~al\mbox{.}}{2014}]{CharalambidisER14}
{\sc Charalambidis, A.}, {\sc {\'{E}}sik, Z.}, {\sc and} {\sc Rondogiannis, P.}
  2014.
\newblock Minimum model semantics for extensional higher-order logic
  programming with negation.
\newblock {\em {TPLP}\/}~{\em 14,\/}~4-5, 725--737.

\bibitem[\protect\citeauthoryear{Charalambidis, Handjopoulos, Rondogiannis, and
  Wadge}{Charalambidis et~al\mbox{.}}{2013}]{CharalambidisHRW13}
{\sc Charalambidis, A.}, {\sc Handjopoulos, K.}, {\sc Rondogiannis, P.}, {\sc
  and} {\sc Wadge, W.~W.} 2013.
\newblock Extensional higher-order logic programming.
\newblock {\em {ACM} Trans. Comput. Log.\/}~{\em 14,\/}~3, 21.

\bibitem[\protect\citeauthoryear{Charalambidis, Rondogiannis, and
  Symeonidou}{Charalambidis et~al\mbox{.}}{2017}]{CharalambidisRS17}
{\sc Charalambidis, A.}, {\sc Rondogiannis, P.}, {\sc and} {\sc Symeonidou, I.}
  2017.
\newblock Equivalence of two fixed-point semantics for definitional
  higher-order logic programs.
\newblock {\em Theor. Comput. Sci.\/}~{\em 668}, 27--42.

\bibitem[\protect\citeauthoryear{Davey and Priestley}{Davey and
  Priestley}{2002}]{DP02}
{\sc Davey, B.~A.} {\sc and} {\sc Priestley, H.~A.} 2002.
\newblock {\em Introduction to Lattices and Order}.
\newblock Cambridge University Press.

\bibitem[\protect\citeauthoryear{Denecker, Bruynooghe, and Vennekens}{Denecker
  et~al\mbox{.}}{2012}]{DBV12}
{\sc Denecker, M.}, {\sc Bruynooghe, M.}, {\sc and} {\sc Vennekens, J.} 2012.
\newblock Approximation fixpoint theory and the semantics of logic and answers
  set programs.
\newblock In {\em Correct Reasoning - Essays on Logic-Based {AI} in Honour of
  Vladimir Lifschitz}, {E.~Erdem}, {J.~Lee}, {Y.~Lierler}, {and} {D.~Pearce},
  Eds. Lecture Notes in Computer Science, vol. 7265. Springer, 178--194.

\bibitem[\protect\citeauthoryear{Denecker, Marek, and
  Truszczy{\'{n}}ski}{Denecker et~al\mbox{.}}{2000}]{DMT00}
{\sc Denecker, M.}, {\sc Marek, V.}, {\sc and} {\sc Truszczy{\'{n}}ski, M.}
  2000.
\newblock {\em Approximations, Stable Operators, Well-Founded Fixpoints and
  Applications in Nonmonotonic Reasoning}.
\newblock In: Logic-Based Artificial Intelligence. The Kluwer International
  Series in Engineering and Computer Science. Kluwer Academic Publishers,
  Boston, MA, 127--144.

\bibitem[\protect\citeauthoryear{Denecker, Marek, and Truszczynski}{Denecker
  et~al\mbox{.}}{2004}]{DMT04}
{\sc Denecker, M.}, {\sc Marek, V.~W.}, {\sc and} {\sc Truszczynski, M.} 2004.
\newblock Ultimate approximation and its application in nonmonotonic knowledge
  representation systems.
\newblock {\em Inf. Comput.\/}~{\em 192,\/}~1, 84--121.

\bibitem[\protect\citeauthoryear{{\'{E}}sik}{{\'{E}}sik}{2015}]{Esik15}
{\sc {\'{E}}sik, Z.} 2015.
\newblock Equational properties of stratified least fixed points (extended
  abstract).
\newblock In {\em Logic, Language, Information, and Computation - 22nd
  International Workshop, WoLLIC 2015, Bloomington, IN, USA, July 20-23, 2015,
  Proceedings}, {V.~de~Paiva}, {R.~J. G.~B. de~Queiroz}, {L.~S. Moss},
  {D.~Leivant}, {and} {A.~G. de~Oliveira}, Eds. Lecture Notes in Computer
  Science, vol. 9160. Springer, 174--188.

\bibitem[\protect\citeauthoryear{{\'{E}}sik and Rondogiannis}{{\'{E}}sik and
  Rondogiannis}{2015}]{ER15}
{\sc {\'{E}}sik, Z.} {\sc and} {\sc Rondogiannis, P.} 2015.
\newblock A fixed point theorem for non-monotonic functions.
\newblock {\em Theoretical Computer Science\/}~{\em 574}, 18--38.

\bibitem[\protect\citeauthoryear{Fitting}{Fitting}{2002}]{Fitting}
{\sc Fitting, M.} 2002.
\newblock Fixpoint semantics for logic programming: a survey.
\newblock {\em Theor. Comput. Sci.\/}~{\em 278,\/}~1-2, 25--51.

\bibitem[\protect\citeauthoryear{Gelder, Ross, and Schlipf}{Gelder
  et~al\mbox{.}}{1991}]{GelderRS91}
{\sc Gelder, A.~V.}, {\sc Ross, K.~A.}, {\sc and} {\sc Schlipf, J.~S.} 1991.
\newblock The well-founded semantics for general logic programs.
\newblock {\em J. {ACM}\/}~{\em 38,\/}~3, 620--650.

\bibitem[\protect\citeauthoryear{Przymusinski}{Przymusinski}{1989}]{Prz89}
{\sc Przymusinski, T.~C.} 1989.
\newblock Every logic program has a natural stratification and an iterated
  least fixed point model.
\newblock In {\em Proceedings of the Eighth {ACM} {SIGACT-SIGMOD-SIGART}
  Symposium on Principles of Database Systems, March 29-31, 1989, Philadelphia,
  Pennsylvania, {USA}}. 11--21.

\bibitem[\protect\citeauthoryear{Rondogiannis and Symeonidou}{Rondogiannis and
  Symeonidou}{2016}]{RS16}
{\sc Rondogiannis, P.} {\sc and} {\sc Symeonidou, I.} 2016.
\newblock Extensional semantics for higher-order logic programs with negation.
\newblock In {\em Logics in Artificial Intelligence - 15th European Conference,
  {JELIA} 2016, Larnaca, Cyprus, November 9-11, 2016, Proceedings},
  {L.~Michael} {and} {A.~C. Kakas}, Eds. Lecture Notes in Computer Science,
  vol. 10021. 447--462.

\bibitem[\protect\citeauthoryear{Rondogiannis and Symeonidou}{Rondogiannis and
  Symeonidou}{2017}]{RondogiannisS17}
{\sc Rondogiannis, P.} {\sc and} {\sc Symeonidou, I.} 2017.
\newblock The intricacies of three-valued extensional semantics for
  higher-order logic programs.
\newblock {\em {TPLP}\/}~{\em 17,\/}~5-6, 974--991.

\bibitem[\protect\citeauthoryear{Tennent}{Tennent}{1991}]{Tennent}
{\sc Tennent, R.~D.} 1991.
\newblock {\em Semantics of programming languages}.
\newblock Prentice Hall International Series in Computer Science. Prentice
  Hall.

\bibitem[\protect\citeauthoryear{van Emden and Kowalski}{van Emden and
  Kowalski}{1976}]{EmdenK76}
{\sc van Emden, M.~H.} {\sc and} {\sc Kowalski, R.~A.} 1976.
\newblock The semantics of predicate logic as a programming language.
\newblock {\em J. {ACM}\/}~{\em 23,\/}~4, 733--742.

\bibitem[\protect\citeauthoryear{Wadge}{Wadge}{1991}]{Wa91a}
{\sc Wadge, W.~W.} 1991.
\newblock Higher-order horn logic programming.
\newblock In {\em Logic Programming, Proceedings of the 1991 International
  Symposium, San Diego, California, USA, Oct. 28 - Nov 1, 1991}, {V.~A.
  Saraswat} {and} {K.~Ueda}, Eds. {MIT} Press, 289--303.

\end{thebibliography}

\label{lastpage}
\iftplp
\pagenumbering{arabic}
\fi
\appendix

\iftplp
  \def\b@at{\begin{author@tabular}[t]{@{}c@{}}}

{\centering \sloppy
{\normalfont\LARGE\itshape {\large\textnormal{Online appendix for the paper}}   \\
Approximation Fixpoint Theory and the Well-Founded Semantics of Higher-Order Logic Programs
\\
{\large\textnormal{published in Theory and Practice of Logic Programming}}\par}%
\vspace{16pt}
{\normalfont\normalsize\rmfamily
    \b@at
    {ANGELOS CHARALAMBIDIS}
    \end{author@tabular}\par
    Institute of Informatics and Telecommunications, NCSR ``Demokritos'', Greece%
    \par
    \vspace{10pt}
    \b@at
      {PANOS RONDOGIANNIS, IOANNA SYMEONIDOU }
    \end{author@tabular}\par
    Department of Informatics and Telecommunications, University of Athens, Greece%
   \par}
  }%

\thispagestyle{myheadings}
\pagestyle{myheadings}
\fi

\section{$\!\!\!$: Mathematical Preliminaries and Proofs of Section~\ref{semantics_of_language}}\label{appendix-of-section-4}
A partially ordered set (or poset) $(L,\leq)$ is called a {\em lattice} if for all $x,y\in L$
there exists a least upper bound and a greatest lower bound. A lattice $(L,\leq)$ is called
{\em complete} if for all $S\subseteq L$, there exists a least upper bound and a
greatest lower bound, denoted by $\bigvee S$ and $\bigwedge S$ respectively.
Every complete lattice has a least element and a greatest element, denoted by
$\perp$ and $\top$ respectively. We will use the following two convenient equivalent
definitions of complete lattices~\cite[Theorem 2.31, page 47]{DP02}:
\begin{theorem}\label{alternative_lattice_definition}
A partially ordered set $(L,\leq)$ is a complete lattice if $L$ has a
least element and every non-empty subset $S\subseteq L$ has a least
upper bound in $L$. Alternatively, $(L,\leq)$ is a complete lattice if $L$ has a
greatest element and every non-empty subset $S\subseteq L$ has a greatest
lower bound in $L$.
\end{theorem}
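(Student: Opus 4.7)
The plan is to establish the first characterization (the second being its order-theoretic dual, obtained by inverting $\leq$) by showing that, under the stated hypotheses, arbitrary subsets of $L$ admit both joins and meets. I would begin by handling joins, which is the easy half: for a non-empty $S \subseteq L$, the join exists by hypothesis, and for the empty set one has $\bigvee \emptyset = \bot$, since the least element of $L$ (which exists by assumption) is, vacuously, the least upper bound of no elements at all.

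The substantive step is manufacturing meets out of joins. For an arbitrary $S \subseteq L$, I would introduce the set of its lower bounds $S^{\ell} = \{x \in L : x \leq s \text{ for every } s \in S\}$ and observe that $\bot \in S^{\ell}$, so $S^{\ell}$ is non-empty. By hypothesis, $m := \bigvee S^{\ell}$ then exists in $L$. The plan is to verify that $m = \bigwedge S$ by checking the two defining properties: (i) $m$ is a lower bound of $S$, because each $s \in S$ is itself an upper bound of $S^{\ell}$ (every element of $S^{\ell}$ lies below $s$), hence $m \leq s$ by the defining property of least upper bounds; (ii) $m$ dominates every lower bound of $S$, which is immediate since any such lower bound lies in $S^{\ell}$ and is therefore $\leq m$. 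This argument even covers $S = \emptyset$: every element of $L$ is a lower bound of $\emptyset$, so $S^{\ell} = L$, and $\bigwedge \emptyset = \bigvee L = \top$ exists, giving $L$ a greatest element as a free byproduct.

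With both $\bigvee T$ and $\bigwedge T$ available for every $T \subseteq L$, the definition of a complete lattice is fulfilled. For the alternative formulation, I would invoke duality: the hypotheses "greatest element plus non-empty meets" become "least element plus non-empty joins" in the reverse order $(L, \geq)$, so the previous argument applies to $(L, \geq)$ and its conclusion (that $(L, \geq)$ is a complete lattice) transfers back to $(L, \leq)$ by symmetry of the definition of completeness.

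The main obstacle — really the only non-routine point — is recognizing the meet-from-join construction via the set of lower bounds, together with the subtle use of the least element $\bot$ to ensure that $S^{\ell}$ is non-empty so that the hypothesis on non-empty joins actually applies. Once that is in place, the rest is bookkeeping.
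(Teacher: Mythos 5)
Your proof is correct. The paper does not actually prove this statement---it imports it verbatim as Theorem 2.31 of Davey and Priestley \cite{DP02}---and your argument (obtaining arbitrary meets as the join of the set of lower bounds, using the least element to guarantee that this set is non-empty so the hypothesis on non-empty joins applies, handling the empty set separately, and dispatching the second characterization by duality) is precisely the standard proof of that cited result, so it is complete and sound with nothing further to reconcile against the paper.
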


Given a partially ordered set $(L,\leq)$, every linearly ordered subset $S$
of $L$ will be called a {\em chain}. A partially ordered set is {\em chain-complete}
if it has a least element $\perp$ and every chain $S\subseteq L$ has a least
upper bound.

\begingroup
\def\theproposition{\ref{semantics_of_types_lattice_cpo}}
\begin{proposition}
Let $D$ be a nonempty set. For every predicate type $\pi$, $(\lsem \pi \rsem_D, \leq_\pi)$
is a complete lattice and $(\lsem \pi \rsem_D, \preceq_\pi)$ is a chain complete poset.
\end{proposition}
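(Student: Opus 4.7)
The plan is to prove both statements simultaneously by structural induction on the predicate type $\pi$. The base case $\pi = o$ is immediate: $(\{\mfalse,\mundef,\mtrue\}, \leq_o)$ is a three-element chain, hence a complete lattice, and $(\{\mfalse,\mundef,\mtrue\}, \preceq_o)$ is a finite poset with bottom $\mundef$, hence trivially chain-complete. For the case $\pi = \iota \rightarrow \pi'$, the carrier is the full function space $D \rightarrow \lsem \pi' \rsem_D$ with pointwise orderings, so arbitrary $\leq_{\pi'}$-joins and $\preceq_{\pi'}$-chain-joins (which exist by the inductive hypothesis on $\pi'$) can be formed componentwise, yielding the required structure.

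The delicate case is $\pi = \pi_1 \rightarrow \pi_2$, where the carrier is restricted to the $\preceq$-monotonic functions. For $\preceq_{\pi_1 \rightarrow \pi_2}$-chain-completeness I would take a chain $S$ and set $f(d) = \bigvee_{\preceq_{\pi_2}}\{g(d) : g \in S\}$; the set $\{g(d) : g \in S\}$ is a chain in $\lsem \pi_2 \rsem_D$ so the lub exists by the inductive hypothesis, and for $d_1 \preceq_{\pi_1} d_2$ one has $g(d_1) \preceq_{\pi_2} g(d_2) \preceq_{\pi_2} f(d_2)$ for every $g \in S$, making $f(d_2)$ an upper bound of $\{g(d_1) : g \in S\}$, hence $f(d_1) \preceq_{\pi_2} f(d_2)$. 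The hard part is $\leq_{\pi_1 \rightarrow \pi_2}$-completeness: given $S$, the natural candidate $f(d) = \bigvee_{\leq_{\pi_2}}\{g(d) : g \in S\}$ exists by the inductive hypothesis, but its $\preceq$-monotonicity is not automatic. Concretely, one needs that $d_1 \preceq_{\pi_1} d_2$ implies $\bigvee_{\leq_{\pi_2}}\{g(d_1) : g \in S\} \preceq_{\pi_2} \bigvee_{\leq_{\pi_2}}\{g(d_2) : g \in S\}$, knowing only the pointwise relations $g(d_1) \preceq_{\pi_2} g(d_2)$.

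I would isolate this as an auxiliary \emph{interlace} lemma, proved by a parallel induction on $\pi$: for every indexed family $\{(a_i,b_i)\}_i$ in $\lsem \pi \rsem_D \times \lsem \pi \rsem_D$ with $a_i \preceq_\pi b_i$ for all $i$, the pointwise relation lifts to $\bigvee_{\leq_\pi}\{a_i\} \preceq_\pi \bigvee_{\leq_\pi}\{b_i\}$. At type $o$ this is a finite case analysis exploiting the fact that $a_i \preceq_o b_i$ forces $a_i = \mundef$ or $a_i = b_i$, so the interesting subcase is when all $a_i$ are either $\mundef$ or $\mfalse$ while the $b_i$ include a $\mtrue$, in which the $\leq_o$-join on the left is $\mundef$ and hence $\preceq_o$-below any value; at the higher types the statement reduces coordinatewise to the inductive hypothesis using the pointwise definitions of both joins. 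The existence of bottom elements with respect to each ordering, needed to invoke Theorem~\ref{alternative_lattice_definition} for the complete-lattice part and to witness chain-completeness, is obtained from the explicit formulas for $\perp_{\leq_\pi}$ and $\perp_{\preceq_\pi}$ already recorded after Definition~\ref{our_domains}. The interlace property is the only genuine obstacle; once it is in hand, everything else is the standard bookkeeping for pointwise-ordered function spaces.
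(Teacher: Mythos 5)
Your proposal is correct and follows essentially the same route as the paper: a structural induction on $\pi$ with pointwise joins, where the only real obstacle is the $\preceq$-monotonicity of the $\leq$-join at type $\pi_1\rightarrow\pi_2$, resolved by an auxiliary lemma lifting pointwise $\preceq$-relations to $\leq$-lubs. Your ``interlace'' lemma is precisely the paper's auxiliary statement (recorded as Corollary~\ref{prop-lub-preceq}), proved the same way by case analysis at $o$ and coordinatewise at higher types.
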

\addtocounter{proposition}{-1}
\endgroup
\begin{proof}
Consider the first statement and let $\pi$ be an arbitrary predicate type.
Recall that $\perp_{\leq_{\pi}}$ exists; it suffices to show that for every
non-empty subset $S$ of $\lsem \pi \rsem_D$, the least upper bound of $S$
exists and belongs to $\lsem \pi \rsem_D$.

The least upper bound can be defined inductively on the structure of predicate types.
If $\pi =o$, then $\bigvee_{\leq_{o}}S$ is defined in the obvious way. For $\pi = \iota \rightarrow \pi_1$,
we define for all $d \in D$, $(\bigvee_{\leq_{\iota\rightarrow \pi_1}} S)(d) = \bigvee_{\leq_{\pi_1}}\{f(d) \mid f \in S\}$.
Finally, if $\pi = \pi_1 \rightarrow \pi_2$, we define for all $d \in \lsem \pi_1 \rsem_D$,
$(\bigvee_{\leq_{\pi_1\rightarrow \pi_2}} S)(d) = \bigvee_{\leq_{\pi_2}}\{f(d) \mid f \in S\}$.
We need to verify that for type $\pi_1\rightarrow \pi_2$ the least upper bound is a Fitting-monotonic function.
This is a consequence of the following auxiliary statement, which we need to establish for every predicate type $\pi$:

\vspace{0.2cm}
\noindent
{\em Auxiliary statement:} Let $I$ be a non-empty index-set and let $d_i,d'_i\in \lsem \pi \rsem_D$, $i \in I$.
If for all $i \in I$, $d_i \preceq_{\pi} d'_i$, then $\bigvee_{\leq_{\pi}}\{ d_i \mid i \in I \} \preceq_{\pi} \bigvee_{\leq_{\pi}}\{ d'_i \mid i \in I \}$.

\vspace{0.2cm}
The proof of the auxiliary statement is by a simple induction on the structure of $\pi$. For type $\pi=o$ the
statement follows by a case analysis on the value of $\bigvee_{\leq_{\pi}}\{ d_i \mid i \in I \}$.
For types $\iota \rightarrow \pi_1$ and $\pi_1 \rightarrow \pi_2$,
the statement follows directly by the induction hypothesis. The auxiliary statement implies that
$(\bigvee_{\leq_{\pi_1\rightarrow \pi_2}} S)$ is a Fitting-monotonic function. More specifically,
for all $d,d'\in \lsem \pi_1 \rsem_D$ with $d\preceq_{\pi_1} d'$, it holds $f(d) \preceq_{\pi_2} f(d')$ for
every $f\in S$ (because the members of $S$ are Fitting-monotonic functions). Then, the auxiliary statement implies that
$\bigvee_{\leq_{\pi_2}}\{ f(d) \mid f \in S \} \preceq_{\pi_2} \bigvee_{\leq_{\pi_2}}\{ f(d) \mid f \in S\}$
which is equivalent to $(\bigvee_{\leq_{\pi_1\rightarrow \pi_2}} S)(d) \preceq_{\pi_2} (\bigvee_{\leq_{\pi_1\rightarrow \pi_2}} S)(d')$,
which means that $(\bigvee_{\leq_{\pi_1\rightarrow \pi_2}} S)$ is Fitting-monotonic.

Consider now the second statement. Notice that $(\lsem \pi \rsem_D, \preceq_\pi)$ is not
a complete lattice (for example, the set $\{\mfalse,\mtrue\}$ does not have a least upper
bound with respect to $\preceq_o$). However, it is a chain complete poset. For every type $\pi$, $\perp_{\preceq_{\pi}}$ exists.
Moreover, given a chain $S$ of elements of $\lsem \pi \rsem_D$, it suffices to verify that
$\bigvee_{\preceq_{\pi}}S$ exists and belongs to $\lsem \pi \rsem_D$. The proof is by induction
on the structure of $\pi$. For type $\pi=o$ it is obvious. For $\pi = \iota \rightarrow \pi_1$,
define $(\bigvee_{\preceq_{\iota\rightarrow \pi_1}} S)(d) = \bigvee_{\preceq_{\pi_1}}\{f(d) \mid f \in S\}$.
For $\pi = \pi_1 \rightarrow \pi_2$
define $(\bigvee_{\preceq_{\pi_1\rightarrow \pi_2}} S)(d) = \bigvee_{\preceq_{\pi_2}}\{f(d) \mid f \in S\}$.
We need to verify that $(\bigvee_{\preceq_{\pi_1\rightarrow \pi_2}} S)$ is a Fitting-monotonic
function, i.e., that for all $d,d' \in \lsem \pi_1 \rsem_D$ with $d\preceq_{\pi_1} d'$, it holds
$(\bigvee_{\preceq_{\pi_1\rightarrow \pi_2}} S)(d) \preceq_{\pi_2} (\bigvee_{\preceq_{\pi_1\rightarrow \pi_2}} S)(d')$,
or equivalently that
$\bigvee_{\preceq_{\pi_2}}\{f(d)\mid f \in S\} \preceq_{\pi_2} \bigvee_{\preceq_{\pi_2}}\{f(d')\mid f \in S\}$,
which holds because for every $f \in S$, $f(d) \preceq_{\pi_2} f(d')$.
\end{proof}

The proof of the above lemma has as a direct consequence the following corollary:
\begin{corollary}\label{prop-lub-preceq}
Let $D$ be a nonempty set and $\pi$ a predicate type. Let $I$ be a non-empty index-set and let $d_i,d'_i\in \lsem \pi \rsem_D$, $i \in I$.
If for all $i \in I$, $d_i \preceq_{\pi} d'_i$, then $\bigvee_{\leq_{\pi}}\{ d_i \mid i \in I \} \preceq_{\pi} \bigvee_{\leq_{\pi}}\{ d'_i \mid i \in I \}$.
\end{corollary}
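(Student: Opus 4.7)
The plan is to observe that this corollary is literally the "Auxiliary statement" already proved inside the proof of Proposition~\ref{semantics_of_types_lattice_cpo}, and therefore to rederive it by structural induction on the predicate type $\pi$. The corollary is simply the extraction of that lemma into a free-standing statement, so no new ideas are required.

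For the base case $\pi = o$, I would do a case analysis on the value of $v = \bigvee_{\leq_o}\{d_i \mid i \in I\}$. Since $(\lsem o \rsem_D, \preceq_o)$ has $\mundef$ as bottom while $\mfalse$ and $\mtrue$ are $\preceq_o$-maximal and mutually incomparable, three subcases arise. If $v = \mtrue$, then some $d_i = \mtrue$; since $\mtrue$ is $\preceq_o$-maximal, the hypothesis $d_i \preceq_o d'_i$ forces $d'_i = \mtrue$, so $\bigvee_{\leq_o}\{d'_j\} = \mtrue$ and the conclusion is trivial. If $v = \mundef$, the conclusion is immediate because $\mundef \preceq_o x$ for every $x$. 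If $v = \mfalse$, every $d_i = \mfalse$, and since the only element $\preceq_o$-above $\mfalse$ is $\mfalse$ itself, each $d'_i = \mfalse$, giving $\bigvee_{\leq_o}\{d'_j\} = \mfalse \preceq_o \mfalse$.

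For the inductive step I would treat the two cases $\pi = \iota \rightarrow \pi_1$ and $\pi = \pi_1 \rightarrow \pi_2$ uniformly. Both the truth supremum and the information ordering at these types are defined pointwise on arguments (by Definition~\ref{our_domains} and the construction of $\bigvee_{\leq_\pi}$ in the proof of Proposition~\ref{semantics_of_types_lattice_cpo}). Thus for every fixed argument $d$, the hypothesis gives $d_i(d) \preceq_{\pi_1} d'_i(d)$ for all $i \in I$; the induction hypothesis applied to the family $\{d_i(d)\}_{i \in I}$ and $\{d'_i(d)\}_{i \in I}$ in $\lsem \pi_1 \rsem_D$ (respectively $\lsem \pi_2\rsem_D$) yields $\bigvee_{\leq}\{d_i(d)\} \preceq \bigvee_{\leq}\{d'_i(d)\}$, which is exactly the pointwise version of the desired conclusion.

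There is essentially no obstacle here; the only mildly delicate point is the case analysis at the base type, and it is resolved by the concrete shape of the three-element poset $(\lsem o \rsem_D, \preceq_o)$. Consequently the corollary follows with no extra work beyond what was already carried out inside the proof of Proposition~\ref{semantics_of_types_lattice_cpo}.
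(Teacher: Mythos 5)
Your proposal is correct and takes essentially the same route as the paper: the corollary is precisely the \emph{auxiliary statement} established inside the proof of Proposition~\ref{semantics_of_types_lattice_cpo}, which the paper likewise proves by structural induction on $\pi$, with a case analysis on the value of $\bigvee_{\leq_o}\{d_i \mid i \in I\}$ at the base type and a pointwise appeal to the induction hypothesis at the types $\iota\rightarrow\pi_1$ and $\pi_1\rightarrow\pi_2$. Your treatment of the base case is in fact slightly more explicit than the paper's, but no new ideas are involved.
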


\section{$\!\!\!$: Proofs of Section~\ref{bijection}}\label{appendix-of-section-5}
\begingroup
\def\theproposition{\ref{ma-am-lattices}}
\begin{proposition}
Let $D$ be a nonempty set. For every predicate type $\pi$, $(\lsem \pi \rsem_D^{\mathsf{ma}}, \leq_\pi)$ and
$(\lsem \pi \rsem_D^{\mathsf{am}}, \leq_\pi)$ are complete lattices.
\end{proposition}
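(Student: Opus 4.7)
The plan is to mirror the proof of Proposition~\ref{semantics_of_types_lattice_cpo}, working by simultaneous structural induction on the predicate type $\pi$ and handling both $(\lsem \pi \rsem_D^{\mathsf{ma}}, \leq_\pi)$ and $(\lsem \pi \rsem_D^{\mathsf{am}}, \leq_\pi)$ in parallel. I will invoke Theorem~\ref{alternative_lattice_definition}, so it suffices in each case to exhibit a least element and to define a least upper bound for every nonempty subset $S$, verifying that this lub actually lies in the relevant function space.

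The base case $\pi = o$ is trivial: both $\lsem o \rsem_D^{\mathsf{ma}}$ and $\lsem o \rsem_D^{\mathsf{am}}$ equal the two-element chain $\{\mfalse, \mtrue\}$ with $\mfalse \leq_o \mtrue$, which is a complete lattice. For $\pi = \iota \rightarrow \pi_1$, there is no monotonicity restriction on the functions, so completeness follows from the inductive hypothesis on $\pi_1$ by defining lubs pointwise: for $S \subseteq \lsem \iota \rightarrow \pi_1 \rsem_D^{\mathsf{ma}}$, set $(\bigvee_{\leq_{\iota \rightarrow \pi_1}} S)(d) = \bigvee_{\leq_{\pi_1}} \{f(d) \mid f \in S\}$ for every $d \in D$, and likewise for the $\mathsf{am}$ case; the constant $\perp_{\leq_{\pi_1}}$-valued function serves as the bottom.

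The nontrivial case is $\pi = \pi_1 \rightarrow \pi_2$. I would define, for any $S \subseteq \lsem \pi_1 \rightarrow \pi_2 \rsem_D^{\mathsf{ma}}$, the pointwise supremum
\[
\Bigl(\bigvee\nolimits_{\leq_{\pi_1 \rightarrow \pi_2}} S\Bigr)(x,y) \;=\; \bigvee\nolimits_{\leq_{\pi_2}} \{f(x,y) \mid f \in S\}
\]
for each $(x,y) \in \lsem \pi_1 \rsem_D^{\mathsf{ma}} \otimes \lsem \pi_1 \rsem_D^{\mathsf{am}}$, where the right-hand side is well-defined because $(\lsem \pi_2 \rsem_D^{\mathsf{ma}}, \leq_{\pi_2})$ is a complete lattice by the inductive hypothesis. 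The key step is to check that this pointwise supremum is itself monotone-antimonotone, so that it really lies in $\lsem \pi_1 \rightarrow \pi_2 \rsem_D^{\mathsf{ma}}$. Given $(x,y),(x',y') \in \lsem \pi_1 \rsem_D^{\mathsf{ma}} \otimes \lsem \pi_1 \rsem_D^{\mathsf{am}}$ with $x \leq_{\pi_1} x'$ and $y' \leq_{\pi_1} y$, every $f \in S$ satisfies $f(x,y) \leq_{\pi_2} f(x',y')$ by its own $\mathsf{ma}$ property; taking suprema of both sides over $f$ yields $(\bigvee S)(x,y) \leq_{\pi_2} (\bigvee S)(x',y')$, which is precisely the $\mathsf{ma}$ condition. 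The symmetric argument, with the inequality in $\lsem \pi_2 \rsem_D^{\mathsf{am}}$ reversed, handles $\lsem \pi_1 \rightarrow \pi_2 \rsem_D^{\mathsf{am}}$. Bottom elements exist in both spaces: the function assigning the $\pi_2$-bottom to every argument is vacuously $\mathsf{ma}$ and $\mathsf{am}$.

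The main obstacle is bookkeeping rather than mathematical depth: one has to track that the argument-pair ordering used in the $\mathsf{ma}$/$\mathsf{am}$ conditions is the componentwise-in-opposite-directions one on $\lsem \pi_1 \rsem_D^{\mathsf{ma}} \otimes \lsem \pi_1 \rsem_D^{\mathsf{am}}$, not the truth-ordering on pairs of Definition~\ref{orderings_on_pairs}, and that the suprema taken at the inductive step are always computed in the appropriate complete lattice furnished by the induction hypothesis (namely $\lsem \pi_2 \rsem_D^{\mathsf{ma}}$ or $\lsem \pi_2 \rsem_D^{\mathsf{am}}$). Once these conventions are fixed, the monotonicity preservation under suprema follows by exactly the same elementary argument that underlies Corollary~\ref{prop-lub-preceq}: lubs preserve pointwise inequalities.
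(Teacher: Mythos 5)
Your proof is correct and follows essentially the same route as the paper's: structural induction on $\pi$, exhibiting the constant-bottom function as least element and defining least upper bounds pointwise, with the only substantive step being the verification that the pointwise supremum of monotone-antimonotone functions is again monotone-antimonotone (and dually for the $\mathsf{am}$ case, which the paper dispatches by symmetry just as you do). No gaps.
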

\addtocounter{proposition}{-1}
\endgroup
\begin{proof}
We give the proof for the case $(\lsem \pi \rsem_D^{\mathsf{ma}}, \leq_\pi)$; the
case $(\lsem \pi \rsem_D^{\mathsf{am}}, \leq_\pi)$ is symmetrical and omitted.
The proof is by induction on the structure of $\pi$. For $\pi=o$ the result
is immediate. We show the result for types $\iota \rightarrow \pi$ and
$\pi_1 \rightarrow \pi_2$, assuming it holds for $\pi$, $\pi_1$ and $\pi_2$.

Consider first the set  $\lsem \iota \rightarrow \pi \rsem_D^\mathsf{ma} = D\rightarrow \lsem \pi \rsem^\mathsf{ma}$.
This set has a least element, namely the function that assigns to each $d \in D$
the bottom element of type $\pi$. Let $S \subseteq  D\rightarrow \lsem \pi \rsem^\mathsf{ma}$ be
a nonempty set. For every $d\in D$ we define
$(\bigvee_{\leq_{\iota \rightarrow \pi}} S)(d)= \bigvee_{\leq_{\pi}}\{f(d) \mid f \in S\}$, which by the induction hypothesis
exists and belongs to $\lsem \pi \rsem_D^{\mathsf{ma}}$.

Consider now the set $\lsem \pi_1 \rightarrow \pi_2 \rsem_D^\mathsf{ma} =
[ (\lsem \pi_1 \rsem_D^\mathsf{ma} \otimes \lsem \pi_1 \rsem_D^\mathsf{am}) \stackrel{\mathsf{ma}}{\rightarrow} \lsem \pi_2 \rsem_D^\mathsf{ma}]$.
This set has a least element, namely the function that assigns to each pair
$(x,y) \in (\lsem \pi_1 \rsem_D^\mathsf{ma} \otimes \lsem \pi_1 \rsem_D^\mathsf{am})$
the bottom element of type $\perp_{\pi_{2}}$; this function is constant and therefore obviously
monotone-antimonotone.  Let $S \subseteq  [ (\lsem \pi_1 \rsem_D^\mathsf{ma} \otimes \lsem \pi_1 \rsem_D^\mathsf{am})
\stackrel{\mathsf{ma}}{\rightarrow} \lsem \pi_2 \rsem_D^\mathsf{ma}]$ be a nonempty set.
For every $(x,y) \in (\lsem \pi_1 \rsem_D^\mathsf{ma} \otimes \lsem \pi_1 \rsem_D^\mathsf{am})$
we define $(\bigvee_{\leq_{\pi_1 \rightarrow \pi_2}} S)(x,y)= \bigvee_{\leq_{\pi_2}}\{f(x,y) \mid f \in S\}$, which by the induction hypothesis
exists and belongs to $\lsem \pi_2 \rsem_D^{\mathsf{ma}}$. It remains to show that
$\bigvee S$ is monotone-antimonotone. Consider $(x,y),(x',y') \in (\lsem \pi_1 \rsem_D^\mathsf{ma} \otimes \lsem \pi_1 \rsem_D^\mathsf{am})$
and assume that $x \leq x'$ and $y \geq y'$. It suffices to show that
$(\bigvee_{\leq_{\pi_1\rightarrow \pi_2}} S)(x,y) \leq_{\pi_2} (\bigvee_{\leq_{\pi_1\rightarrow \pi_2}} S)(x',y')$. Since every element
of $S$ is monotone-antimonotone, for every $f \in S$ it holds
$f(x,y) \leq_{\pi_2} f(x',y')$. Therefore, $\bigvee_{\leq_{\pi_2}}\{f(x,y) \mid f \in S\} \leq_{\pi_2} \bigvee_{\leq_{\pi_2}}\{f(x',y') \mid f \in S\}$,
and thus $(\bigvee S_{\leq_{\pi_1\rightarrow \pi_2}})(x,y) \leq_{\pi_2} (\bigvee S_{\leq_{\pi_1\rightarrow \pi_2}})(x',y')$.
\end{proof}

The proof of Proposition~\ref{pairs_complete_lattice_cpo} requires the following lemma
which can be established by induction on the structure of $\pi$:
\begin{lemma}\label{interlattice_lub_lemma}
Let $D$ be a nonempty set and let $\pi$ be a predicate type. Let $S \subseteq \lsem \pi \rsem^{\mathsf{ma}}_D$
and $g \in \lsem \pi \rsem^{\mathsf{am}}_D$.
\begin{itemize}
\item If for all $f \in S$, $f \leq g$,  then $\bigvee S \leq g$.

\item If for all $f \in S$, $f\geq g$, then $\bigwedge S \geq g$.
\end{itemize}
\end{lemma}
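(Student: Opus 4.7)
The plan is to prove both claims simultaneously by structural induction on the predicate type $\pi$, exploiting the fact that by Proposition~\ref{ma-am-lattices} the lubs and glbs in $\lsem \pi \rsem_D^{\mathsf{ma}}$ (and in $\lsem \pi \rsem_D^{\mathsf{am}}$) exist, and, more importantly, that these lubs and glbs are computed pointwise, as is evident from the construction in the proof of that proposition. The cross-lattice ordering $\leq_\pi$ between elements of $\lsem \pi \rsem_D^{\mathsf{ma}}$ and $\lsem \pi \rsem_D^{\mathsf{am}}$ is itself defined pointwise by Definition~\ref{alternative_denotation_of_types}, so the whole argument will reduce to propagating the hypothesis coordinate by coordinate.

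For the base case $\pi = o$, both $\lsem o \rsem_D^{\mathsf{ma}}$ and $\lsem o \rsem_D^{\mathsf{am}}$ are the two-element chain $\{\mfalse,\mtrue\}$, so each claim follows by an elementary case analysis on $g$. For $\pi = \iota \rightarrow \pi'$, I would take an arbitrary $d \in D$ and observe that the hypothesis $f \leq_\pi g$ for all $f \in S$ unfolds to $f(d) \leq_{\pi'} g(d)$ for every $f \in S$. Since $(\bigvee S)(d) = \bigvee_{\leq_{\pi'}} \{f(d) : f \in S\}$ by the pointwise construction of the lub in $\lsem \iota \rightarrow \pi' \rsem_D^{\mathsf{ma}}$, the induction hypothesis applied to the set $\{f(d) : f \in S\} \subseteq \lsem \pi' \rsem_D^{\mathsf{ma}}$ and the bound $g(d) \in \lsem \pi' \rsem_D^{\mathsf{am}}$ yields $(\bigvee S)(d) \leq_{\pi'} g(d)$, and $d$ was arbitrary.

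For $\pi = \pi_1 \rightarrow \pi_2$, the argument is analogous but takes place over pairs: for every $(d_1,d_2) \in \lsem \pi_1 \rsem_D^{\mathsf{ma}} \otimes \lsem \pi_1 \rsem_D^{\mathsf{am}}$, the assumption gives $f(d_1,d_2) \leq_{\pi_2} g(d_1,d_2)$ for all $f \in S$, and the pointwise computation $(\bigvee S)(d_1,d_2) = \bigvee_{\leq_{\pi_2}} \{f(d_1,d_2) : f \in S\}$ together with the induction hypothesis gives $(\bigvee S)(d_1,d_2) \leq_{\pi_2} g(d_1,d_2)$, whence $\bigvee S \leq_\pi g$. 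The second claim, concerning $\bigwedge S$ and a lower bound $g$, is fully symmetric and can be proved by the dual induction, or equivalently by replacing lubs with glbs and reversing the inequalities throughout.

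The one point that warrants care — and is really the only substantive step — is the reliance on the pointwise character of $\bigvee$ and $\bigwedge$ across the \textsf{ma}/\textsf{am} boundary. No separate monotonicity argument of the form appearing in Proposition~\ref{semantics_of_types_lattice_cpo} is needed here, because both claims are purely about the truth ordering $\leq_\pi$; the Fitting ordering plays no role. Thus I do not foresee a serious obstacle: once one has observed that the lubs/glbs are pointwise and that the ordering between the two lattices is pointwise, the induction essentially writes itself.
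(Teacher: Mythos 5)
Your proof is correct and follows exactly the route the paper intends: the paper states only that the lemma ``can be established by induction on the structure of $\pi$'', and your argument supplies the details of that induction, correctly exploiting the pointwise construction of lubs/glbs from the proof of Proposition~\ref{ma-am-lattices} and the pointwise definition of $\leq_\pi$ on $\lsem \pi \rsem_D^{\mathsf{ma}} \cup \lsem \pi \rsem_D^{\mathsf{am}}$ in Definition~\ref{alternative_denotation_of_types}. Nothing further is needed.
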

\begingroup
\def\theproposition{\ref{pairs_complete_lattice_cpo}}
\begin{proposition}
Let $D$ be a nonempty set. For each predicate type $\pi$, $\lsem \pi \rsem_D^{\mathsf{ma}} \otimes \lsem \pi \rsem_D^{\mathsf{am}}$
is a complete lattice with respect to $\leq_{\pi}$ and a chain-complete poset with respect to $\preceq_{\pi}$.
\end{proposition}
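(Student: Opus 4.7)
The plan is to handle the two claims separately via explicit componentwise sup/inf constructions that lean on Proposition~\ref{ma-am-lattices} and Lemma~\ref{interlattice_lub_lemma}. The only nontrivial point in either case is to verify that the constructed pair indeed lies in the restricted set $\lsem \pi \rsem_D^{\mathsf{ma}} \otimes \lsem \pi \rsem_D^{\mathsf{am}}$, i.e., that its first coordinate is $\leq_\pi$ its second.

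For the complete lattice claim, I would invoke Theorem~\ref{alternative_lattice_definition}: it suffices to exhibit a least element together with suprema of non-empty subsets. An easy induction on $\pi$ shows that the $\leq_\pi$-bottoms of $\lsem \pi \rsem_D^{\mathsf{ma}}$ and $\lsem \pi \rsem_D^{\mathsf{am}}$ coincide, so $(\perp,\perp)$ lies in the tensor set and is $\leq_\pi$-least. Given a non-empty $S$, I would define $\bigvee_{\leq_\pi} S = \bigl(\bigvee_{\leq_\pi}\{[p]_1 : p \in S\},\, \bigvee_{\leq_\pi}\{[p]_2 : p \in S\}\bigr)$, each coordinate existing by Proposition~\ref{ma-am-lattices}. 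To see this pair is admissible, note that for each $p \in S$ we have $[p]_1 \leq_\pi [p]_2 \leq_\pi \bigvee_{\leq_\pi}\{[q]_2 : q \in S\}$; the first clause of Lemma~\ref{interlattice_lub_lemma}, applied with $g = \bigvee_{\leq_\pi}\{[q]_2 : q \in S\}$, then yields $\bigvee_{\leq_\pi}\{[p]_1 : p \in S\} \leq_\pi \bigvee_{\leq_\pi}\{[p]_2 : p \in S\}$. That the pair is the $\leq_\pi$-least upper bound in the tensor is immediate from the componentwise definition of $\leq_\pi$ on pairs.

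For the chain-complete poset claim, the $\preceq_\pi$-least element is $(\perp,\top)$, which lies in the tensor since $\perp \leq_\pi \top$. Given a chain $C$ in $\preceq_\pi$, I would set $\bigvee_{\preceq_\pi} C = \bigl(\bigvee_{\leq_\pi}\{[p]_1 : p \in C\},\, \bigwedge_{\leq_\pi}\{[p]_2 : p \in C\}\bigr)$, reflecting that the first projection is $\preceq_\pi$-monotone while the second is $\preceq_\pi$-antitone. To verify admissibility, fix $p, q \in C$; by linearity of $C$ either $p \preceq_\pi q$ (giving $[p]_1 \leq_\pi [q]_1 \leq_\pi [q]_2$) or $q \preceq_\pi p$ (giving $[p]_1 \leq_\pi [p]_2 \leq_\pi [q]_2$), so in either case $[p]_1 \leq_\pi [q]_2$. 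Applying (the symmetric form of) Lemma~\ref{interlattice_lub_lemma} twice---first with $\{[q]_2 : q \in C\}$ bounded below by the fixed $[p]_1$ to obtain $[p]_1 \leq_\pi \bigwedge_{\leq_\pi}\{[q]_2 : q \in C\}$, then taking the $\leq_\pi$-sup over $p$ against the upper bound $\bigwedge_{\leq_\pi}\{[q]_2 : q \in C\}$---gives the required inequality. The upper-bound and least-upper-bound properties under $\preceq_\pi$ then reduce immediately to the sup/inf definitions in each coordinate together with the definition of $\preceq_\pi$ on pairs.

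The main obstacle is the (routine but delicate) bookkeeping needed to ensure the mixed sup/inf constructions land back in the tensor set, since the two coordinates live in distinct lattices $\lsem \pi \rsem_D^{\mathsf{ma}}$ and $\lsem \pi \rsem_D^{\mathsf{am}}$ that are only bridged by the cross-lattice relation $\leq_\pi$. Lemma~\ref{interlattice_lub_lemma} is tailor-made for exactly this situation; the only wrinkle is that the chain-complete argument tacitly requires its symmetric counterpart (for $S \subseteq \lsem \pi \rsem^{\mathsf{am}}_D$ and $g \in \lsem \pi \rsem^{\mathsf{ma}}_D$), which holds by the same structural induction on $\pi$ and may warrant a brief explicit remark.
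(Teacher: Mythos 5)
Your proof is correct and follows essentially the same route as the paper's: componentwise suprema for $\leq_\pi$, the $(\sup,\inf)$ construction for $\preceq_\pi$-chains, and Lemma~\ref{interlattice_lub_lemma} to check that the resulting pairs land back in $\lsem \pi \rsem_D^{\mathsf{ma}} \otimes \lsem \pi \rsem_D^{\mathsf{am}}$. Your observation that the chain case needs the symmetric counterpart of Lemma~\ref{interlattice_lub_lemma} (with the roles of $\lsem \pi \rsem_D^{\mathsf{ma}}$ and $\lsem \pi \rsem_D^{\mathsf{am}}$ swapped) is accurate and is a detail the paper's own proof passes over silently.
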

\addtocounter{proposition}{-1}
\endgroup
\begin{proof}
For every $\pi$ it is straightforward to define the bottom elements of the partially
ordered sets $(\lsem \pi \rsem_D^{\mathsf{ma}} \otimes \lsem \pi \rsem_D^{\mathsf{am}}, \leq_{\pi})$
and $(\lsem \pi \rsem_D^{\mathsf{ma}} \otimes \lsem \pi \rsem_D^{\mathsf{am}}, \preceq_{\pi})$.

Given $S \subseteq \lsem \pi \rsem_D^{\mathsf{ma}} \otimes \lsem \pi \rsem_D^{\mathsf{am}}$,
we define $\bigvee_{\leq_{\pi}} S = (\bigvee_{\leq_{\pi}} \{f \mid (f,g) \in S\}, \bigvee_{\leq_{\pi}} \{g \mid (f,g) \in S\})$.
It can be easily seen that $\bigvee_{\leq_{\pi}} S \in \lsem \pi \rsem_D^{\mathsf{ma}} \otimes \lsem \pi \rsem_D^{\mathsf{am}}$
due to Proposition~\ref{ma-am-lattices}, Lemma~\ref{interlattice_lub_lemma} and the fact that for every pair $(f,g) \in S$, $f \leq_{\pi} g$.

On the other hand, let $S \subseteq \lsem \pi \rsem_D^{\mathsf{ma}} \otimes \lsem \pi \rsem_D^{\mathsf{am}}$
be a chain. We define
$\bigvee_{\preceq_{\pi}} S = (\bigvee_{\leq_{\pi}} \{f \mid (f,g) \in S\}, \bigwedge_{\leq_{\pi}} \{g \mid (f,g) \in S\})$.
It is straightforward to show that $\bigvee_{\preceq_{\pi}} S$ is the $\preceq_{\pi}$-least upper bound of the chain.
Moreover, $(\bigvee_{\preceq_{\pi}} S) \in \lsem \pi \rsem_D^{\mathsf{ma}} \otimes \lsem \pi \rsem_D^{\mathsf{am}}$
because $\bigvee_{\leq_{\pi}} \{f \mid (f,g) \in S\} \leq_{\pi} \bigwedge_{\leq_{\pi}} \{g \mid (f,g) \in S\}$
(this can easily be shown using basic properties of lubs and glbs, Lemma~\ref{interlattice_lub_lemma},
and the fact that $S$ is a chain; see also Proposition 2.3 in~\cite{DMT04}).
\end{proof}
\begingroup
\def\theproposition{\ref{tau-monotonic}}
\begin{proposition}
Let $D$ be a nonempty set and let $\pi$ be a predicate type. Then, for every $f,g \in \lsem \pi \rsem_D$
and for every $(f_1,f_2),(g_1,g_2) \in \lsem \pi \rsem_D^{\mathsf{ma}}\otimes \lsem \pi \rsem_D^{\mathsf{am}}$,
the following statements hold:
  \begin{enumerate}
  \item $\tau_{\pi}(f) \in  (\lsem \pi \rsem_D^{\mathsf{ma}} \otimes \lsem \pi \rsem_D^{\mathsf{am}})$
        and $\tau^{-1}_{\pi}(f_1,f_2) \in \lsem \pi \rsem_D$.
  \item If $f \preceq_{\pi} g$ then $\tau_\pi(f) \preceq_{\pi} \tau_\pi(g)$.
  \item If $f \leq_{\pi} g$ then $\tau_\pi(f) \leq_{\pi} \tau_\pi(g)$.
  \item If $(f_1,f_2) \preceq_{\pi} (g_1,g_2)$ then $\tau_\pi^{-1}(f_1,f_2) \preceq_{\pi} \tau_\pi^{-1}(g_1,g_2)$.
  \item If $(f_1,f_2) \leq_{\pi} (g_1,g_2)$ then $\tau^{-1}_\pi(f_1,f_2) \leq_{\pi} \tau^{-1}_\pi(g_1,g_2)$.
\end{enumerate}
\end{proposition}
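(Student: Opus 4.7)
The plan is to prove all five statements simultaneously by induction on the structure of the predicate type $\pi$. Simultaneous induction is necessary because the definition of $\tau_{\pi_1 \rightarrow \pi_2}$ invokes $\tau^{-1}_{\pi_1}$ (and vice versa), so when I verify, say, statement~1 at a higher type I will need to appeal to statements~4 and~2 at the components $\pi_1$ and $\pi_2$. The base case $\pi = o$ is handled entirely by enumerating the three values $\mtrue, \mfalse, \mundef$: one checks directly that $\tau_o$ maps them to the three allowed pairs in $\{\mfalse,\mtrue\}\otimes\{\mfalse,\mtrue\}$, that $\tau_o^{-1}$ inverts this, and then verifies the four preservation claims by a small finite case analysis against the orderings given in Definitions~\ref{our_domains} and~\ref{orderings_on_pairs}.

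For the inductive step with $\pi = \iota \rightarrow \pi'$, every claim reduces to the corresponding claim for $\pi'$ applied pointwise for each $d \in D$, since the definitions of $\tau$, $\tau^{-1}$, and the orderings $\leq_{\iota\rightarrow\pi'}$, $\preceq_{\iota\rightarrow\pi'}$ are all pointwise in $d$. The interesting work is in the case $\pi = \pi_1 \rightarrow \pi_2$, which I would tackle in this order. First, for statement~1, to show that $\tau_{\pi_1\rightarrow\pi_2}(f)$ is a well-formed pair I fix $(d_1,d_2) \preceq_{\pi_1} (d_1',d_2')$ in $\lsem\pi_1\rsem^{\mathsf{ma}} \otimes \lsem\pi_1\rsem^{\mathsf{am}}$, apply the induction hypothesis~4 at $\pi_1$ to get $\tau^{-1}_{\pi_1}(d_1,d_2) \preceq_{\pi_1} \tau^{-1}_{\pi_1}(d_1',d_2')$, use Fitting-monotonicity of $f$ to transport this through $f$, and then apply IH~2 at $\pi_2$ to conclude that the image under $\tau_{\pi_2}$ is $\preceq_{\pi_2}$-related. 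Taking the first projection gives the monotone-antimonotone condition on $[\tau_{\pi_2}\circ f \circ \tau^{-1}_{\pi_1}]_1$, and the second projection gives the antimonotone-monotone condition. The final pair inequality $[\tau_{\pi_2}(\cdot)]_1 \leq_{\pi_2} [\tau_{\pi_2}(\cdot)]_2$ is precisely IH~1 at $\pi_2$. The symmetric argument for $\tau^{-1}_{\pi_1\rightarrow\pi_2}(f_1,f_2)$ uses IH~2 at $\pi_1$ together with monotone-antimonotone/antimonotone-monotone properties of $f_1,f_2$ and IH~4 at $\pi_2$.

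For statements~2--5 at type $\pi_1 \rightarrow \pi_2$ the proofs are by chasing definitions pointwise. For instance, to prove statement~2, assume $f \preceq_{\pi_1\rightarrow\pi_2} g$ and fix $(d_1,d_2) \in \lsem\pi_1\rsem^{\mathsf{ma}} \otimes \lsem\pi_1\rsem^{\mathsf{am}}$; by statement~1 at $\pi_1$, $\tau^{-1}_{\pi_1}(d_1,d_2) \in \lsem\pi_1\rsem$, and the hypothesis gives $f(\tau^{-1}_{\pi_1}(d_1,d_2)) \preceq_{\pi_2} g(\tau^{-1}_{\pi_1}(d_1,d_2))$; applying IH~2 at $\pi_2$ and reading off components yields the componentwise comparison required by $\preceq_{\pi_1\rightarrow\pi_2}$ on pairs. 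Statements~3, 4, and~5 proceed in the same spirit, each time combining IH at $\pi_1$ (to move an inequality across $\tau^{-1}_{\pi_1}$ or $\tau_{\pi_1}$) with IH at $\pi_2$ (to move it across $\tau_{\pi_2}$ or $\tau^{-1}_{\pi_2}$).

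The main obstacle I anticipate is bookkeeping rather than deep technical difficulty: one must be careful that in the $\pi_1 \rightarrow \pi_2$ case the argument of $f$ ranges over $\lsem\pi_1\rsem$ while the argument of the pair comes from $\lsem\pi_1\rsem^{\mathsf{ma}} \otimes \lsem\pi_1\rsem^{\mathsf{am}}$, so one has to invoke $\tau^{-1}_{\pi_1}$ (respectively $\tau_{\pi_1}$) exactly where the definition does, and then cite the correct IH (monotonicity with respect to $\preceq$ versus $\leq$) to bridge the sides. Keeping the five statements bundled into a single induction removes any circularity and makes it clear which IH to cite at each step.
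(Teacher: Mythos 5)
Your proposal is correct and follows essentially the same route as the paper's proof: a simultaneous induction on the structure of $\pi$, with the $\pi_1 \rightarrow \pi_2$ case of Statement~1 established by combining the induction hypothesis for Statement~4 at $\pi_1$, the Fitting-monotonicity of $f$, and the induction hypothesis for Statement~2 at $\pi_2$, and Statements~2--5 handled by pointwise definition-chasing exactly as in the paper.
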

\addtocounter{proposition}{-1}
\endgroup
\begin{proof}
The five statements are shown by a simultaneous induction on the structure of $\pi$.
We give the proofs for Statement 1, Statement 2 (the proof of Statement 3 is analogous
and omitted) and Statement 4 (the proof of Statement 5 is similar and omitted).

The basis case is for $\pi=o$ and is straightforward for all statements. We assume
the statements hold for $\pi$, $\pi_1$ and $\pi_2$. We demonstrate that they hold for
$\iota \rightarrow \pi$ and for $\pi_1 \rightarrow \pi_2$.

\vspace{0.2cm}
\noindent
{\em Statement 1:}  Consider first the case of $\iota\rightarrow \pi$. It suffices to show that
$\tau_{\iota \rightarrow \pi}(f) \in (\lsem \iota \rightarrow \pi \rsem_D^{\mathsf{ma}} \otimes \lsem \iota \rightarrow \pi \rsem_D^{\mathsf{am}})$.
By the induction hypothesis, $\tau_{\pi}(f(d)) \in (\lsem \pi \rsem_D^{\mathsf{ma}} \otimes \lsem \pi \rsem_D^{\mathsf{am}})$.
Therefore, $[\tau_\pi(f(d))]_1 \leq [\tau_\pi(f(d))]_2$, and consequently
$(\lambda d. [\tau_\pi(f(d))]_1, \lambda d. [\tau_\pi(f(d))]_2) \in
(\lsem \iota \rightarrow \pi \rsem_D^{\mathsf{ma}} \otimes \lsem \iota \rightarrow \pi \rsem_D^{\mathsf{am}})$.
We next show that $\tau^{-1}_{\pi}(f_1,f_2) \in \lsem \iota \rightarrow \pi \rsem_D$.
Since $(f_1,f_2)\in (\lsem \iota \rightarrow \pi \rsem_D^{\mathsf{ma}} \otimes \lsem \iota \rightarrow \pi \rsem_D^{\mathsf{am}})$,
$f_1 \leq f_2$ and $(f_1(d),f_2(d))\in (\lsem \pi \rsem_D^{\mathsf{ma}} \otimes \lsem \pi \rsem_D^{\mathsf{am}})$.
By the induction hypothesis, $\tau_{\pi}^{-1}(f_1(d),f_2(d)) \in \lsem \pi \rsem_D$
and $\lambda d.\tau_{\pi}^{-1}(f_1(d),f_2(d)) \in \lsem \iota \rightarrow \pi \rsem_D$.

Consider the case $\pi_1\rightarrow \pi_2$. We show that $\tau_{\pi_1\rightarrow \pi_2}(f) \in(\lsem
\pi_1\rightarrow \pi_2 \rsem^{\mathsf{ma}}_D \otimes \lsem \pi_1\rightarrow \pi_2 \rsem^{\mathsf{am}}_D)$.
Let $(d_1,d_2) \in (\lsem\pi_1\rsem^{\mathsf{ma}}_D \otimes \lsem \pi_1\rsem^{\mathsf{am}}_D)$.
By the induction hypothesis $\tau_{\pi_1}^{-1}(d_1,d_2) \in \lsem \pi_1 \rsem_D$,
$f(\tau_{\pi_1}^{-1}(d_1,d_2)) \in \lsem \pi_2 \rsem_D$,
and $\tau_{\pi_2}(f(\tau_{\pi_1}^{-1}(d_1,d_2))) \in (\lsem\pi_1\rsem^{\mathsf{ma}}_D \otimes \lsem \pi_1\rsem^{\mathsf{am}}_D)$,
which has as a direct consequence that $[\tau_{\pi_2}(f(\tau_{\pi_1}^{-1}(d_1,d_2)))]_1 \leq [\tau_{\pi_2}(f(\tau_{\pi_1}^{-1}(d_1,d_2)))]_2$.
Therefore, $\lambda(d_1,d_2).[\tau_{\pi_2}(f(\tau_{\pi_1}^{-1}(d_1,d_2)))]_1 \leq
\lambda(d_1,d_2).[\tau_{\pi_2}(f(\tau_{\pi_1}^{-1}(d_1,d_2)))]_2$. It remains to
show that the function $\lambda(d_1,d_2).[\tau_{\pi_2}(f(\tau_{\pi_1}^{-1}(d_1,d_2)))]_1$
is monotone-antimonotone and the function $\lambda(d_1,d_2).[\tau_{\pi_2}(f(\tau_{\pi_1}^{-1}(d_1,d_2)))]_2$
is antimonotone-monotone. This follows by using the induction hypothesis
for Statement 4, the Fitting-monotonicity of $f$, and the induction hypothesis of Statement 2.
The fact that $\tau^{-1}_{\pi_1\rightarrow\pi_2}(f_1,f_2) \in \lsem \pi_1 \rightarrow \pi_2\rsem_D$
follows using similar arguments as above.

\vspace{0.2cm}
\noindent
{\em Statement 2:}  Consider first the case of $\iota\rightarrow \pi$.
It suffices to show that:
$$(\lambda d. [\tau_\pi(f(d))]_1, \lambda d. [\tau_\pi(f(d))]_2) \preceq (\lambda d. [\tau_\pi(g(d))]_1, \lambda d. [\tau_\pi(g(d))]_2)$$
or equivalently that $\lambda d. [\tau_\pi(f(d))]_1 \leq \lambda d. [\tau_\pi(g(d))]_1$
and $\lambda d. [\tau_\pi(f(d))]_2 \geq \lambda d. [\tau_\pi(g(d))]_2$, or equivalently that
for every $d$, $[\tau_\pi(f(d))]_1 \leq [\tau_\pi(g(d))]_1$ and $[\tau_\pi(f(d))]_2 \geq [\tau_\pi(g(d))]_2$.
This holds because, since $f\preceq g$, it holds $f(d) \preceq g(d)$ and by the induction hypothesis,
$\tau_\pi(f(d)) \preceq \tau_\pi(g(d))$. Consider now the case of $\pi_1 \rightarrow \pi_2$.
It suffices to show that:
\[
\begin{array}{l}
(\lambda(d_1, d_2). [\tau_{\pi_2}(f(\tau^{-1}_{\pi_1}(d_1, d_2)))]_1, \lambda(d_1, d_2). [\tau_{\pi_2}(f(\tau^{-1}_{\pi_1}(d_1, d_2)))]_2)\preceq \\
(\lambda(d_1, d_2). [\tau_{\pi_2}(g(\tau^{-1}_{\pi_1}(d_1, d_2)))]_1, \lambda(d_1, d_2). [\tau_{\pi_2}(g(\tau^{-1}_{\pi_1}(d_1, d_2)))]_2)
\end{array}
\]
or equivalently that $\lambda(d_1, d_2). [\tau_{\pi_2}(f(\tau^{-1}_{\pi_1}(d_1, d_2)))]_1 \leq \lambda(d_1, d_2). [\tau_{\pi_2}(g(\tau^{-1}_{\pi_1}(d_1, d_2)))]_1$
and $\lambda(d_1, d_2). [\tau_{\pi_2}(f(\tau^{-1}_{\pi_1}(d_1, d_2)))]_2 \geq \lambda(d_1, d_2). [\tau_{\pi_2}(g(\tau^{-1}_{\pi_1}(d_1, d_2)))]_2$,
or equivalently that for all $d_1,d_2$, $[\tau_{\pi_2}(f(\tau^{-1}_{\pi_1}(d_1, d_2)))]_1 \leq [\tau_{\pi_2}(g(\tau^{-1}_{\pi_1}(d_1, d_2)))]_1$
and $[\tau_{\pi_2}(f(\tau^{-1}_{\pi_1}(d_1, d_2)))]_2 \geq [\tau_{\pi_2}(g(\tau^{-1}_{\pi_1}(d_1, d_2)))]_2$.
Since $f\preceq g$, it holds that $f(\tau^{-1}_{\pi_1}(d_1, d_2))) \preceq g(\tau^{-1}_{\pi_1}(d_1, d_2)))$,
and by the induction hypothesis $\tau_{\pi_2}(f(\tau^{-1}_{\pi_1}(d_1, d_2))) \preceq \tau_{\pi_2}(g(\tau^{-1}_{\pi_1}(d_1, d_2)))$,
which is the desired result.

\vspace{0.2cm}
\noindent
{\em Statement 4:} Consider first the case of $\iota\rightarrow \pi$.
It suffices to show that:
$$\lambda d. \tau_\pi^{-1}(f_1(d), f_2(d))  \preceq \lambda d. \tau_\pi^{-1}(g_1(d), g_2(d))$$
or equivalently that for every $d$, $\tau_\pi^{-1}(f_1(d), f_2(d))  \preceq  \tau_\pi^{-1}(g_1(d), g_2(d))$.
Since $(f_1,f_2) \preceq (g_1,g_2)$, it holds $(f_1(d),f_2(d)) \preceq (g_1(d),g_2(d))$, and the
result follows from the induction hypothesis. Consider now the case of $\pi_1 \rightarrow \pi_2$.
It suffices to show that:
$$\lambda d. \tau^{-1}_{\pi_2}(f_1(\tau_{\pi_1}(d)), f_2(\tau_{\pi_1}(d)))\preceq \lambda d. \tau^{-1}_{\pi_2}(g_1(\tau_{\pi_1}(d)), g_2(\tau_{\pi_1}(d)))$$
or equivalently that for every $d$, $\tau^{-1}_{\pi_2}(f_1(\tau_{\pi_1}(d)), f_2(\tau_{\pi_1}(d)))\preceq \tau^{-1}_{\pi_2}(g_1(\tau_{\pi_1}(d)), g_2(\tau_{\pi_1}(d)))$. Since $(f_1,f_2) \preceq (g_1,g_2)$, it holds $(f_1(\tau_{\pi_1}(d)),f_2(\tau_{\pi_1}(d))) \preceq (g_1(\tau_{\pi_1}(d)),g_2(\tau_{\pi_1}(d)))$, and the result follows from the induction hypothesis.
\end{proof}
\begingroup
\def\theproposition{\ref{tau-elim}}
\begin{proposition}
Let $D$ be a nonempty set and let $\pi$ be a predicate type. Then, for every $f \in \lsem \pi \rsem_D$, $\tau^{-1}_{\pi}(\tau_{\pi}(f))=f$,
and for every $(f_1,f_2) \in \lsem \pi \rsem_D^{\mathsf{ma}}\otimes \lsem \pi \rsem_D^{\mathsf{am}}$,
$\tau_{\pi}(\tau^{-1}_{\pi}(f_1,f_2))=(f_1,f_2)$.
\end{proposition}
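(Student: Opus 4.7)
The plan is to prove the two equalities simultaneously by structural induction on the predicate type $\pi$, mirroring the recursive structure of the definitions of $\tau_\pi$ and $\tau^{-1}_\pi$. By Proposition~\ref{tau-monotonic}(1), both compositions $\tau^{-1}_\pi \circ \tau_\pi$ and $\tau_\pi \circ \tau^{-1}_\pi$ are well-typed endomaps on $\lsem \pi \rsem_D$ and $\lsem \pi \rsem_D^{\mathsf{ma}} \otimes \lsem \pi \rsem_D^{\mathsf{am}}$ respectively, so it suffices to check pointwise equality in each case.

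For the base case $\pi = o$, both claims follow by direct inspection of the three defining clauses of $\tau_o$ and $\tau^{-1}_o$: each pairing $(\mfalse,\mfalse)$, $(\mtrue,\mtrue)$, $(\mfalse,\mtrue)$ matches uniquely with $\mfalse$, $\mtrue$, $\mundef$.

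For the inductive step, assume the two statements hold for $\pi$, $\pi_1$ and $\pi_2$. In the case $\iota \rightarrow \pi$, given $f \in \lsem \iota \rightarrow \pi \rsem_D$ and an arbitrary $d \in D$, I would unfold the definitions to get
\[
\tau^{-1}_{\iota\rightarrow\pi}(\tau_{\iota\rightarrow\pi}(f))(d) = \tau^{-1}_\pi\bigl([\tau_\pi(f(d))]_1,[\tau_\pi(f(d))]_2\bigr) = \tau^{-1}_\pi(\tau_\pi(f(d))) = f(d),
\]
where the last step invokes the induction hypothesis on $\pi$. The other direction is symmetric: for $(f_1,f_2)$ and any $d$, expand to $\tau_\pi(\tau^{-1}_\pi(f_1(d),f_2(d)))$, which the IH rewrites to $(f_1(d),f_2(d))$, then reassemble the pair via $\eta$-equivalence on each component. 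The case $\pi_1 \rightarrow \pi_2$ is structurally identical but involves one extra round-trip through $\tau_{\pi_1}$ and $\tau^{-1}_{\pi_1}$: for $f \in \lsem \pi_1 \rightarrow \pi_2 \rsem_D$ and $d \in \lsem \pi_1 \rsem_D$,
\[
\tau^{-1}_{\pi_1\rightarrow\pi_2}(\tau_{\pi_1\rightarrow\pi_2}(f))(d) = \tau^{-1}_{\pi_2}\bigl(\tau_{\pi_2}(f(\tau^{-1}_{\pi_1}(\tau_{\pi_1}(d))))\bigr),
\]
which collapses to $f(d)$ by applying the IH twice, once on $\pi_1$ (inside, to cancel $\tau^{-1}_{\pi_1} \circ \tau_{\pi_1}$) and once on $\pi_2$ (outside). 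The reverse composition $\tau_{\pi_1\rightarrow\pi_2} \circ \tau^{-1}_{\pi_1\rightarrow\pi_2}$ is handled the same way, checking each of the two coordinates separately on an arbitrary argument $(d_1,d_2)$.

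There is no real obstacle here: the whole argument is a bookkeeping exercise in unfolding the recursive clauses of $\tau$ and $\tau^{-1}$. The only mildly delicate point is keeping the two directions of induction in lockstep for the functional case, since the definition of $\tau$ on $\pi_1 \rightarrow \pi_2$ calls $\tau^{-1}$ on $\pi_1$ (and vice versa), so the induction hypothesis must give both equalities on smaller types before either can be proved on the larger one — but a simultaneous induction supplies exactly this.
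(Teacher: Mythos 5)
Your proposal is correct and follows essentially the same route as the paper's own proof: a simultaneous structural induction on $\pi$, with the base case $\pi=o$ checked by inspection and the inductive cases handled by unfolding the recursive clauses of $\tau$ and $\tau^{-1}$ and applying the induction hypothesis (twice, for the argument and result types, in the $\pi_1\rightarrow\pi_2$ case). Your observation that the mutual recursion between $\tau$ and $\tau^{-1}$ forces the two equalities to be proved in lockstep is exactly the point the paper's simultaneous induction addresses.
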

\addtocounter{proposition}{-1}
\endgroup
\begin{proof}
The proof of the two statements is by a simultaneous induction on the structure of $\pi$.
The case $\pi = o$ is immediate. Assume the two statements hold for $\pi$, $\pi_1$ and $\pi_2$.
We demonstrate that they hold for $\iota \rightarrow \pi$ and for $\pi_1 \rightarrow \pi_2$.

We have:
\[
\begin{array}{lll}
                                                                  &   & \tau^{-1}_{\iota \rightarrow \pi}(\tau_{\iota\rightarrow \pi}(f)) \,\,=\vspace{0.1cm}\\
                                                                  & = & \tau^{-1}_{\iota\rightarrow \pi}(\lambda d. [\tau_\pi(f(d))]_1, \lambda d. [\tau_\pi(f(d))]_2)\\
                                                                  &   &  \mbox{(Definition of $\tau_{\iota \rightarrow \pi}$)}\\
                                                                  & = & \lambda d. \tau^{-1}_{\pi}([\tau_\pi(f(d))]_1,  [\tau_\pi(f(d))]_2)\\
                                                                  &   & \mbox{(Definition of $\tau^{-1}_{\iota \rightarrow \pi}$)}\\
                                                                  & = & \lambda d. \tau^{-1}_{\pi}(\tau_{\pi}(f(d)))\\
                                                                  &   & \mbox{(Definition of $[\cdot]_1$ and $[\cdot]_2$)}\\
                                                                  & = & \lambda d. f(d) \\
                                                                  &   & \mbox{(Induction Hypothesis)}\\
                                                                  & = & f
\end{array}
\]
Also:
\[
\begin{array}{lll}
                                                                  &   & \tau_{\iota \rightarrow \pi}(\tau^{-1}_{\iota\rightarrow \pi}(f_1,f_2))\,\,=\vspace{0.1cm}\\
                                                                  & = & \tau_{\iota\rightarrow \pi}(\lambda d. \tau_{\pi}^{-1}(f_1(d),f_2(d))) \\
                                                                        &   & \mbox{(Definition of $\tau^{-1}_{\iota \rightarrow \pi}$)}\\
                                                                  & = & (\lambda d. [\tau_{\pi}(\tau^{-1}_\pi(f_1(d),f_2(d)))]_1, \lambda d. [\tau_{\pi}(\tau^{-1}_\pi(f_1(d),f_2(d)))]_2) \\
                                                                  &   & \mbox{(Definition of $\tau_{\iota \rightarrow \pi}$)}\\
                                                                  & = & (\lambda d. [(f_1(d),f_2(d))]_1, \lambda d. [(f_1(d),f_2(d))]_2) \\
                                                                  &   & \mbox{(Induction Hypothesis)}\\
                                                                  & = & (\lambda d. f_1(d),\lambda d. f_2(d)) \\
                                                                  &   & \mbox{(Definition of $[\cdot]_1$ and $[\cdot]_2$)}\\
                                                                  & = & (f_1,f_2)
\end{array}
\]
Consider now the case of $\pi_1\rightarrow \pi_2$. We have:
\[
\begin{array}{lll}
                                                                  &   & \tau^{-1}_{\pi_1 \rightarrow \pi_2}(\tau_{\pi_1\rightarrow \pi_2}(f))\,\,=\vspace{0.1cm}\\
                                                                  & = & \tau^{-1}_{\pi_1\rightarrow \pi_2}(\lambda(d_1, d_2). [\tau_{\pi_2}(f(\tau^{-1}_{\pi_1}(d_1, d_2)))]_1,
                                                                                   \lambda(d_1, d_2). [\tau_{\pi_2}(f(\tau^{-1}_{\pi_1}(d_1, d_2)))]_2)\\
                                                                  &   &  \mbox{(Definition of $\tau_{\pi_1 \rightarrow \pi_2}$)}\\
                                                                  & = & \lambda d. \tau^{-1}_{\pi_2}([\tau_{\pi_2}(f(\tau^{-1}_{\pi_1}(\tau_{\pi_1}(d))))]_1,
                                                                                                     [\tau_{\pi_2}(f(\tau^{-1}_{\pi_1}(\tau_{\pi_1}(d))))]_2)\\
                                                                  &   & \mbox{(Definition of $\tau^{-1}_{\pi_1 \rightarrow \pi_2}$)}\\
                                                                  & = & \lambda d. \tau^{-1}_{\pi_2}([\tau_{\pi_2}(f(d))]_1, [\tau_{\pi_2}(f(d))]_2)\\
                                                                  &   & \mbox{(Induction Hypothesis)}\\
                                                                  & = & \lambda d. \tau^{-1}_{\pi_2}(\tau_{\pi_2}(f(d)))\\
                                                                  &   & \mbox{(Definition of $[\cdot]_1$ and $[\cdot]_2$)}\\
                                                                  & = & \lambda d. f(d)\\
                                                                  &   & \mbox{(Induction Hypothesis)}\\
                                                                  & = & f
\end{array}
\]
Also:
\[
\begin{array}{lll}
                                                                  &   & \tau_{\pi_1 \rightarrow \pi_2}(\tau^{-1}_{\pi_1\rightarrow \pi_2}(f_1,f_2))\,\,=\vspace{0.1cm}\\
                                                                  & = & \tau_{\pi_1\rightarrow \pi_2}(\lambda d. \tau_{\pi_2}^{-1}(f_1(\tau_{\pi_1}(d)),f_2(\tau_{\pi_1}(d))) \\
                                                                        &   & \mbox{(Definition of $\tau^{-1}_{\pi_1 \rightarrow \pi_2}$)}\\
                                                                  & = & (\lambda(d_1, d_2). [\tau_{\pi_2}(\tau_{\pi_2}^{-1}(f_1(\tau_{\pi_1}(\tau^{-1}_{\pi_1}(d_1, d_2))),f_2(\tau_{\pi_1}(\tau^{-1}_{\pi_1}(d_1, d_2)))))]_1, \\
                                                                  &   & \,\,\lambda(d_1, d_2). [\tau_{\pi_2}(\tau_{\pi_2}^{-1}(f_1(\tau_{\pi_1}(\tau^{-1}_{\pi_1}(d_1, d_2))),f_2(\tau_{\pi_1}(\tau^{-1}_{\pi_1}(d_1, d_2)))))]_2) \\
                                                                  &   & \mbox{(Definition of $\tau_{\pi_1 \rightarrow \pi_2}$)}\\
                                                                  & = & (\lambda(d_1, d_2). [f_1(d_1,d_2),f_2(d_1,d_2)]_1,\lambda(d_1, d_2). [f_1(d_1,d_2),f_2(d_1,d_2)]_2)\\
                                                                  &   & \mbox{(Induction Hypothesis)}\\
                                                                  & = & (\lambda(d_1, d_2). f_1(d_1,d_2),\lambda(d_1, d_2).f_2(d_1,d_2))\\
                                                                  &   & \mbox{(Definition of $[\cdot]_1$ and $[\cdot]_2$)}\\
                                                                  & = & (f_1,f_2)
\end{array}
\]
The above completes the proof of the proposition.
\end{proof}

\section{$\!\!\!$: An Extension of Consistent Approximation Fixpoint Theory}\label{appendix-of-section-6a}
In this appendix we propose a mild extension of the theory of consistent
approximating operators developed in~\cite{DMT04}. We briefly highlight the main
idea behind the work in~\cite{DMT04} and then justify the necessity for our extension.

Let $(L,\leq)$ be a complete lattice. The authors in~\cite{DMT04} consider the set
$L^c=\{(x,y) \in L \times L \mid x\leq y\}$. Intuitively speaking, a pair $(x,y) \in L^c$
can be viewed as an approximation to all elements $z \in L$ such that $x\leq z \leq y$.
An operator $A:L^c \rightarrow L^c$ is called in~\cite{DMT04} a {\em consistent approximating
operator} if it is $\preceq$-monotone (see below) and for every $x \in L$, $A(x,x)_1=A(x,x)_2$
(the subscripts 1 and 2 denote projection to the first and second elements respectively of the
pair returned by $A$). In Section 3 of~\cite{DMT04}, an elegant theory is developed
whose purpose is to demonstrate how, under specific conditions, one can characterize the
{\em well-founded fixpoint} of a given consistent approximating operator $A$. Since approximating
operators emerge in many non-monotonic formalisms, the theory developed in~\cite{DMT04} provides a
useful tool for the study of the semantics of such formalisms.

In our work, the immediate consequence operator $T_{\mathsf{P}}$ is not an approximating
operator in the sense of~\cite{DMT04}. More specifically, $T_{\mathsf{P}}$ is a function in
$({\cal H}^{\mathsf{ma}}_\mathsf{P} \otimes  {\cal H}^{\mathsf{am}}_\mathsf{P})
\rightarrow ({\cal H}^{\mathsf{ma}}_\mathsf{P} \otimes  {\cal H}^{\mathsf{am}}_\mathsf{P})$.
In other words, there is not just a single lattice $L$ involved in the definition of
$T_{\mathsf{P}}$, but instead two lattices, namely ${\cal H}^{\mathsf{ma}}_{\mathsf{P}}$
and ${\cal H}^{\mathsf{am}}_\mathsf{P}$. Moreover, the condition
``for every $x \in L$, $A(x,x)_1=A(x,x)_2$'' required in~\cite{DMT04}, does not hold
in our case, because the two arguments of $T_{\mathsf{P}}$ range over two different sets
(namely ${\cal H}^{\mathsf{ma}}_{\mathsf{P}}$ and ${\cal H}^{\mathsf{am}}_\mathsf{P}$).
We therefore need to define an extension of the material in Section 3 of~\cite{DMT04},
that suits our purposes.

In the following, we develop the above mentioned extension following closely the
statements and proofs of~\cite{DMT04}. The material is presented in an abstract form
(as in~\cite{DMT04}), with the purpose of having a wider applicability than the
present paper. In order to retrieve the connections with the present paper,
one can take $A=T_{\mathsf{P}}$, $L_1 ={\cal H}^{\mathsf{ma}}_\mathsf{P}$
and $L_2 = {\cal H}^{\mathsf{am}}_\mathsf{P}$.

Let $(L,\leq)$ be a partially ordered set and assume that $L$ contains a least element
$\perp$ and a greatest element $\top$ with respect to $\leq$. Let $L_1,L_2 \subseteq L$
be non-empty sets such that $L_1 \cup L_2 = L$ and $(L_1,\leq)$ and $(L_2,\leq)$ are
complete lattices that both contain the elements $\perp$ and $\top$. We will denote
the least upper bound operations in the two lattices by $\textit{lub}_{L_1}$ and
$\textit{lub}_{L_2}$ respectively (we will also use $\bigvee_{L_1}$ and $\bigvee_{L_2}$).
We denote the greatest lower bound operations by $\textit{glb}_{L_1}$ and $\textit{glb}_{L_2}$
(also denoted by $\bigwedge_{L_1}$ and $\bigwedge_{L_2}$).
We assume that our lattices satisfy the following two properties:

\begin{enumerate}
\item {\bf Interlattice Lub Property:} Let $b \in L_2$ and $S \subseteq L_1$ such that for every $x\in S$,
$x\leq b$. Then, $\bigvee_{L_1} S \leq b$.

\item {\bf Interlattice Glb Property:} Let $a \in L_1$ and $S \subseteq L_2$ such that for every $x\in S$,
$x\geq a$. Then, $\bigwedge_{L_2} S \geq a$.
\end{enumerate}

\noindent
{\bf Remark:} It can be easily verified (see Lemma~\ref{interlattice_lub_lemma} in~\ref{appendix-of-section-5})
that both the Interlattice Lub Property and the Interlattice Glb Property hold when we take
$L_1={\cal H}^{\mathsf{ma}}_{\mathsf{P}}$ and $L_2 = {\cal H}^{\mathsf{am}}_{\mathsf{P}}$.

\vspace{0.2cm}

Given $(x,y),(x',y')\in L_1 \times L_2$, we will write
$(x,y) \preceq (x',y')$ if $x\leq x'$ and $y' \leq y$. We will write:
$$L_1 \otimes L_2 = \{(x,y) \mid x \in L_1, y\in L_2, x\leq y\}$$
The above set is non-empty since $(\perp,\top)\in L_1 \otimes L_2$.
\begin{definition}
A function $A: L_1 \otimes L_2 \rightarrow L_1 \otimes L_2$ is called a {\em consistent approximating operator}
if it is $\preceq$-monotonic.
\end{definition}

We will write $\textit{Appx}(L_1\otimes L_2)$ for the set of all consistent approximating
operators over $L_1\otimes L_2$. In the following results we assume we work with a given
consistent approximating operator $A$ (and therefore the symbol $A$ will appear free in
most definitions and results).
\begin{definition}
The pair $(a,b) \in L_1 \otimes L_2$ will be called $A$-reliable if $(a,b)\preceq A(a,b)$.
\end{definition}
Given $a \in L_1$ and $b\in L_2$, we write $[a,b]_{L_1}= \{x \in L_1\mid a \leq x \leq b\}$.
Symmetrically, $[a,b]_{L_2}= \{x \in L_2\mid a \leq x \leq b\}$.

\begin{proposition}\label{interals-complete-lattices}
For all $a \in L_1$ and $b\in L_2$, the sets $[\perp,b]_{L_1}$ and $[a,\top]_{L_2}$ are complete lattices.
\end{proposition}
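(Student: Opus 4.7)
The plan is to invoke the alternative characterization of complete lattices given in Theorem~\ref{alternative_lattice_definition}: it suffices to exhibit a least element in $[\perp,b]_{L_1}$ together with a lub in $[\perp,b]_{L_1}$ for every non-empty subset, and symmetrically to exhibit a greatest element in $[a,\top]_{L_2}$ together with a glb in $[a,\top]_{L_2}$ for every non-empty subset.

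For $[\perp,b]_{L_1}$, I would first observe that $\perp \in L_1$ and $\perp \leq b$ (since $\perp$ is the least element of the whole ambient order on $L$), so $\perp$ belongs to $[\perp,b]_{L_1}$ and is obviously its least element. Then, given any non-empty $S \subseteq [\perp,b]_{L_1}$, since $(L_1,\leq)$ is a complete lattice the element $\bigvee_{L_1} S$ exists in $L_1$. Every $x \in S$ satisfies $x \leq b$ with $b \in L_2$, so the Interlattice Lub Property yields $\bigvee_{L_1} S \leq b$, which places $\bigvee_{L_1} S$ in $[\perp,b]_{L_1}$. It is immediate that this element is the lub of $S$ in $[\perp,b]_{L_1}$, since upper bounds inside the interval are in particular upper bounds in $L_1$.

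For $[a,\top]_{L_2}$, I would argue symmetrically: $\top \in L_2$ and $a \leq \top$, so $\top$ is the greatest element of the interval. For any non-empty $S \subseteq [a,\top]_{L_2}$, the element $\bigwedge_{L_2} S$ exists in the complete lattice $(L_2,\leq)$. Since every $x \in S$ satisfies $a \leq x$ with $a \in L_1$, the Interlattice Glb Property gives $a \leq \bigwedge_{L_2} S$, so this glb lies in $[a,\top]_{L_2}$ and, by the same reasoning as above, is the glb of $S$ in $[a,\top]_{L_2}$.

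The proof involves no real obstacle beyond invoking the two Interlattice Properties at precisely the right moment: without them one could not guarantee that the lub computed in $L_1$ (respectively, the glb computed in $L_2$) remains bounded by the external bound $b \in L_2$ (respectively, above $a \in L_1$), which is exactly what is needed to keep the extremum inside the interval. Everything else is a routine application of Theorem~\ref{alternative_lattice_definition}.
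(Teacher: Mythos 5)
Your proposal is correct and follows essentially the same route as the paper's own proof: apply Theorem~\ref{alternative_lattice_definition}, note that $\perp$ (respectively $\top$) lies in the interval, and use the Interlattice Lub (respectively Glb) Property to keep the lub computed in $L_1$ (respectively the glb computed in $L_2$) inside the interval. The paper merely leaves the $[a,\top]_{L_2}$ case as ``symmetrical,'' which you have spelled out explicitly.
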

\begin{proof}
We use Theorem~\ref{alternative_lattice_definition} of~\ref{appendix-of-section-4}.
Consider first the set $[\perp,b]_{L_1}$ which obviously has a least element (since $\perp$
is the least element of both $L_1$ and $L_2$ and therefore $\perp \in [\perp,b]_{L_1}$).
Let $S$ be a non-empty subset of $[\perp,b]_{L_1}$. Since $L_1$ is a complete
lattice, $\bigvee_{L_1} S \in L_1$. It suffices to show that
$\bigvee_{L_1} S \in [\perp,b]_{L_1}$. Since $S \subseteq [\perp,b]_{L_1}$,
for every $x \in S$ it holds $x \leq b$. By the Interlattice Lub Property,
$\bigvee_{L_1} S \leq b$, and therefore $\bigvee_{L_1} S \in [\perp,b]_{L_1}$.

The proof for the case of $[a,\top]_{L_2}$ is symmetrical and uses the
Interlattice Glb Property instead.
\end{proof}

The following proposition corresponds to Proposition 3.3 in~\cite{DMT04}:
\begin{proposition}
Let $(a,b) \in L_1 \otimes L_2$ and assume that $(a,b)$ is
$A$-reliable. Then, for every $x \in [\perp,b]_{L_1}$,
it holds $\perp \leq A(x,b)_1 \leq b$. Moreover, for every $x \in [a,\top]_{L_2}$,
it holds $a\leq A(a,x)_2 \leq \top$.
\end{proposition}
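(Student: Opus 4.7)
The plan is to handle both claims symmetrically; the bounds $\perp \leq A(x,b)_1$ and $A(a,x)_2 \leq \top$ are trivially the extremal elements of their respective lattices, so only $A(x,b)_1 \leq b$ and $a \leq A(a,x)_2$ require real work. The obvious idea of comparing $(x,b)$ directly with $(a,b)$ via $\preceq$-monotonicity of $A$ fails, since $x$ and $a$ live in $L_1$ but need not be comparable there. The main obstacle is getting around this, and the key step is to introduce an auxiliary pair that $\preceq$-dominates both, built from the lattice operations of $L_1$ or $L_2$ and kept inside $L_1 \otimes L_2$ by the interlattice properties.

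For the first claim I would set $y := x \vee_{L_1} a$, which exists since $(L_1,\leq)$ is a complete lattice. The Interlattice Lub Property applied to $\{x,a\} \subseteq L_1$ with the upper bound $b \in L_2$ gives $y \leq b$, hence $(y,b) \in L_1 \otimes L_2$. From $x \leq y$ and $a \leq y$ I get $(x,b) \preceq (y,b)$ and $(a,b) \preceq (y,b)$, so $\preceq$-monotonicity of $A$ yields $A(x,b)_1 \leq A(y,b)_1$ and $A(y,b)_2 \leq A(a,b)_2$. Since $A(y,b) \in L_1 \otimes L_2$ we have $A(y,b)_1 \leq A(y,b)_2$, and the $A$-reliability of $(a,b)$ gives $A(a,b)_2 \leq b$; chaining these inequalities produces
$$A(x,b)_1 \;\leq\; A(y,b)_1 \;\leq\; A(y,b)_2 \;\leq\; A(a,b)_2 \;\leq\; b,$$
as required.

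For the second claim I would dualise, taking $y := x \wedge_{L_2} b$ instead. The Interlattice Glb Property applied to $\{x,b\} \subseteq L_2$ with lower bound $a \in L_1$ guarantees $a \leq y$, so $(a,y) \in L_1 \otimes L_2$. From $y \leq x$ and $y \leq b$ I deduce $(a,x) \preceq (a,y)$ and $(a,b) \preceq (a,y)$, and $\preceq$-monotonicity of $A$ combined with the $A$-reliability bound $a \leq A(a,b)_1$ yields $A(a,y)_2 \leq A(a,x)_2$ and $a \leq A(a,b)_1 \leq A(a,y)_1$. Together with $A(a,y)_1 \leq A(a,y)_2$, the chain
$$a \;\leq\; A(a,b)_1 \;\leq\; A(a,y)_1 \;\leq\; A(a,y)_2 \;\leq\; A(a,x)_2$$
closes the argument. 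In both halves, the interlattice properties are precisely what is needed to ensure that the auxiliary pair actually lives in $L_1 \otimes L_2$, so that $A$ may legitimately be evaluated on it.
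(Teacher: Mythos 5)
Your proof is correct and takes essentially the same route as the paper: both arguments introduce an auxiliary element of $L_1$ (resp.\ $L_2$) that dominates both $x$ and $a$ (resp.\ lies below both $x$ and $b$), use the Interlattice Lub (resp.\ Glb) Property to keep the resulting pair inside $L_1\otimes L_2$, and then chain $\preceq$-monotonicity of $A$, consistency of the output pair, and $A$-reliability in exactly the same order. The only (immaterial) difference is your choice of auxiliary element: you take the binary join $x\vee_{L_1}a$, whereas the paper uses the single element $\textit{lub}_{L_1}\{y\in L_1\mid y\leq b\}$, which works uniformly for every $x$ in the interval.
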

\begin{proof}
Define $a^* = \textit{lub}_{L_1}\{y \in L_1 \mid y \leq b\}$. By the fact that $a\leq b$ and the
definition of $a^*$, we get that $a \leq a^*$. By the Interlattice Lub Property we get that
$a^* \leq b$ and therefore $(a^*,b) \in L_1 \otimes L_2$. Moreover, $(x,b) \preceq (a^*,b)$.
Due to the $\preceq$-monotonicity of $A$ we have $A(x,b) \preceq A(a^*,b)$, and therefore
$A(x,b)_1 \leq A(a^*,b)_1$. Then:
\[
\begin{array}{llll}
A(a^*,b)_1                & \leq & A(a^*,b)_2   & (\mbox{Consistency of $A$})\\
                          & \leq & A(a,b)_2     & (\mbox{$a\leq a^*$ and $A$ is $\preceq$-monotone})\\
                          & \leq & b            & (\mbox{$A$-reliability})
\end{array}
\]

For the second part of the proof, define $b^* = \textit{glb}_{L_2}\{y \in L_2 \mid y \geq a\}$.
By the fact that $b \geq a$ and the definition of $b^*$, we get that $b^*\leq b$. By the Interlattice
Glb Property we get that $b^* \geq a$ and therefore $(a,b^*) \in L_1 \otimes L_2$.
Moreover, $(a,x) \preceq (a,b^*)$. Due to the $\preceq$-monotonicity of $A$ we have
$A(a,x) \preceq A(a,b^*)$, and therefore $A(a,x)_2 \geq A(a,b^*)_2$. Then:
\[
\begin{array}{llll}
A(a,b^*)_2                 & \geq & A(a,b^*)_1 & (\mbox{Consistency of $A$})\\
                           & \geq & A(a,b)_1   & (\mbox{$b^* \leq b$ and $A$ is $\preceq$-monotone})\\
                           & \geq & a          & (\mbox{$A$-reliability})
\end{array}
\]
This completes the proof of the proposition.
\end{proof}

The above proposition implies that for every $A$-reliable pair $(a,b)$, the restriction
of $A(.,b)_1$ to $[\perp,b]_{L_1}$ and the restriction of $A(a,.)_2$ to $[a,\top]_{L_2}$
are in fact operators (namely functions $[\perp,b]_{L_1} \rightarrow [\perp,b]_{L_1}$
and $[a,\top]_{L_2} \rightarrow [a,\top]_{L_2}$) on these intervals. Since by
Proposition~\ref{interals-complete-lattices} we know that $([\perp,b]_{L_1},\leq)$ and $([a,\top]_{L_2},\leq)$
are complete lattices, the operators $A(\cdot,b)_1$ and $A(a,\cdot)_2$ have least fixpoints in the
corresponding lattices. We define:
$$b^{\downarrow} = \textit{lfp}(A(\cdot,b)_1)$$
and
$$a^{\uparrow} = \textit{lfp}(A(a,\cdot)_2)$$
In the following, we will call the function mapping the $A$-reliable pair $(a,b)$ to
$(b^{\downarrow},a^{\uparrow})$, the {\em stable revision operator} for the
approximating operator $A$. We will denote this mapping by ${\cal C}_A$, namely:
$${\cal C}_A(x,y) = (y^{\downarrow},x^{\uparrow}) = (\textit{lfp}(A(\cdot,y)_1), \textit{lfp}(A(x,\cdot)_2))$$
We have the following proposition, which corresponds to Proposition 3.6 of~\cite{DMT04}:
\begin{proposition}\label{Corresponding_to_DMT_Prop_3.6}
Let $A\in \textit{Appx}(L_1\otimes L_2)$. For every $A$-reliable pair $(a,b)$,
$b^{\downarrow} \leq b$, $a \leq a^{\uparrow} \leq b$, and $(b^{\downarrow},a^{\uparrow})\in L_1 \otimes L_2$.
\end{proposition}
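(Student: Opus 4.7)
The proposition splits into four claims, two of which are immediate from the definitions: $b^{\downarrow}\leq b$ because $b^{\downarrow}$ is by construction an element of $[\perp,b]_{L_1}$, and $a\leq a^{\uparrow}$ because $a^{\uparrow}\in[a,\top]_{L_2}$.

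For $a^{\uparrow}\leq b$, the plan is to exhibit $b$ itself as a pre-fixpoint of the monotone operator $G:=A(a,\cdot)_2$ on $[a,\top]_{L_2}$ (membership in the interval is clear since $a\leq b$). The $A$-reliability of $(a,b)$ gives $A(a,b)_2\leq b$, i.e.\ $G(b)\leq b$. Since $a^{\uparrow}=\textit{lfp}(G)$ is the infimum of the pre-fixpoints of $G$ by Knaster--Tarski, we conclude $a^{\uparrow}\leq b$.

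The core of the argument is $b^{\downarrow}\leq a^{\uparrow}$. The key auxiliary lemma I would prove is: for every $y\in L_2$ with $a\leq y\leq b$ and $A(a,y)_2\leq y$, we have $b^{\downarrow}\leq y$. Granting this, apply it with $y=a^{\uparrow}$: the hypotheses hold by what has just been shown and by the fact that $a^{\uparrow}$ is a fixpoint of $G$, giving $b^{\downarrow}\leq a^{\uparrow}$. To prove the lemma, first note that $(a,y)$ is itself $A$-reliable: from $y\leq b$ we get $(a,b)\preceq(a,y)$, so $\preceq$-monotonicity of $A$ yields $a\leq A(a,b)_1\leq A(a,y)_1$, while $A(a,y)_2\leq y$ is the hypothesis. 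The previous proposition then legitimizes $F_y:=A(\cdot,y)_1$ as a monotone self-map of $[\perp,y]_{L_1}$ with least fixpoint $y^{\downarrow}$. Moreover, $y\leq b$ implies $(x,b)\preceq(x,y)$ for every $x\in[\perp,y]_{L_1}$, so pointwise $F_b(x)=A(x,b)_1\leq A(x,y)_1=F_y(x)$. Evaluating at $x=y^{\downarrow}\in[\perp,y]_{L_1}\subseteq[\perp,b]_{L_1}$ gives $F_b(y^{\downarrow})\leq F_y(y^{\downarrow})=y^{\downarrow}$, exhibiting $y^{\downarrow}$ as a pre-fixpoint of $F_b$ inside $[\perp,b]_{L_1}$. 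A second application of Knaster--Tarski yields $b^{\downarrow}\leq y^{\downarrow}\leq y$.

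The main obstacle I anticipate is precisely the cross-lattice comparison $b^{\downarrow}\leq a^{\uparrow}$: the two objects inhabit different sub-lattices $L_1$ and $L_2$, so one cannot directly run a single fixpoint-induction argument inside one lattice. The detour above circumvents this by turning $a^{\uparrow}$ from a point-to-be-compared into a parameter defining an auxiliary operator $F_{a^{\uparrow}}$ on $L_1$, reducing the cross-lattice inequality to two ordinary Knaster--Tarski comparisons. The Interlattice Lub/Glb Properties enter implicitly through Proposition~\ref{interals-complete-lattices}, which is what makes the sub-intervals $[\perp,y]_{L_1}$ and $[a,\top]_{L_2}$ genuine complete lattices in which lfp's exist.
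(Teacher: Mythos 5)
Your proof is correct, and for the crux of the proposition --- the cross-lattice inequality $b^{\downarrow}\leq a^{\uparrow}$ --- it takes a genuinely different route from the paper. The paper introduces the auxiliary element $a^{*}=\bigvee_{L_1}\{x\in L_1\mid x\leq a^{\uparrow}\}$, the best $L_1$-approximation of $a^{\uparrow}$ from below; it then chains $A(a^{*},b)_1\leq A(a^{*},a^{\uparrow})_1\leq A(a^{*},a^{\uparrow})_2\leq A(a,a^{\uparrow})_2=a^{\uparrow}$ (using $\preceq$-monotonicity and consistency), so that by the lub-definition of $a^{*}$ one gets $A(a^{*},b)_1\leq a^{*}$, exhibiting $a^{*}$ as a pre-fixpoint of $A(\cdot,b)_1$ with $b^{\downarrow}\leq a^{*}\leq a^{\uparrow}$; the Interlattice Lub Property is invoked directly to place $a^{*}$ below $a^{\uparrow}$ and hence below $b$. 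You instead first verify that $(a,a^{\uparrow})$ is itself $A$-reliable, which legitimizes the operator $A(\cdot,a^{\uparrow})_1$ on $[\perp,a^{\uparrow}]_{L_1}$ and its least fixpoint $(a^{\uparrow})^{\downarrow}$, and then compare the two operators $A(\cdot,b)_1\leq A(\cdot,a^{\uparrow})_1$ pointwise to conclude $b^{\downarrow}\leq(a^{\uparrow})^{\downarrow}\leq a^{\uparrow}$. Both arguments are sound; yours costs one extra verification (the reliability of the new pair) but confines the interlattice axioms to the already-proved interval lemmas, and it yields the stronger fact $b^{\downarrow}\leq\textit{lfp}(A(\cdot,a^{\uparrow})_1)$, which is precisely the $A$-prudence ingredient that the paper has to establish separately in the \emph{next} proposition by essentially the same operator comparison you use here. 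In that sense your decomposition front-loads work that pays off later, whereas the paper's $a^{*}$ construction is the more economical single-shot argument for the present statement alone.
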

\begin{proof}
The inequalities $b^{\downarrow} \leq b$ and $a\leq a^{\uparrow}$ follow from the definition of
the stable revision operator. By the $A$-reliability of $(a,b)$ we have $A(a,b)_2 \leq b$ and
therefore $b$ is a pre-fixpoint of $A(a,\cdot)_2$. Since $a^{\uparrow}$ is the least pre-fixpoint
of $A(a,\cdot)_2$, we conclude that $a^{\uparrow} \leq b$.

Let $a^*=\textit{lub}_{L_1}\{x \in L_1 \mid x \leq a^{\uparrow}\}$.
Since $a\in \{x \in L_1 \mid x \leq a^{\uparrow}\}$ and since $a^*$
is the $\textit{lub}$ of this set, it holds $a \leq a^*$. Moreover,
notice that $a^*$ is in the domain of $A(\cdot,b)_1$ because
(by the Interlattice Lub Property) $a^* \leq a^{\uparrow}$, and since
$a^{\uparrow} \leq b$ we get $a^* \leq b$. We have:
\[
\begin{array}{llll}
   A(a^*,b)_1 & \leq & A(a^*,a^{\uparrow})_1 & (\mbox{$A$ is $\preceq$-monotonic})\\
              & \leq & A(a^*,a^{\uparrow})_2 & (\mbox{$A$ is consistent})\\
              & \leq & A(a,a^{\uparrow})_2   & (\mbox{$A$ is $\preceq$-monotonic})\\
              & =    & a^{\uparrow}          & (\mbox{$a^{\uparrow}$ fixpoint of $A(a,\cdot)_2$})
\end{array}
\]
Consequently, $A(a^*,b)_1 \leq a^{\uparrow}$ and therefore $A(a^*,b)_1 \in \{x \in L_1 \mid x \leq a^{\uparrow}\}$.
But $a^*=\textit{lub}_{L_1}\{x \in L_1 \mid x \leq a^{\uparrow}\}$ and therefore $A(a^*,b)_1 \leq a^*$.
It follows that $a^*$ is a pre-fixpoint of the operator $A(\cdot,b)_1$. Thus,
$b^{\downarrow} = \textit{lfp}(A(\cdot,b)_1) \leq a^* \leq a^{\uparrow}$.
\end{proof}
\begin{definition}
An $A$-reliable approximation $(a,b)$ is $A$-prudent if $a\leq b^{\downarrow}$.
\end{definition}
\begin{proposition}
Let $A\in \textit{Appx}(L_1\otimes L_2)$ and let $(a,b) \in L_1\otimes L_2$ be $A$-prudent.
Then, $(a,b) \preceq (b^{\downarrow},a^{\uparrow})$ and $(b^{\downarrow},a^{\uparrow})$ is $A$-prudent.
\end{proposition}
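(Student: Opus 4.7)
The plan is to verify the two conjuncts separately, and for the second conjunct to verify each of the three clauses in the definition of prudence (namely: membership in $L_1\otimes L_2$, $A$-reliability, and the inequality $b^{\downarrow}\leq (a^{\uparrow})^{\downarrow}$).

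For the first conjunct $(a,b)\preceq (b^{\downarrow},a^{\uparrow})$, I would observe that by the definition of $\preceq$ I need $a\leq b^{\downarrow}$ and $a^{\uparrow}\leq b$. The former is built into the definition of $A$-prudence, and the latter is one of the statements already established in Proposition~\ref{Corresponding_to_DMT_Prop_3.6}. That same proposition also gives $(b^{\downarrow},a^{\uparrow})\in L_1\otimes L_2$, disposing of clause (2a) of $A$-prudence.

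For $A$-reliability of $(b^{\downarrow},a^{\uparrow})$, I need $b^{\downarrow}\leq A(b^{\downarrow},a^{\uparrow})_1$ and $A(b^{\downarrow},a^{\uparrow})_2\leq a^{\uparrow}$. For the first, I exploit the fixpoint identity $A(b^{\downarrow},b)_1=b^{\downarrow}$ together with $a^{\uparrow}\leq b$, which gives $(b^{\downarrow},b)\preceq(b^{\downarrow},a^{\uparrow})$ and hence, by $\preceq$-monotonicity of $A$ and projection to the first coordinate, $b^{\downarrow}=A(b^{\downarrow},b)_1\leq A(b^{\downarrow},a^{\uparrow})_1$. Symmetrically, I use $A(a,a^{\uparrow})_2=a^{\uparrow}$ and $a\leq b^{\downarrow}$ to get $(a,a^{\uparrow})\preceq(b^{\downarrow},a^{\uparrow})$, so by $\preceq$-monotonicity of $A$ (which reverses the order in the second coordinate) we get $A(b^{\downarrow},a^{\uparrow})_2\leq A(a,a^{\uparrow})_2=a^{\uparrow}$.

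The main obstacle is the prudence clause $b^{\downarrow}\leq (a^{\uparrow})^{\downarrow}$; I also need to justify that $(a^{\uparrow})^{\downarrow}$ is well-defined, which follows from the $A$-reliability of $(b^{\downarrow},a^{\uparrow})$ just proved together with the proposition that supplies $\textit{lfp}$ on the appropriate interval. The key comparison is between the operators $A(\cdot,b)_1$ on $[\perp,b]_{L_1}$ and $A(\cdot,a^{\uparrow})_1$ on $[\perp,a^{\uparrow}]_{L_1}$. Since $a^{\uparrow}\leq b$, for every $x\in[\perp,a^{\uparrow}]_{L_1}$ we have $(x,b)\preceq(x,a^{\uparrow})$, and $\preceq$-monotonicity of $A$ then yields $A(x,b)_1\leq A(x,a^{\uparrow})_1$. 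Applying this at $x=(a^{\uparrow})^{\downarrow}$, which is a fixpoint of $A(\cdot,a^{\uparrow})_1$, gives $A((a^{\uparrow})^{\downarrow},b)_1\leq (a^{\uparrow})^{\downarrow}$, i.e.\ $(a^{\uparrow})^{\downarrow}$ is a pre-fixpoint of $A(\cdot,b)_1$ (lying inside $[\perp,b]_{L_1}$ because $(a^{\uparrow})^{\downarrow}\leq a^{\uparrow}\leq b$), so by the Knaster--Tarski characterisation of the least fixpoint as the least pre-fixpoint, $b^{\downarrow}\leq (a^{\uparrow})^{\downarrow}$, completing the proof.
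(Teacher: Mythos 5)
Your proof is correct and follows essentially the same route as the paper's: the same two $\preceq$-comparisons (via $(b^{\downarrow},b)\preceq(b^{\downarrow},a^{\uparrow})$ and $(a,a^{\uparrow})\preceq(b^{\downarrow},a^{\uparrow})$) establish reliability, and the same operator comparison $A(x,b)_1\leq A(x,a^{\uparrow})_1$ on $[\perp,a^{\uparrow}]_{L_1}$ yields $b^{\downarrow}\leq(a^{\uparrow})^{\downarrow}$. The only cosmetic difference is that you justify the existence of $(a^{\uparrow})^{\downarrow}$ via the freshly proved reliability of $(b^{\downarrow},a^{\uparrow})$ and then use that fixpoint itself as the witness pre-fixpoint of $A(\cdot,b)_1$, whereas the paper exhibits $a^{*}$ as an explicit pre-fixpoint of $A(\cdot,a^{\uparrow})_1$; both are sound.
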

\begin{proof}
By Proposition~\ref{Corresponding_to_DMT_Prop_3.6}, it holds $b^{\downarrow} \leq b$, $a \leq a^{\uparrow}$
and $a^{\uparrow} \leq b$. Since $(a,b)$ is $A$-prudent, we get $(a,b) \preceq (b^{\downarrow},a^{\uparrow})$.

Notice now that by the $\preceq$ monotonicity of $A$ we get that
$b^{\downarrow} = A(b^{\downarrow},b)_1 \leq A(b^{\downarrow},a^{\uparrow})_1$ and
$a^{\uparrow} = A(a,a^{\uparrow})_2 \geq A(b^{\downarrow},a^{\uparrow})_2$.
This implies that $(b^{\downarrow},a^{\uparrow})$ is $A$-reliable.

Observe now that since $a^{\uparrow} \leq b$ and $A$ is $\preceq$-monotonic,
it holds that for every $x \in [\perp,a^{\uparrow}]_{L_1}$, $A(x,b)_1 \leq A(x,a^{\uparrow})_1$.
Therefore, each pre-fixpoint of $A(\cdot,a^{\uparrow})_1$ is a pre-fixpoint
of $A(\cdot,b)_1$. By the proof of Proposition~\ref{Corresponding_to_DMT_Prop_3.6}
we have that $A(a^*,a^{\uparrow})_1 \leq a^{\uparrow}$, and by the definition of
$a^*$ in that same proof, it follows that $A(a^*,a^{\uparrow})_1 \leq a^*$.
Therefore the set of pre-fixpoints of $A(\cdot,a^{\uparrow})_1$ is non-empty. Consequently,
$b^{\downarrow} = \textit{lfp}(A(\cdot,b)_1) \leq  \textit{lfp}(A(\cdot,a^{\uparrow})_1) =
(a^{\uparrow})^{\downarrow}$, and therefore $(b^{\downarrow},a^{\uparrow})$ is $A$-prudent.
\end{proof}

The following proposition (corresponding to Proposition 2.3 in~\cite{DMT04})
now requires in its proof the Interlattice Lub Property.
\begin{proposition}
Let $\{(a_{\kappa},b_{\kappa})\}_{\kappa<\lambda}$, where $\lambda$ is an ordinal,
be a chain in $L_1\otimes L_2$ ordered by the relation $\preceq$. Then:
\begin{enumerate}
\item $\bigvee_{L_1}\{a_{\kappa}\mid \kappa<\lambda\} \leq \bigwedge_{L_2}\{b_{\kappa}\mid \kappa<\lambda\}$.
\item The least upper bound of the chain with respect to $\preceq$ exists,
      and is equal to $(\bigvee_{L_1}\{a_{\kappa}\mid \kappa<\lambda\}, \bigwedge_{L_2}\{b_{\kappa}\mid \kappa<\lambda\})$.
\end{enumerate}
\begin{proof}
We demonstrate the first statement; the proof of the second part is easy and omitted.
For the proof of the first part, notice that since the chain is ordered by $\preceq$,
$\bigwedge_{L_2}\{b_{\kappa}\mid \kappa<\lambda\} = b_0$. Moreover, for every $\kappa <\lambda$
it holds $a_\kappa \leq b_\kappa$ because $(a_\kappa, b_\kappa)\in L_1 \otimes L_2$;
since $b_\kappa \leq b_0$, it is $a_\kappa\leq b_0$ for all $\kappa < \lambda$.
By the Interlattice Lub Property, we get
$\bigvee_{L_1}\{a_{\kappa}\mid \kappa<\lambda\} \leq b_0 = \bigwedge_{L_2}\{b_{\kappa}\mid \kappa<\lambda\}$.
\end{proof}
\end{proposition}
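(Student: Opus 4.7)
The plan is to prove part 1 first, then leverage it in part 2. Part 1 is precisely what is needed to know that the candidate pair $(\bigvee_{L_1}\{a_\kappa\}, \bigwedge_{L_2}\{b_\kappa\})$ actually lives in $L_1 \otimes L_2$, after which part 2 becomes essentially a routine verification. The heart of the matter is that $L$ itself is only a poset (the union of two sublattices), so we cannot take a ``global'' lub or glb in $L$; the two Interlattice Properties are the bridges that let us compare lubs of $L_1$-families against elements of $L_2$, and vice versa.

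For part 1, I would unpack the $\preceq$-chain condition. Given any two indices $\kappa, \kappa'$, either $(a_\kappa, b_\kappa) \preceq (a_{\kappa'}, b_{\kappa'})$ or the reverse holds, so a short case split shows $a_\kappa \leq b_{\kappa'}$ in both cases (using either $a_\kappa \leq a_{\kappa'} \leq b_{\kappa'}$ or $a_\kappa \leq b_\kappa \leq b_{\kappa'}$). Thus each $b_{\kappa'}$ is an upper bound in $L$ for the family $\{a_\kappa\}_\kappa \subseteq L_1$. Completeness of $L_1$ produces $a^\ast := \bigvee_{L_1}\{a_\kappa\}$, and the Interlattice Lub Property immediately delivers $a^\ast \leq b_{\kappa'}$ for every $\kappa'$. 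Now $a^\ast \in L_1$ is a lower bound in $L$ for $\{b_{\kappa'}\}_{\kappa'} \subseteq L_2$, so applying the Interlattice Glb Property (with $L_2$ complete) yields $a^\ast \leq \bigwedge_{L_2}\{b_\kappa\}$, as required.

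For part 2, set $b^\ast := \bigwedge_{L_2}\{b_\kappa\}$. By part 1, $(a^\ast, b^\ast) \in L_1 \otimes L_2$. This pair is a $\preceq$-upper bound of the chain, since $a_\kappa \leq a^\ast$ and $b^\ast \leq b_\kappa$ hold for every $\kappa$ by the definitions of lub and glb, and unpacking $\preceq$ gives $(a_\kappa, b_\kappa) \preceq (a^\ast, b^\ast)$. For minimality, let $(a', b')$ be any $\preceq$-upper bound of the chain; then $a_\kappa \leq a'$ and $b' \leq b_\kappa$ for all $\kappa$. The first family of inequalities together with leastness of $a^\ast$ in $L_1$ forces $a^\ast \leq a'$, while the second, combined with $b' \in L_2$ and greatestness of $b^\ast$ among lower bounds in $L_2$, yields $b' \leq b^\ast$. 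Hence $(a^\ast, b^\ast) \preceq (a', b')$, finishing the proof. The only place any real work is done is part 1; everything else is bookkeeping once one correctly tracks which element belongs to which sublattice.
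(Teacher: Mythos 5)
Your proof is correct, and for part~1 it takes a genuinely different --- and in fact sounder --- route than the paper. The paper's proof only establishes $\bigvee_{L_1}\{a_\kappa \mid \kappa<\lambda\}\leq b_0$ and then asserts $b_0=\bigwedge_{L_2}\{b_\kappa \mid \kappa<\lambda\}$; but the same proof also uses $b_\kappa\leq b_0$ for all $\kappa$ (the chain being $\preceq$-increasing in $\kappa$), which makes $b_0$ the \emph{greatest} element of $\{b_\kappa \mid \kappa<\lambda\}$, so the asserted equality with the greatest lower bound can hold only when all the $b_\kappa$ coincide. As written, the paper's argument therefore does not reach the stated conclusion. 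Your argument is the standard one (cf.\ Proposition~2.3 of~\cite{DMT04}) and closes exactly this gap: the case split on the chain order gives $a_\kappa\leq b_{\kappa'}$ for \emph{every} pair of indices, the Interlattice Lub Property then yields $\bigvee_{L_1}\{a_\kappa\}\leq b_{\kappa'}$ for every $\kappa'$ rather than just for $\kappa'=0$, and the Interlattice Glb Property converts this family of bounds into $\bigvee_{L_1}\{a_\kappa\}\leq\bigwedge_{L_2}\{b_\kappa\}$. Yours is also the only one of the two arguments that actually invokes the Glb property, which is presumably why the paper postulates it in the first place. Your part~2 is the routine verification the paper omits, and it is correct: consistency of the candidate pair is exactly what part~1 supplies, and leastness follows componentwise from the universal properties of $\bigvee_{L_1}$ and $\bigwedge_{L_2}$.
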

The following proposition (corresponding to Proposition 3.10 in~\cite{DMT04})
and the subsequent theorem (corresponding to Theorem 3.11 in~\cite{DMT04}) have
identical proofs to the ones given in~\cite{DMT04} (the only difference being
that our underlying domain is $L_1\otimes L_2$):
\begin{proposition}
Let $A\in \textit{Appx}(L_1\otimes L_2)$ and let $\{(a_{\kappa},b_{\kappa})\}_{\kappa<\lambda}$, where $\lambda$ is an ordinal,
be a chain of $A$-prudent pairs from $L_1\otimes L_2$. Then, $\bigvee_{\preceq}  \{(a_{\kappa},b_{\kappa})\}_{\kappa<\lambda}$,
is $A$-prudent.
\end{proposition}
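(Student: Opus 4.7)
The plan is to first identify the $\preceq$-lub using the preceding proposition, then verify the two defining conditions of $A$-prudence at the limit. Write $(a,b)$ for $(\bigvee_{L_1}\{a_\kappa\mid \kappa<\lambda\}, \bigwedge_{L_2}\{b_\kappa\mid \kappa<\lambda\})$, which by the previous proposition is the $\preceq$-lub of the chain and lies in $L_1 \otimes L_2$. The two targets are $A$-reliability, $(a,b) \preceq A(a,b)$, and the prudence inequality $a \leq b^{\downarrow}$, where $b^{\downarrow} = \textit{lfp}(A(\cdot,b)_1)$.

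Reliability is a short chase. Since $(a_\kappa, b_\kappa) \preceq (a,b)$ by the lub property, $\preceq$-monotonicity of $A$ gives $A(a_\kappa, b_\kappa) \preceq A(a,b)$, and combining with the reliability of each $(a_\kappa, b_\kappa)$ yields $a_\kappa \leq A(a,b)_1$ and $A(a,b)_2 \leq b_\kappa$ for every $\kappa$. Taking the lub in $L_1$ of the first family and the glb in $L_2$ of the second then produces $a \leq A(a,b)_1$ and $A(a,b)_2 \leq b$, so $(a,b)$ is $A$-reliable.

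For prudence, the key step is a monotonicity-of-lfp argument showing $b_\kappa^{\downarrow} \leq b^{\downarrow}$ for every $\kappa$. Because $b \leq b_\kappa$, we have $(x, b_\kappa) \preceq (x, b)$ for every $x \in [\perp, b]_{L_1}$, and $\preceq$-monotonicity of $A$ then gives $A(x, b_\kappa)_1 \leq A(x, b)_1$. Consequently any pre-fixpoint $y$ of $A(\cdot,b)_1$ sits in $[\perp, b]_{L_1} \subseteq [\perp, b_\kappa]_{L_1}$ and satisfies $A(y, b_\kappa)_1 \leq A(y,b)_1 \leq y$, so it is also a pre-fixpoint of $A(\cdot, b_\kappa)_1$; applying this with $y = b^{\downarrow}$ yields $b_\kappa^{\downarrow} \leq b^{\downarrow}$. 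Composing with $a_\kappa \leq b_\kappa^{\downarrow}$, which is the prudence of $(a_\kappa, b_\kappa)$, I obtain $a_\kappa \leq b^{\downarrow}$ for every $\kappa$, and since $b^{\downarrow} \in L_1$ the lub in $L_1$ gives $a \leq b^{\downarrow}$, as required.

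The main obstacle I anticipate is the bookkeeping around the shrinking domains $[\perp, b_\kappa]_{L_1}$: one might try, incorrectly, to show that $b^{\downarrow}$ is some kind of limit of the $b_\kappa^{\downarrow}$. The clean route taken above sidesteps this entirely by exploiting the fact that tightening the second argument only \emph{enlarges} the operator on the first, which in turn can only \emph{enlarge} its least fixpoint, giving the inequality $b_\kappa^{\downarrow} \leq b^{\downarrow}$ without any extra hypothesis and without having to control the individual $b_\kappa^{\downarrow}$ as $\kappa$ varies.
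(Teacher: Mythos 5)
Your proof is correct and is essentially the argument the paper intends: the paper does not actually reproduce a proof of this proposition, stating only that it is identical to that of Proposition 3.10 in~\cite{DMT04}, and your two steps --- reliability of the limit via $\preceq$-monotonicity of $A$ followed by lub/glb computations that stay entirely within $L_1$ and $L_2$ respectively, then prudence via $a_\kappa \leq b_\kappa^{\downarrow} \leq b^{\downarrow}$ --- are exactly that argument transplanted to the $L_1\otimes L_2$ setting. The one point that needs care in this two-lattice version, namely that the comparison $b_\kappa^{\downarrow}=\textit{lfp}(A(\cdot,b_\kappa)_1) \leq \textit{lfp}(A(\cdot,b)_1)=b^{\downarrow}$ is carried out across the nested complete lattices $[\perp,b]_{L_1}\subseteq[\perp,b_\kappa]_{L_1}$ by exhibiting $b^{\downarrow}$ as a pre-fixpoint of the smaller operator, is handled correctly and is the same pre-fixpoint comparison the paper itself spells out when proving that $(b^{\downarrow},a^{\uparrow})$ is $A$-prudent.
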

\begin{theorem}\label{central-theorem}
Let $A\in \textit{Appx}(L_1\otimes L_2)$. The set of $A$-prudent  elements of
$L_1\otimes L_2$ is a chain-complete poset under $\preceq$ with least element
$(\perp,\top)$. The stable revision operator is a well-defined, increasing
and monotone operator in this poset, and therefore it has a least fixpoint
which is $A$-prudent and can be obtained as the limit of the following sequence:
\[
\begin{array}{llll}
(a_0,b_0)  & = & (\perp,\top) & \\
(a_{\lambda+1}, b_{\lambda+1}) &  = & {\cal C}_A(a_{\lambda}, b_{\lambda}) & \\
(a_{\lambda},b_{\lambda}) & = & \bigvee_{\preceq}\{(a_{\kappa},b_{\kappa}): \kappa <\lambda \} & \mbox{for limit ordinals $\lambda$}
\end{array}
\]
\end{theorem}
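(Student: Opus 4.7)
The plan is to verify the three ingredients that make the transfinite construction work: (i) the $A$-prudent elements form a chain-complete poset under $\preceq$ with least element $(\perp,\top)$; (ii) the stable revision operator ${\cal C}_A$ maps $A$-prudent pairs to $A$-prudent pairs (well-definedness); (iii) ${\cal C}_A$ is both $\preceq$-inflationary and $\preceq$-monotone on this poset. Once these are in place, the transfinite sequence displayed in the statement is well-defined and weakly ascending under $\preceq$, so by chain completeness it must stabilize at some ordinal $\delta$, and stationarity means $(a_\delta,b_\delta)$ is a fixpoint of ${\cal C}_A$; an easy induction along the iteration then shows it is the least such fixpoint.

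For (i), $(\perp,\top)$ is trivially $A$-reliable and prudent, and the chain-completeness is exactly the content of the immediately preceding proposition, which shows that the $\preceq$-lub of a chain of $A$-prudent pairs is again $A$-prudent; so nothing new is needed here. For (ii), the earlier results already give that for any $A$-reliable $(a,b)$ the operators $A(\cdot,b)_1$ on $[\perp,b]_{L_1}$ and $A(a,\cdot)_2$ on $[a,\top]_{L_2}$ are well-defined endomaps of complete lattices (by Proposition~\ref{interals-complete-lattices}), hence admit least fixpoints $b^{\downarrow}$ and $a^{\uparrow}$, and we have already shown in the proposition preceding the theorem that the resulting pair $(b^{\downarrow},a^{\uparrow})$ is $A$-prudent whenever $(a,b)$ is. The inflationary part of (iii) is also already established in that same proposition: $(a,b) \preceq (b^{\downarrow},a^{\uparrow})$.

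The one genuinely new fact to establish is the $\preceq$-monotonicity of ${\cal C}_A$. Suppose $(a_1,b_1) \preceq (a_2,b_2)$, i.e.\ $a_1 \leq a_2$ and $b_2 \leq b_1$. I need $b_1^{\downarrow} \leq b_2^{\downarrow}$ and $a_2^{\uparrow} \leq a_1^{\uparrow}$. For the first, $\preceq$-monotonicity of $A$ gives $A(x,b_1)_1 \leq A(x,b_2)_1$ for every $x$ in the smaller domain $[\perp,b_2]_{L_1}$; in particular $A(b_2^{\downarrow},b_1)_1 \leq A(b_2^{\downarrow},b_2)_1 = b_2^{\downarrow}$, so $b_2^{\downarrow}$ is a pre-fixpoint of $A(\cdot,b_1)_1$ in $[\perp,b_1]_{L_1}$, whence $b_1^{\downarrow} \leq b_2^{\downarrow}$ by Knaster--Tarski. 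The symmetric argument (using the second component and the reversed ordering on the second coordinate under $\preceq$) yields $a_2^{\uparrow} \leq a_1^{\uparrow}$. This is the step I expect to be the most delicate, because it requires keeping straight that the two arguments of $A$ are handled in opposite directions under $\preceq$, and also that the domains on which the one-sided operators act shrink or expand accordingly.

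Finally, combining (i)--(iii), the transfinite iteration from $(\perp,\top)$ is well-defined: each successor step lands in the $A$-prudent set by (ii), each limit step by chain-completeness from (i), and by the inflationary property in (iii) the sequence is $\preceq$-ascending. A standard cardinality argument (the sequence cannot strictly increase beyond $|L_1 \otimes L_2|^+$) gives an ordinal $\delta$ with $(a_{\delta+1},b_{\delta+1}) = (a_\delta,b_\delta)$, i.e.\ a fixpoint; monotonicity of ${\cal C}_A$ combined with the fact that $(\perp,\top)$ sits below every $A$-prudent fixpoint in $\preceq$ gives, by transfinite induction, $(a_\lambda,b_\lambda) \preceq (a',b')$ for any fixpoint $(a',b')$, so $(a_\delta,b_\delta)$ is the least fixpoint.
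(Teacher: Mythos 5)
Your overall architecture is the right one, and ingredients (i), (ii) and the inflationary half of (iii) are indeed already supplied by the propositions that precede the theorem in~\ref{appendix-of-section-6a}; note that the paper itself writes no further proof at this point but defers verbatim to Theorem 3.11 of~\cite{DMT04}, so you are attempting more than the paper records. The first half of your monotonicity argument is correct: $b_2^{\downarrow}\leq b_2\leq b_1$ places $b_2^{\downarrow}$ in the domain of $A(\cdot,b_1)_1$, and $A(b_2^{\downarrow},b_1)_1\leq A(b_2^{\downarrow},b_2)_1=b_2^{\downarrow}$ exhibits it as a pre-fixpoint, giving $b_1^{\downarrow}\leq b_2^{\downarrow}$.

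The gap is in the assertion that ``the symmetric argument'' yields $a_2^{\uparrow}\leq a_1^{\uparrow}$. That argument would exhibit $a_1^{\uparrow}$ as a pre-fixpoint of $A(a_2,\cdot)_2$ via $A(a_2,a_1^{\uparrow})_2\leq A(a_1,a_1^{\uparrow})_2=a_1^{\uparrow}$, but this presupposes $a_2\leq a_1^{\uparrow}$: otherwise $(a_2,a_1^{\uparrow})\notin L_1\otimes L_2$ and $a_1^{\uparrow}$ does not even lie in the lattice $[a_2,\top]_{L_2}$ on which $A(a_2,\cdot)_2$ acts. In the first half the analogous domain condition was $b_2^{\downarrow}\leq b_1$, which is free because $b_2^{\downarrow}\leq b_2\leq b_1$; the condition $a_2\leq a_1^{\uparrow}$ compares elements drawn from the two \emph{different} input pairs and is not free. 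Tellingly, your argument never invokes prudence, yet monotonicity of ${\cal C}_A$ is false on merely $A$-reliable pairs: take $L_1=L_2=\{0,1,2\}$ and $A$ the identity on $L_1\otimes L_2$ (a consistent approximating operator for which every pair is reliable); then $(0,2)\preceq(1,1)$, but $a_1^{\uparrow}=\textit{lfp}(A(0,\cdot)_2)=0$ while $a_2^{\uparrow}=\textit{lfp}(A(1,\cdot)_2)=1$, so ${\cal C}_A(0,2)=(0,0)\not\preceq(0,1)={\cal C}_A(1,1)$. Read literally, your ``symmetric argument'' would prove $1\leq 0$ here. The missing step is precisely to derive $a_2\leq a_1^{\uparrow}$ from the prudence of the pairs involved (this is where the proofs of Propositions 3.8--3.10 and Theorem 3.11 of~\cite{DMT04} do real work); until that inequality is supplied, the monotonicity of the stable revision operator on the prudent poset --- and with it the leastness of the limit fixpoint --- is not established.
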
%

The proof of the following theorem is also a straightforward generalization of the
proof of Theorem 19 in~\cite{DMT00}:
\begin{theorem}\label{minimal-fixpoint}
Every fixpoint of the stable revision operator ${\cal C}_A$ is a $\leq$-minimal
pre-fixpoint of $A$.
\end{theorem}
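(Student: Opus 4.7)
The plan is to show first that every fixpoint $(a,b)$ of ${\cal C}_A$ is already a fixpoint (and a fortiori a pre-fixpoint) of $A$, and then that no strictly smaller pre-fixpoint of $A$ exists.

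For the first part, observe that ${\cal C}_A$ is defined only on $A$-reliable pairs, so a fixpoint $(a,b)$ of ${\cal C}_A$ is $A$-reliable, and by the defining equation $a = \textit{lfp}(A(\cdot,b)_1)$ and $b = \textit{lfp}(A(a,\cdot)_2)$. These are in particular fixpoints of the corresponding sliced operators, so $A(a,b)_1 = a$ and $A(a,b)_2 = b$, giving $A(a,b) = (a,b)$. In particular, $A(a,b) \leq (a,b)$, so $(a,b)$ is a pre-fixpoint of $A$.

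For minimality, suppose $(c,d) \in L_1 \otimes L_2$ is a pre-fixpoint of $A$ with $(c,d) \leq (a,b)$; that is, $c \leq a$, $d \leq b$, $A(c,d)_1 \leq c$, and $A(c,d)_2 \leq d$. I plan to show $c = a$ first, and then $d = b$. Since $d \leq b$, the definition of $\preceq$ on pairs yields $(c,b) \preceq (c,d)$, so $\preceq$-monotonicity of $A$ gives $A(c,b)_1 \leq A(c,d)_1 \leq c$. Hence $c$ is a pre-fixpoint of the operator $A(\cdot,b)_1$ on the complete lattice $[\perp,b]_{L_1}$ (complete by Proposition~\ref{interals-complete-lattices}). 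Since $A(\cdot,b)_1$ is $\leq$-monotone on this interval---again by $\preceq$-monotonicity of $A$, holding $b$ fixed---the Knaster--Tarski theorem guarantees its least fixpoint coincides with its least pre-fixpoint, whence $a = \textit{lfp}(A(\cdot,b)_1) \leq c$. Combined with $c \leq a$ this yields $c = a$. Substituting $c = a$, the pair $(a,d)$ is itself a pre-fixpoint of $A$, so $A(a,d)_2 \leq d$; a symmetric argument using the $\leq$-monotone operator $A(a,\cdot)_2$ on the complete lattice $[a,\top]_{L_2}$ then gives $b = \textit{lfp}(A(a,\cdot)_2) \leq d$, and hence $d = b$.

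The main point to keep straight is the direction of the $\preceq$ ordering when relating pairs that are $\leq$-comparable: holding the first coordinate fixed and increasing the second in $\leq$ corresponds to a $\preceq$-\emph{decrease}, which is what makes the chain $(c,b) \preceq (c,d)$ point in the correct direction to conclude $A(c,b)_1 \leq A(c,d)_1$. All remaining ingredients---that the sliced operators act on genuine complete lattices and are $\leq$-monotone on them---are immediate consequences of Proposition~\ref{interals-complete-lattices} and the $\preceq$-monotonicity built into the definition of a consistent approximating operator.
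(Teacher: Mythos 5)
Your proof is correct and follows exactly the argument that the paper defers to (it gives no proof of Theorem~\ref{minimal-fixpoint}, citing instead Theorem~19 of the Denecker--Marek--Truszczy\'nski paper): first observe that a fixpoint $(a,b)$ of ${\cal C}_A$ satisfies $A(a,b)=(a,b)$, then use the Knaster--Tarski characterization of $\textit{lfp}(A(\cdot,b)_1)$ and $\textit{lfp}(A(a,\cdot)_2)$ as least \emph{pre-}fixpoints of the sliced operators to force any $\leq$-smaller pre-fixpoint $(c,d)$ to equal $(a,b)$. All the side conditions you need (that $(c,b)$ and $(a,d)$ lie in $L_1\otimes L_2$, that the sliced operators map $[\perp,b]_{L_1}$ and $[a,\top]_{L_2}$ into themselves, and that these intervals are complete lattices) are correctly discharged via $c\leq a\leq b$, $a=c\leq d$, and Proposition~\ref{interals-complete-lattices}, so there is nothing to add.
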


\section{$\!\!\!$: Proofs of Section~\ref{well_founded}}\label{appendix-of-section-6b}
Before providing the proofs of the results of Section~\ref{well_founded}, we notice that
Proposition~\ref{tau-monotonic} extends to the case of Herbrand interpretations as follows:
\begin{proposition}\label{tau-monotonic-for-interpretations}
Let $\mathsf{P}$ be a program. Then, for every ${\cal I},{\cal J} \in {\cal H}_{\mathsf{P}}$
and for every $(I_1,J_1),(I_2,J_2) \in ({\cal H}^{\mathsf{ma}}_\mathsf{P} \otimes  {\cal H}^{\mathsf{am}}_\mathsf{P})$,
the following statements hold:
  \begin{enumerate}
  \item $\tau({\cal I}) \in  ({\cal H}^{\mathsf{ma}}_\mathsf{P} \otimes  {\cal H}^{\mathsf{am}}_\mathsf{P})$
        and $\tau^{-1}(I_1,J_1) \in {\cal H}_{\mathsf{P}}$.
  \item If ${\cal I} \preceq {\cal J}$ then $\tau({\cal I}) \preceq \tau({\cal J})$.
  \item If ${\cal I} \leq {\cal J}$ then $\tau({\cal I}) \leq \tau({\cal J})$.
  \item If $(I_1,J_1) \preceq (I_2,J_2)$ then $\tau^{-1}(I_1,J_1) \preceq \tau^{-1}(I_2,J_2)$.
  \item If $(I_1,J_1) \leq (I_2,J_2)$ then $\tau^{-1}(I_1,J_1) \leq \tau^{-1}(I_2,J_2)$.
\end{enumerate}
\end{proposition}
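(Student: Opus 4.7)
The plan is to reduce each of the five statements to the corresponding clause of Proposition~\ref{tau-monotonic} by working pointwise on predicate constants, since both the orderings on $\mathcal{H}_{\mathsf{P}}$ (Definition~\ref{ordering-interpretations}) and the orderings on $\mathcal{H}^{\mathsf{ma}}_\mathsf{P} \otimes \mathcal{H}^{\mathsf{am}}_\mathsf{P}$ are defined pointwise on predicate constants, and the map $\tau$ itself is defined by applying $\tau_\pi$ coordinate-by-coordinate. The individual constants and function symbols play no role in any of the orderings, and both $\tau$ and $\tau^{-1}$ leave them untouched (assigning $\mathsf{c}$ to $\mathsf{c}$ and $\mathsf{f}\,\mathsf{t}_1\cdots\mathsf{t}_n$ to itself), so they carry over automatically to both sides of each implication; hence it really is a purely predicate-constant-by-predicate-constant verification.

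First I would settle Statement~1. Given $\mathcal{I} \in \mathcal{H}_{\mathsf{P}}$, I need $\tau(\mathcal{I}) = (I,J)$ to satisfy $I \leq J$, which by the pointwise definition of $\leq$ on $\mathcal{H}^{\mathsf{ma}}_\mathsf{P} \cup \mathcal{H}^{\mathsf{am}}_\mathsf{P}$ amounts to showing $I(\mathsf{p}) \leq_\pi J(\mathsf{p})$ for every predicate constant $\mathsf{p}:\pi$; but $I(\mathsf{p}) = [\tau_\pi(\mathcal{I}(\mathsf{p}))]_1$ and $J(\mathsf{p}) = [\tau_\pi(\mathcal{I}(\mathsf{p}))]_2$, and by Proposition~\ref{tau-monotonic}(1) the pair $\tau_\pi(\mathcal{I}(\mathsf{p}))$ lies in $\lsem \pi \rsem_D^{\mathsf{ma}} \otimes \lsem \pi \rsem_D^{\mathsf{am}}$, so its components are ordered as required. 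The fact that $I$ and $J$ behave correctly on individual constants and function symbols is immediate from the definition of $\tau$. The dual claim $\tau^{-1}(I_1,J_1) \in \mathcal{H}_{\mathsf{P}}$ follows in the same way from Proposition~\ref{tau-monotonic}(1), since $I_1(\mathsf{p}) \leq_\pi J_1(\mathsf{p})$ implies $(I_1(\mathsf{p}), J_1(\mathsf{p})) \in \lsem \pi \rsem_D^{\mathsf{ma}} \otimes \lsem \pi \rsem_D^{\mathsf{am}}$, whence $\tau_\pi^{-1}$ returns a legitimate element of $\lsem \pi \rsem_D$.

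For Statements~2 and~3, assume $\mathcal{I} \preceq \mathcal{J}$ (respectively $\mathcal{I} \leq \mathcal{J}$). By the pointwise definition of these orderings on $\mathcal{H}_{\mathsf{P}}$, this means $\mathcal{I}(\mathsf{p}) \preceq_\pi \mathcal{J}(\mathsf{p})$ (respectively $\leq_\pi$) for every predicate constant $\mathsf{p}:\pi$. Applying Proposition~\ref{tau-monotonic}(2) (respectively (3)) we get $\tau_\pi(\mathcal{I}(\mathsf{p})) \preceq_\pi \tau_\pi(\mathcal{J}(\mathsf{p}))$ (respectively $\leq_\pi$), and reading off both components with $[\cdot]_1$ and $[\cdot]_2$ yields exactly the pointwise inequalities defining $\tau(\mathcal{I}) \preceq \tau(\mathcal{J})$ (respectively $\leq$) on $\mathcal{H}^{\mathsf{ma}}_\mathsf{P} \otimes \mathcal{H}^{\mathsf{am}}_\mathsf{P}$. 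Statements~4 and~5 proceed symmetrically from Proposition~\ref{tau-monotonic}(4) and~(5): given the pointwise hypothesis on the pair $(I_1,J_1)$ versus $(I_2,J_2)$, apply $\tau_\pi^{-1}$ at each $\mathsf{p}$ and use Definition~\ref{ordering-interpretations} to assemble the pointwise conclusions into a global one.

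No real obstacle is expected; the proof is essentially a bookkeeping exercise verifying that every ordering involved is defined pointwise on predicate constants and that individual and functional symbols are fixed by the bijection. For this reason I would state the proof compactly as ``each claim is immediate from the corresponding clause of Proposition~\ref{tau-monotonic} applied pointwise at each predicate constant $\mathsf{p}:\pi$, using Definition~\ref{ordering-interpretations} and its analogue for $\mathcal{H}^{\mathsf{ma}}_\mathsf{P} \otimes \mathcal{H}^{\mathsf{am}}_\mathsf{P}$, and observing that $\tau$ and $\tau^{-1}$ act trivially on the interpretations of individual constants and function symbols,'' without repeating the computation for each of the five items.
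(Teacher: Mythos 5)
Your proof is correct and matches the paper's intent exactly: the paper states this proposition without proof, merely noting that Proposition~\ref{tau-monotonic} ``extends to the case of Herbrand interpretations,'' and the intended argument is precisely your pointwise reduction at each predicate constant via Definition~\ref{ordering-interpretations} and the componentwise definition of $\tau$ on interpretations. Your elaboration, including the observation that individual constants and function symbols are fixed and play no role in the orderings, fills in the routine details the paper leaves implicit.
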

\begingroup
\def\thelemma{\ref{T_P_is_Fitting_monotonic}}
\begin{lemma}
Let $\mathsf{P}$ be a program and let  $(I_1, J_1), (I_2, J_2) \in {\cal H}^{\mathsf{ma}}_\mathsf{P} \otimes  {\cal H}^{\mathsf{am}}_\mathsf{P}$.
If $(I_1, J_1) \preceq (I_2, J_2)$ then $T_\mathsf{P}(I_1, J_1) \preceq T_\mathsf{P}(I_2,J_2)$.
\end{lemma}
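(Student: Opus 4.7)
The plan is to reduce the statement to the $\preceq$-monotonicity of the three-valued operator $\Psi_{\mathsf{P}}$ using the bijection $\tau$, since by definition $T_{\mathsf{P}}(I,J) = \tau(\Psi_{\mathsf{P}}(\tau^{-1}(I,J)))$. Assume $(I_1, J_1) \preceq (I_2, J_2)$. First I would apply part (4) of Proposition~\ref{tau-monotonic-for-interpretations} to obtain $\tau^{-1}(I_1, J_1) \preceq \tau^{-1}(I_2, J_2)$, which moves the hypothesis into the world of three-valued Herbrand interpretations.

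Second, I would prove a Fitting-monotonicity property of $\Psi_{\mathsf{P}}$: for all ${\cal I}, {\cal J} \in {\cal H}_{\mathsf{P}}$ with ${\cal I} \preceq {\cal J}$, one has $\Psi_{\mathsf{P}}({\cal I}) \preceq \Psi_{\mathsf{P}}({\cal J})$. Fix any predicate constant $\mathsf{p} : \pi$ of $\mathsf{P}$; then $\Psi_{\mathsf{P}}({\cal I})(\mathsf{p}) = \bigvee_{\leq_{\pi}}\{\lsem \mathsf{E} \rsem({\cal I}) \mid (\mathsf{p} \leftarrow_{\pi} \mathsf{E}) \in \mathsf{P}\}$. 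For each clause body $\mathsf{E}$, Lemma~\ref{expressions_fitting} gives $\lsem \mathsf{E} \rsem({\cal I}) \preceq_{\pi} \lsem \mathsf{E} \rsem({\cal J})$. Invoking Corollary~\ref{prop-lub-preceq} (the $\preceq$-compatibility of pointwise $\leq$-least upper bounds) on the two indexed families over the finite set of clauses defining $\mathsf{p}$, we conclude $\Psi_{\mathsf{P}}({\cal I})(\mathsf{p}) \preceq_{\pi} \Psi_{\mathsf{P}}({\cal J})(\mathsf{p})$. Since this holds for every $\mathsf{p}$, the $\preceq$-monotonicity of $\Psi_{\mathsf{P}}$ follows from Definition~\ref{ordering-interpretations}.

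Combining these, $\Psi_{\mathsf{P}}(\tau^{-1}(I_1, J_1)) \preceq \Psi_{\mathsf{P}}(\tau^{-1}(I_2, J_2))$. Finally, applying part (2) of Proposition~\ref{tau-monotonic-for-interpretations} yields $\tau(\Psi_{\mathsf{P}}(\tau^{-1}(I_1, J_1))) \preceq \tau(\Psi_{\mathsf{P}}(\tau^{-1}(I_2, J_2)))$, i.e., $T_{\mathsf{P}}(I_1, J_1) \preceq T_{\mathsf{P}}(I_2, J_2)$, as required.

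No step is particularly deep: all the heavy lifting has been done in the bijection-and-monotonicity lemmas of Section~\ref{bijection} and in Lemma~\ref{expressions_fitting}. The only mild point that needs care is the transfer from pointwise Fitting-monotonicity of the semantic valuation at each body expression to the $\preceq$-comparison of the $\leq_{\pi}$-suprema taken over all clauses defining $\mathsf{p}$; this is exactly what Corollary~\ref{prop-lub-preceq} supplies, so the argument is essentially a three-line composition of previously established facts.
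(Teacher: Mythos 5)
Your proof is correct and follows essentially the same route as the paper's: both establish the $\preceq$-monotonicity of $\Psi_{\mathsf{P}}$ from Lemma~\ref{expressions_fitting} and Corollary~\ref{prop-lub-preceq}, and then sandwich it between the two monotonicity statements for $\tau^{-1}$ and $\tau$ from Proposition~\ref{tau-monotonic-for-interpretations}. Your version merely spells out the $\Psi_{\mathsf{P}}$ step in a bit more detail than the paper does.
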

\addtocounter{lemma}{-1}
\endgroup
\begin{proof}
It follows directly from the definition of $\Psi_{\mathsf{P}}$ together with Lemma~\ref{expressions_fitting} and
Corollary~\ref{prop-lub-preceq} in~\ref{appendix-of-section-4} that $\Psi_\mathsf{P}$ is $\preceq$-monotonic.
It follows from Proposition~\ref{tau-monotonic-for-interpretations} that $\tau^{-1}(I_1,J_1) \preceq \tau^{-1}(I_2,J_2)$.
Since $\Psi_\mathsf{P}$ is $\preceq$-monotonic we get $\Psi_\mathsf{P}(\tau^{-1}(I_1,J_1)) \preceq \Psi_\mathsf{P}(\tau^{-1}(I_2,J_2))$.
By applying again Proposition~\ref{tau-monotonic} we have that $T_\mathsf{P}(I_1,J_1) \preceq T_\mathsf{P}(I_2,J_2)$
that concludes the proof.
\end{proof}

\begingroup
\def\thelemma{\ref{tp-fixpoint-model}}
\begin{lemma}
Let $\mathsf{P}$ be a program. If $(I, J)\in {\cal H}^{\mathsf{ma}}_\mathsf{P} \otimes  {\cal H}^{\mathsf{am}}_\mathsf{P}$
is a pre-fixpoint of $T_\mathsf{P}$ then $\tau^{-1}(I, J)$ is a model of $\mathsf{P}$.
\end{lemma}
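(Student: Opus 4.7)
The plan is to unwind the definition of $T_{\mathsf{P}}$ so that the pre-fixpoint inequality translates into a pre-fixpoint inequality for the three-valued immediate consequence operator $\Psi_{\mathsf{P}}$, and then read off the model condition directly from the definition of $\Psi_{\mathsf{P}}$.

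First, I would set $\mathcal{M} = \tau^{-1}(I,J)$ and assume the hypothesis $T_{\mathsf{P}}(I,J) \leq (I,J)$. Expanding the definition, this is $\tau(\Psi_{\mathsf{P}}(\tau^{-1}(I,J))) \leq (I,J)$. Applying Proposition~\ref{tau-elim} in the form $\tau(\mathcal{M}) = \tau(\tau^{-1}(I,J)) = (I,J)$, the hypothesis becomes $\tau(\Psi_{\mathsf{P}}(\mathcal{M})) \leq \tau(\mathcal{M})$. Now I invoke the $\leq$-monotonicity of $\tau^{-1}$ (Proposition~\ref{tau-monotonic-for-interpretations}, clause 5, lifted to interpretations) together with Proposition~\ref{tau-elim} once more to conclude $\Psi_{\mathsf{P}}(\mathcal{M}) \leq \mathcal{M}$.

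Next, I unfold the definition of $\Psi_{\mathsf{P}}$ pointwise on predicate constants. For every predicate constant $\mathsf{p} : \pi$ of $\mathsf{P}$,
\[
\Psi_{\mathsf{P}}(\mathcal{M})(\mathsf{p}) \;=\; \bigvee_{\leq_{\pi}}\{\lsem \mathsf{E} \rsem(\mathcal{M}) \mid (\mathsf{p} \leftarrow_{\pi} \mathsf{E}) \in \mathsf{P}\} \;\leq_{\pi}\; \mathcal{M}(\mathsf{p}).
\]
Since the least upper bound dominates each element of the set, it follows that for every clause $\mathsf{p} \leftarrow_{\pi} \mathsf{E}$ of $\mathsf{P}$ we have $\lsem \mathsf{E} \rsem(\mathcal{M}) \leq_{\pi} \mathcal{M}(\mathsf{p})$, which is exactly the model condition. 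Hence $\mathcal{M} = \tau^{-1}(I,J)$ is a model of $\mathsf{P}$.

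No step looks like a serious obstacle: the only subtlety is keeping clean the two uses of $\tau/\tau^{-1}$ (at the level of interpretations rather than individual denotations) and making sure we are applying $\leq$-monotonicity (not $\preceq$-monotonicity), which is exactly the content of clause 5 of Proposition~\ref{tau-monotonic}, extended in Proposition~\ref{tau-monotonic-for-interpretations}. Once this bookkeeping is in place, the argument is a direct chain of definitions.
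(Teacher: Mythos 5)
Your proof is correct and follows essentially the same route as the paper's: both apply the $\leq$-monotonicity of $\tau^{-1}$ (Proposition~\ref{tau-monotonic-for-interpretations}) together with $\tau^{-1}\circ\tau = \mathrm{id}$ (Proposition~\ref{tau-elim}) to turn the pre-fixpoint inequality for $T_{\mathsf{P}}$ into $\Psi_{\mathsf{P}}(\tau^{-1}(I,J)) \leq \tau^{-1}(I,J)$, and then read off the model condition from the definition of $\Psi_{\mathsf{P}}$ via the least-upper-bound property. The only difference is cosmetic bookkeeping in how the two $\tau$-cancellations are ordered.
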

\addtocounter{lemma}{-1}
\endgroup
\begin{proof}
From the definition of $T_{\mathsf{P}}$ and using the fact that $(I, J)$ is a pre-fixpoint of $T_\mathsf{P}$,
it follows  that $\tau(\Psi_\mathsf{P}(\tau^{-1}(I,J))) = T_\mathsf{P}(I, J) \leq (I,J)$.
By applying $\tau^{-1}$ to both sides of the statement and using
Proposition~\ref{tau-monotonic-for-interpretations} we get that
$\tau^{-1}(\tau(\Psi_\mathsf{P}(\tau^{-1}(I,J)))) \leq \tau^{-1}(I, J)$
which gives $\Psi_\mathsf{P}(\tau^{-1}(I,J)) \leq \tau^{-1}(I,J)$.
From the definition of $\Psi_\mathsf{P}$ and the definition of model,
it follows that $\tau^{-1}(I,J)$ is model of $\mathsf{P}$.
\end{proof}
\begingroup
\def\thelemma{\ref{is_a_prefixpoint}}
\begin{lemma}
Let ${\cal M}\in {\cal H}_{\mathsf{P}}$ be a model of $\mathsf{P}$. Then, $\tau({\cal M})$ is a pre-fixpoint of $T_\mathsf{P}$.
\end{lemma}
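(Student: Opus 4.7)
The plan is to directly unfold the definitions and reduce the claim to the model property of $\mathcal{M}$. A pre-fixpoint of $T_{\mathsf{P}}$ is a pair $(I,J) \in \mathcal{H}^{\mathsf{ma}}_\mathsf{P} \otimes \mathcal{H}^{\mathsf{am}}_\mathsf{P}$ satisfying $T_{\mathsf{P}}(I,J) \leq (I,J)$. So it suffices to show $T_{\mathsf{P}}(\tau(\mathcal{M})) \leq \tau(\mathcal{M})$. By the definition of $T_{\mathsf{P}}$ and Proposition~\ref{tau-elim}, which gives $\tau^{-1}(\tau(\mathcal{M})) = \mathcal{M}$, this reduces to showing $\tau(\Psi_{\mathsf{P}}(\mathcal{M})) \leq \tau(\mathcal{M})$.

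The first key step is to establish $\Psi_{\mathsf{P}}(\mathcal{M}) \leq \mathcal{M}$ at the level of three-valued interpretations. This follows almost directly from $\mathcal{M}$ being a model: by Definition of model, for every clause $\mathsf{p} \leftarrow_\pi \mathsf{E}$ in $\mathsf{P}$ we have $\lsem \mathsf{E} \rsem(\mathcal{M}) \leq_\pi \mathcal{M}(\mathsf{p})$. Therefore, for every predicate constant $\mathsf{p}:\pi$, the least upper bound definition of $\Psi_{\mathsf{P}}$ yields
\[
\Psi_{\mathsf{P}}(\mathcal{M})(\mathsf{p}) \;=\; \bigvee\nolimits_{\leq_\pi}\{\lsem \mathsf{E} \rsem(\mathcal{M}) \mid (\mathsf{p} \leftarrow_\pi \mathsf{E}) \in \mathsf{P}\} \;\leq_\pi\; \mathcal{M}(\mathsf{p}),
\]
since $\mathcal{M}(\mathsf{p})$ is an upper bound of this set (this also implicitly uses that predicate constants with no defining clauses yield $\perp$, which is trivially below $\mathcal{M}(\mathsf{p})$). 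By Definition~\ref{ordering-interpretations}, this gives $\Psi_{\mathsf{P}}(\mathcal{M}) \leq \mathcal{M}$.

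The final step is to transport this inequality through $\tau$. By Statement 3 of Proposition~\ref{tau-monotonic-for-interpretations} (the interpretation-level analogue of Proposition~\ref{tau-monotonic}), $\tau$ is $\leq$-monotonic, so $\tau(\Psi_{\mathsf{P}}(\mathcal{M})) \leq \tau(\mathcal{M})$. Combining with the reduction above, $T_{\mathsf{P}}(\tau(\mathcal{M})) \leq \tau(\mathcal{M})$, completing the proof.

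There is no real obstacle: all work is done by the bijection/monotonicity machinery from Section~\ref{bijection} (lifted to interpretations) plus the defining property of a model. The only subtlety is confirming that $\tau(\mathcal{M})$ really lies in $\mathcal{H}^{\mathsf{ma}}_\mathsf{P} \otimes \mathcal{H}^{\mathsf{am}}_\mathsf{P}$ so that the pre-fixpoint statement is meaningful, but this is exactly Statement~1 of Proposition~\ref{tau-monotonic-for-interpretations}.
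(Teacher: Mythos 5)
Your proof is correct and follows essentially the same route as the paper: establish $\Psi_{\mathsf{P}}(\mathcal{M}) \leq \mathcal{M}$ from the model property and the least-upper-bound definition of $\Psi_{\mathsf{P}}$, then push the inequality through $\tau$ using its $\leq$-monotonicity and the identity $\tau^{-1}(\tau(\mathcal{M})) = \mathcal{M}$. The only cosmetic difference is that you explicitly invoke Proposition~\ref{tau-elim} for that identity and flag well-definedness of $\tau(\mathcal{M})$ as a pair, both of which are fine.
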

\addtocounter{lemma}{-1}
\endgroup
\begin{proof}
By the definition of $\Psi_{\mathsf{P}}$ we have that for every predicate constant $\mathsf{p}$ in $\mathsf{P}$,
$\Psi_\mathsf{P}({\cal M})(\mathsf{p}) = \bigvee_{\leq}\{ \lsem \mathsf{E} \rsem({\cal M}) \mid (\mathsf{p} \leftarrow \mathsf{E}) \in \mathsf{P}\}$.
Since ${\cal M}$ is a model of $\mathsf{P}$ it follows that $\lsem \mathsf{E} \rsem({\cal M}) \leq {\cal M}(\mathsf{p})$
for every clause $\mathsf{p} \leftarrow \mathsf{E}$ in $\mathsf{P}$, i.e., ${\cal M}(\mathsf{p})$ is an
upper bound of the set $\{ \lsem \mathsf{E} \rsem({\cal M}) \mid (\mathsf{p} \leftarrow \mathsf{E}) \in \mathsf{P}\}$.
Therefore, $\bigvee_{\leq}\{ \lsem \mathsf{E} \rsem({\cal M}) \mid (\mathsf{p} \leftarrow \mathsf{E}) \in \mathsf{P}\} \leq {\cal M}(\mathsf{p})$,
which implies that $\Psi_\mathsf{P}({\cal M}) \leq {\cal M}$. By Proposition~\ref{tau-monotonic-for-interpretations}
it follows that $\tau(\Psi_\mathsf{P}({\cal M})) \leq \tau({\cal M})$.
Moreover, by the definition of $T_\mathsf{P}$ and Proposition~\ref{tau-monotonic-for-interpretations} we have that
$T_\mathsf{P}(\tau({\cal M})) = \tau(\Psi_\mathsf{P}(\tau^{-1}(\tau({\cal M})))) = \tau(\Psi_\mathsf{P}(M)) \leq \tau({\cal M})$,
and therefore $\tau({\cal M})$ is a pre-fixpoint of $T_\mathsf{P}$.
\end{proof}
In order to establish Theorem~\ref{minimal-model} that follows, we need the following lemma:
\begin{lemma}\label{minimal_prefixpoint}
Let $\mathsf{P}$ be a program. If $(I, J)\in {\cal H}^{\mathsf{ma}}_\mathsf{P} \otimes  {\cal H}^{\mathsf{am}}_\mathsf{P}$
is a minimal pre-fixpoint of $T_\mathsf{P}$ then $\tau^{-1}(I, J)$ is a minimal model of $\mathsf{P}$.
\end{lemma}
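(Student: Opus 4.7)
The plan is to combine the two preceding lemmas (Lemma~\ref{tp-fixpoint-model} and Lemma~\ref{is_a_prefixpoint}) with the order-preservation and involution properties of $\tau$ established in Section~\ref{bijection}. The overall strategy is to transport the minimality assumption from the ``pre-fixpoints of $T_\mathsf{P}$'' side of the bijection to the ``models of $\mathsf{P}$'' side.

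First I would note that $\tau^{-1}(I,J)$ is already known to be a model of $\mathsf{P}$ by Lemma~\ref{tp-fixpoint-model}, so only $\leq$-minimality remains to be shown. To establish minimality, I would take an arbitrary Herbrand model ${\cal M}' \in {\cal H}_\mathsf{P}$ of $\mathsf{P}$ satisfying ${\cal M}' \leq \tau^{-1}(I,J)$ and aim to conclude ${\cal M}' = \tau^{-1}(I,J)$.

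The key chain of implications is then: by Lemma~\ref{is_a_prefixpoint}, $\tau({\cal M}')$ is a pre-fixpoint of $T_\mathsf{P}$; by Proposition~\ref{tau-monotonic-for-interpretations}(3), monotonicity of $\tau$ with respect to $\leq$ gives $\tau({\cal M}') \leq \tau(\tau^{-1}(I,J))$; and by Proposition~\ref{tau-elim} (applied pointwise at each predicate constant), the right-hand side equals $(I,J)$. Thus $\tau({\cal M}')$ is a pre-fixpoint of $T_\mathsf{P}$ that is $\leq$-below the minimal pre-fixpoint $(I,J)$, so by minimality we must have $\tau({\cal M}') = (I,J)$. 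Applying $\tau^{-1}$ to both sides and invoking Proposition~\ref{tau-elim} once more yields ${\cal M}' = \tau^{-1}(I,J)$, as required.

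There is no real obstacle here; the proof is essentially bookkeeping that exploits the bijection. The only point to be careful about is the implicit extension of Propositions~\ref{tau-monotonic} and~\ref{tau-elim} from the pointwise (per-type) level to whole Herbrand interpretations, but this is precisely what Proposition~\ref{tau-monotonic-for-interpretations} packages, and the analogue of Proposition~\ref{tau-elim} for interpretations follows immediately by applying the type-level version at each predicate constant $\mathsf{p}:\pi$ of $\mathsf{P}$.
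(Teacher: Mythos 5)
Your proposal is correct and follows essentially the same route as the paper's own proof: establish modelhood via Lemma~\ref{tp-fixpoint-model}, then take any model below $\tau^{-1}(I,J)$, push it through $\tau$ using Lemma~\ref{is_a_prefixpoint} and Proposition~\ref{tau-monotonic-for-interpretations}, and invoke minimality of $(I,J)$ together with the bijection to conclude equality. Your extra care in citing Proposition~\ref{tau-elim} for the identity $\tau(\tau^{-1}(I,J))=(I,J)$ is a minor explicitness the paper leaves implicit, but the argument is the same.
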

\begin{proof}
Let ${\cal M} = \tau^{-1}(I, J)$. By Lemma~\ref{tp-fixpoint-model}, ${\cal M}$ is a model of $\mathsf{P}$.
Assume there exists a model ${\cal N}\in {\cal H}_{\mathsf{P}}$ of $\mathsf{P}$ such that ${\cal N} \leq {\cal M}$.
Applying $\tau$ to both sides and using Proposition~\ref{tau-monotonic-for-interpretations} we get that
$\tau({\cal N}) \leq \tau({\cal M})$. By Lemma~\ref{is_a_prefixpoint}, $\tau({\cal N})$ is a pre-fixpoint
of $T_\mathsf{P}$ and since $\tau({\cal M})=(I, J)$ is a minimal pre-fixpoint of  $T_\mathsf{P}$,
we get that $\tau({\cal N}) = \tau({\cal M})$. Applying $\tau^{-1}$ to both sides, we get
${\cal N} ={\cal M}$.
\end{proof}
\begingroup
\def\thetheorem{\ref{minimal-model}}
\begin{theorem}
Let $\mathsf{P}$ be a program. Then, ${\cal M}_{\mathsf{P}}$ is a $\leq$-minimal model of $\mathsf{P}$.
\end{theorem}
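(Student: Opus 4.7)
The plan is to chain together three results that have already been established. By Theorem~\ref{iterative_definition_of_wfm}, the pair $(I_{\delta},J_{\delta})$ is a fixpoint of the stable revision operator $\mathcal{C}_{T_{\mathsf{P}}}$, and by construction $\mathcal{M}_{\mathsf{P}} = \tau^{-1}(I_{\delta},J_{\delta})$. So the task reduces to showing that fixpoints of $\mathcal{C}_{T_{\mathsf{P}}}$ give rise, via $\tau^{-1}$, to $\leq$-minimal models of $\mathsf{P}$.

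First, I would apply Theorem~\ref{minimal-fixpoint} (from the appendix), which states that every fixpoint of the stable revision operator $\mathcal{C}_A$ associated with a consistent approximating operator $A$ is a $\leq$-minimal pre-fixpoint of $A$. Since $T_{\mathsf{P}}$ is a consistent approximating operator on ${\cal H}^{\mathsf{ma}}_\mathsf{P} \otimes {\cal H}^{\mathsf{am}}_\mathsf{P}$ (by Lemma~\ref{T_P_is_Fitting_monotonic} and Proposition~\ref{interpretation_pairs_complete_lattice_cpo}, together with the Interlattice Lub/Glb Properties which hold in the Herbrand setting by Lemma~\ref{interlattice_lub_lemma}), instantiating Theorem~\ref{minimal-fixpoint} with $A = T_{\mathsf{P}}$ yields that $(I_{\delta},J_{\delta})$ is a $\leq$-minimal pre-fixpoint of $T_{\mathsf{P}}$.

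Second, I would invoke Lemma~\ref{minimal_prefixpoint}, which translates minimal pre-fixpoints of $T_{\mathsf{P}}$ into minimal models of $\mathsf{P}$ via $\tau^{-1}$. Applied to $(I_{\delta},J_{\delta})$, this gives that $\mathcal{M}_{\mathsf{P}} = \tau^{-1}(I_{\delta},J_{\delta})$ is a $\leq$-minimal model of $\mathsf{P}$, which is exactly the statement of the theorem.

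There is no real obstacle here, since all the machinery has already been set up. The only subtle point to verify is that Theorem~\ref{minimal-fixpoint} is genuinely applicable in our extended setting where the lattice is $L_1 \otimes L_2$ rather than $L^c$ as in~\cite{DMT04}; but this is precisely why Appendix~\ref{appendix-of-section-6a} was developed, so the invocation is legitimate. Thus the entire proof collapses to one short paragraph of the form: ``By Theorem~\ref{iterative_definition_of_wfm}, $(I_{\delta},J_{\delta})$ is a fixpoint of $\mathcal{C}_{T_{\mathsf{P}}}$; by Theorem~\ref{minimal-fixpoint}, it is a $\leq$-minimal pre-fixpoint of $T_{\mathsf{P}}$; by Lemma~\ref{minimal_prefixpoint}, $\mathcal{M}_{\mathsf{P}} = \tau^{-1}(I_{\delta},J_{\delta})$ is a $\leq$-minimal model of $\mathsf{P}$.''
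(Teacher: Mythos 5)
Your proposal is correct and follows exactly the same route as the paper's own proof: Theorem~\ref{iterative_definition_of_wfm} gives that $(I_{\delta},J_{\delta})$ is a fixpoint of ${\cal C}_{T_{\mathsf{P}}}$, Theorem~\ref{minimal-fixpoint} upgrades this to a $\leq$-minimal pre-fixpoint of $T_{\mathsf{P}}$, and Lemma~\ref{minimal_prefixpoint} transfers minimality through $\tau^{-1}$ to conclude that ${\cal M}_{\mathsf{P}}$ is a $\leq$-minimal model. Your added remark about verifying applicability of Theorem~\ref{minimal-fixpoint} in the $L_1\otimes L_2$ setting is a reasonable extra care, but otherwise the argument is identical.
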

\addtocounter{theorem}{-1}
\endgroup
\begin{proof}
By Theorem~\ref{minimal-fixpoint} (see~\ref{appendix-of-section-6a}) every fixpoint of ${\cal C}_{T_{\mathsf{P}}}$ is a
{\em minimal pre-fixpoint} of $T_{\mathsf{P}}$. Since by Theorem~\ref{iterative_definition_of_wfm}
$(I_{\delta},J_{\delta})=\tau({\cal M}_{\mathsf{P}})$ is a fixpoint of ${\cal C}_{T_{\mathsf{P}}}$,
$\tau({\cal M}_{\mathsf{P}})$ is a minimal pre-fixpoint of $T_{\mathsf{P}}$. By Lemma~\ref{minimal_prefixpoint},
$\tau^{-1}(\tau({\cal M}_{\mathsf{P}})) = {\cal M}_{\mathsf{P}}$ is a minimal model of $\mathsf{P}$.
\end{proof}
\begingroup
\def\thetheorem{\ref{backwards-compatible}}
\begin{theorem}
For every propositional program $\mathsf{P}$, ${\cal M}_{\mathsf{P}}$ coincides with
the well-founded model of $\mathsf{P}$.
\end{theorem}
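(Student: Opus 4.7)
The plan is to reduce the construction of $\mathcal{M}_{\mathsf{P}}$ to the classical AFT-based construction of the well-founded model due to Denecker, Marek and Truszczy\'nski. For a propositional program all predicate constants have type $o$, so a three-valued Herbrand interpretation is just a function from predicate constants to $\{\mathit{true},\mathit{false},\mathit{undef}\}$. Under the bijection $\tau$ of Section~\ref{bijection}, which on type $o$ is fixed as $\tau_o(\mathit{false})=(\mathit{false},\mathit{false})$, $\tau_o(\mathit{true})=(\mathit{true},\mathit{true})$, $\tau_o(\mathit{undef})=(\mathit{false},\mathit{true})$, such an interpretation $\mathcal{I}$ corresponds to a pair $(I,J)$ of two-valued Herbrand interpretations with $I\subseteq J$ (thinking of them as sets of true atoms). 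This is exactly the classical representation of a three-valued interpretation as a pair in the bilattice $\mathcal{H}_\mathsf{P}^{\mathsf{ma}}\otimes\mathcal{H}_\mathsf{P}^{\mathsf{am}}$.

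Next I would establish that under this identification, the operator $T_{\mathsf{P}}$ of Section~\ref{well_founded} coincides with the standard two-valued approximating operator of~\cite{DMT04} for classical logic programs. By definition $T_{\mathsf{P}}(I,J)=\tau(\Psi_{\mathsf{P}}(\tau^{-1}(I,J)))$, and for each predicate $\mathsf{p}:o$ we have $\Psi_{\mathsf{P}}(\mathcal{I})(\mathsf{p})=\bigvee_{\leq_o}\{\lsem\mathsf{E}\rsem(\mathcal{I}) : (\mathsf{p}\leftarrow_o \mathsf{E})\in \mathsf{P}\}$. Since $\mathsf{P}$ is propositional, each such body $\mathsf{E}$ is a closed Boolean combination of predicate constants built from $\bigwedge_o$, $\bigvee_o$ and $\mnot$; the semantic clauses in Definition~\ref{definition-semantics-of-expressions} are then the Kleene three-valued truth tables. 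A straightforward structural induction on $\mathsf{E}$ shows that $[\tau_o(\lsem\mathsf{E}\rsem(\tau^{-1}(I,J)))]_1$ is obtained by the two-valued evaluation of $\mathsf{E}$ where positive occurrences of predicates are read from $I$ and negative occurrences from $J$, and symmetrically for the second projection. This is precisely the ``$\Psi^1$/$\Psi^2$'' definition of the van Emden--Kowalski style approximating operator used in~\cite{DMT04}. Consequently the components $[T_{\mathsf{P}}(I,\cdot)]_2$ and $[T_{\mathsf{P}}(\cdot,J)]_1$ coincide with the component operators whose least fixpoints define $\mathcal{C}_{T_{\mathsf{P}}}$ in the standard propositional setting.

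Given this identification of $T_{\mathsf{P}}$ and hence of $\mathcal{C}_{T_{\mathsf{P}}}$ with their classical counterparts, the iterative sequence of Theorem~\ref{iterative_definition_of_wfm} specialises to the transfinite sequence of stable revisions starting from $(\perp,\top)$ that, by Theorem~3.11 of~\cite{DMT04}, computes the well-founded fixpoint of the van Emden--Kowalski approximating operator; by the main theorem of~\cite{DMT04} this fixpoint is exactly the classical well-founded model of $\mathsf{P}$. Applying $\tau^{-1}$ to the limit pair $(I_\delta,J_\delta)$ therefore yields $\mathcal{M}_{\mathsf{P}}$ equal to the three-valued well-founded model, which is the desired conclusion.

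The main obstacle is the bookkeeping in the second step, namely verifying by induction on the clause body $\mathsf{E}$ that the projections of $\tau_o\circ\lsem\mathsf{E}\rsem\circ\tau^{-1}$ agree with the classical Kleene two-valued evaluations that underlie the standard approximating operator; one must handle $\bigwedge_o$, $\bigvee_o$ and especially $\mnot$ (which swaps the two projections because of the clauses $\tau_o(\mathit{false})=(\mathit{false},\mathit{false})$, $\tau_o(\mathit{true})=(\mathit{true},\mathit{true})$, $\tau_o(\mathit{undef})=(\mathit{false},\mathit{true})$ together with $\mathit{undef}^{-1}=\mathit{undef}$). Once this lemma is in place the remainder of the argument is a direct appeal to the machinery already developed in Section~\ref{well_founded} and in~\ref{appendix-of-section-6a}.
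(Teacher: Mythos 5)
Your proposal is correct and follows essentially the same route as the paper, which simply observes that the construction of the well-founded semantics for propositional programs in Section~6 of~\cite{DMT04} is a special case of the present one. In fact, your write-up carries out the ``careful inspection'' that the paper leaves implicit --- in particular the structural induction showing that the two projections of $\tau_o\circ\lsem\mathsf{E}\rsem\circ\tau^{-1}$ reproduce the classical two-valued evaluations (with $\mnot$ swapping the components) --- so it supplies strictly more detail than the paper's own argument while resting on the same identification.
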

\addtocounter{theorem}{-1}
\endgroup
\begin{proof}
In~\cite{DMT04}[Section 6, pages 107-108], the well-founded semantics of propositional
logic programs (allowing arbitrary nesting of conjunction, disjunction and negation
in clause bodies) is derived. By a careful inspection of the steps used in the above
reference, it can be seen that the construction given therein is a special case of the
technique used in the present paper.
\end{proof}

\section{$\!\!\!$: The Model ${\cal M}_{\mathsf{P}}$ for an Example Program}\label{appendix-of-section-6c}
Consider the following program $\mathsf{P}$ which is a simplified non-recursive version
of a program taken from~\cite{RondogiannisS17}. Initially we use a Prolog-like syntax:
\[
\begin{array}{l}
\mbox{\tt s(Q,V) $\leftarrow$ Q(V)}\\
\mbox{\tt p(R) $\leftarrow$ R}\\
\mbox{\tt q(R) $\leftarrow$ $\mysim$ w(R)}\\
\mbox{\tt w(R) $\leftarrow$ $\mysim$ R}
\end{array}
\]
In the above example, the type of {\tt p}, {\tt q} and {\tt w} is $o\rightarrow o$, and the
type of {\tt s} is $(o\rightarrow o)\rightarrow o \rightarrow o$. In ${\cal HOL}$ notation
the program can be written as follows:
\[
\begin{array}{l}
\mbox{\tt s  $\leftarrow$ $\lambda$Q.$\lambda$V.(Q V)}\\
\mbox{\tt p  $\leftarrow$ $\lambda$R.R}\\
\mbox{\tt q  $\leftarrow$ $\lambda$R.$\mysim$ (w R)}\\
\mbox{\tt w  $\leftarrow$ $\lambda$R.($\mysim$ R)}
\end{array}
\]
Notice now that the bodies of the clauses of {\tt s}, {\tt q} and {\tt w} do not involve
other predicate constants, and therefore the calculation of their meaning can be performed
in a more direct way. On the other hand, the body of the clause concerning {\tt q} involves the
predicate constant {\tt w}, and therefore the calculation of the meaning of {\tt q} is more involved.

The first approximation to the well-founded model of $\mathsf{P}$ is the pair
$(I_0,J_0) = (\perp,\top)$ (see Theorem~\ref{iterative_definition_of_wfm}). Consider
now $(I_1,J_1)$. We have:
$$I_{1} = \textit{lfp}([T_{\mathsf{P}}(\cdot,\top)]_1) = \textit{lfp}([\tau(\Psi_\mathsf{P}(\tau^{-1}(\cdot,\top)))]_1)$$
and
$$J_{1} = \textit{lfp}([T_{\mathsf{P}}(\perp,\cdot)]_2) = \textit{lfp}([\tau(\Psi_\mathsf{P}(\tau^{-1}(\perp,\cdot)))]_2)$$
where, as discussed in~\ref{appendix-of-section-6a}, the $\textit{lfp}$ in the case of $I_1$ is the least upper bound of
the sequence $I_1^0,I_1^1,\ldots$, defined as follows:
\[
\begin{array}{lll}
I_1^0  & = & [\tau(\Psi_\mathsf{P}(\tau^{-1}(\perp,\top)))]_1\\
I_1^1  & = & [\tau(\Psi_\mathsf{P}(\tau^{-1}(I_1^0,\top)))]_1\\
       & \cdots & \\
I_1^{\alpha+1}  & = & [\tau(\Psi_\mathsf{P}(\tau^{-1}(I_1^{\alpha},\top)))]_1\\
       & \cdots & \\
\end{array}
\]
and the $\textit{lfp}$ in the case of $J_1$ is the least upper bound of the sequence $J_1^0,J_1^1,\ldots$, defined
as follows:
\[
\begin{array}{lll}
J_1^0  & = & [\tau(\Psi_\mathsf{P}(\tau^{-1}(\perp,\perp)))]_2\\
J_1^1  & = & [\tau(\Psi_\mathsf{P}(\tau^{-1}(\perp, J_1^0)))]_2\\
       & \cdots & \\
J_1^{\alpha+1}  & = & [\tau(\Psi_\mathsf{P}(\tau^{-1}(\perp,J_1^{\alpha})))]_2\\
       & \cdots & \\
\end{array}
\]
For the predicate constant {\tt w} we have:
\[
\begin{array}{l}
I_1^0({\tt w})   =  [\tau(\Psi_\mathsf{P}(\tau^{-1}(\perp,\top)))]_1({\tt w})  =
                      [\tau(\lsem \mbox{\tt $\lambda$R.$\mysim$ R}\rsem(\tau^{-1}(\perp,\top)))]_1  =
                      [\tau(\lambda v.v^{-1})]_1\\
I_1^1({\tt w})   =  [\tau(\Psi_\mathsf{P}(\tau^{-1}(I_1^0,\top)))]_1({\tt w}) =
                      [\tau(\lsem \mbox{\tt $\lambda$R.$\mysim$ R}\rsem(\tau^{-1}(I_1^0,\top)))]_1  =
                      [\tau(\lambda v.v^{-1})]_1\\
\hspace{2cm}\cdots  \\
I_1^{\alpha+1}({\tt w})   =  [\tau(\Psi_\mathsf{P}(\tau^{-1}(I_1^{\alpha},\top)))]_1({\tt w}) =
                      [\tau(\lsem \mbox{\tt $\lambda$R.$\mysim$ R}\rsem(\tau^{-1}(I_1^{\alpha},\top)))]_1  =
                      [\tau(\lambda v.v^{-1})]_1\\
\hspace{2cm}\cdots
\end{array}
\]
Similarly, we can show that for every ordinal $\alpha$, $J_1^{\alpha}({\tt w})= [\tau(\lambda v.v^{-1})]_2$. The above
imply that ${\cal M}_{\mathsf{P}}({\tt w}) = \lambda v.v^{-1}$. In other words, the denotation of {\tt w}
is the {\em not} function over our 3-valued truth domain. In a similar way, it follows that
${\cal M}_{\mathsf{P}}({\tt p}) = \lambda v.v$. In other words, the denotation of {\tt p} is the identity
function over our 3-valued domain.

Consider now the predicate constant {\tt q}. We have:
\[
\begin{array}{l}
I_1^0({\tt q})   =  [\tau(\Psi_\mathsf{P}(\tau^{-1}(\perp,\top)))]_1({\tt q}) =
                      [\tau(\lsem \mbox{\tt $\lambda$R.$\mysim$(w R)}\rsem(\tau^{-1}(\perp,\top)))]_1 =
                      [\tau(\lambda v.\textit{undef})]_1\\
I_1^1({\tt q})  = [\tau(\Psi_\mathsf{P}(\tau^{-1}(I_1^0,\top)))]_1({\tt q}) =
                      [\tau(\lsem \mbox{\tt $\lambda$R.$\mysim$(w R)}\rsem(\tau^{-1}(I_1^0,\top)))]_1 =
                      [\tau(f)]_1\\
\hspace{2cm}\cdots \\
I_1^{\alpha+1}({\tt q})  = [\tau(\Psi_\mathsf{P}(\tau^{-1}(I_1^{\alpha},\top)))]_1({\tt q}) =
                      [\tau(\lsem \mbox{\tt $\lambda$R.$\mysim$(w R)}\rsem(\tau^{-1}(I_1^{\alpha},\top)))]_1 =
                      [\tau(f)]_1\\
\hspace{2cm}\cdots
\end{array}
\]
where $f$ is the function such that $f(\textit{true})=f(\textit{undef})=\textit{undef}$ and $f(\textit{false})=\textit{false}$.
Similarly, we have:
\[
\begin{array}{l}
J_1^0({\tt q})  = [\tau(\Psi_\mathsf{P}(\tau^{-1}(\perp,\perp)))]_2({\tt q}) =
                      [\tau(\lsem \mbox{\tt $\lambda$R.$\mysim$(w R)}\rsem(\tau^{-1}(\perp,\perp)))]_2 =
                      [\tau(\lambda v.\textit{true})]_2\\
J_1^1({\tt q})  = [\tau(\Psi_\mathsf{P}(\tau^{-1}(\perp,J^0_1)))]_2({\tt q}) =
                      [\tau(\lsem \mbox{\tt $\lambda$R.$\mysim$(w R)}\rsem(\tau^{-1}(\perp,J_1^0)))]_2 =
                      [\tau(g)]_2\\
\hspace{2cm}\cdots  \\
J_1^{\alpha+1}({\tt q})  = [\tau(\Psi_\mathsf{P}(\tau^{-1}(\perp,J^{\alpha}_1)))]_2({\tt q}) =
                      [\tau(\lsem \mbox{\tt $\lambda$R.$\mysim$(w R)}\rsem(\tau^{-1}(\perp,J_1^{\alpha})))]_2 =
                      [\tau(g)]_2\\
\hspace{2cm}\cdots
\end{array}
\]
where $g$ is the function such that $g(\textit{false})=g(\textit{undef})=\textit{undef}$ and $g(\textit{true})=\textit{true}$.

Consider now $(I_2,J_2)$. We have:
$$I_{2} = \textit{lfp}([T_{\mathsf{P}}(\cdot,J_1)]_1) = \textit{lfp}([\tau(\Psi_\mathsf{P}(\tau^{-1}(\cdot,J_1)))]_1)$$
and
$$J_{2} = \textit{lfp}([T_{\mathsf{P}}(I_1,\cdot)]_2) = \textit{lfp}([\tau(\Psi_\mathsf{P}(\tau^{-1}(I_1,\cdot)))]_2)$$
where the $\textit{lfp}$ in the case of $I_2$ is the least upper bound of the sequence $I_2^0,I_2^1,\ldots$ defined
as follows:
\[
\begin{array}{lll}
I_2^0  & = & [\tau(\Psi_\mathsf{P}(\tau^{-1}(\perp,J_1)))]_1\\
I_2^1  & = & [\tau(\Psi_\mathsf{P}(\tau^{-1}(I_2^0,J_1)))]_1\\
       & \cdots & \\
I_2^{\alpha+1}  & = & [\tau(\Psi_\mathsf{P}(\tau^{-1}(I_2^{\alpha},J_1)))]_1\\
       & \cdots & \\
\end{array}
\]
and the $\textit{lfp}$ in the case of $J_2$ is the least upper bound of the sequence $J_2^0,J_2^1,\ldots$ defined
as follows:
\[
\begin{array}{lll}
J_2^0  & = & [\tau(\Psi_\mathsf{P}(\tau^{-1}(I_1,I_1^{*})))]_2\\
J_2^1  & = & [\tau(\Psi_\mathsf{P}(\tau^{-1}(I_1, J_2^0)))]_2\\
       & \cdots & \\
J_2^{\alpha+1}  & = & [\tau(\Psi_\mathsf{P}(\tau^{-1}(I_1,J_2^{\alpha})))]_2\\
       & \cdots & \\
\end{array}
\]
where $I_1^*$ is the least interpretation in ${\cal H}_{\mathsf{P}}^{\mathsf{am}}$
such that $I_1 \leq I_1^*$ (namely, the bottom antimonotone-monotone element of the
interval $[I_1,\perp]$, see the construction in~\ref{appendix-of-section-6a}).

Consider again the predicate constant {\tt q}. We have:
\[
\begin{array}{l}
I_2^0({\tt q})  = [\tau(\Psi_\mathsf{P}(\tau^{-1}(\perp,J_1)))]_1({\tt q}) =
                      [\tau(\lsem \mbox{\tt $\lambda$R.$\mysim$(w R)}\rsem(\tau^{-1}(\perp,J_1)))]_1\\
I_2^1({\tt q})  = [\tau(\Psi_\mathsf{P}(\tau^{-1}(I_2^0,J_1)))]_1({\tt q}) =
                      [\tau(\lsem \mbox{\tt $\lambda$R.$\mysim$(w R)}\rsem(\tau^{-1}(I_2^0,J_1)))]_1 =
                      [\tau(\lambda v.v)]_1\\
\hspace{2cm}\cdots \\
I_2^{\alpha+1}({\tt q})  = [\tau(\Psi_\mathsf{P}(\tau^{-1}(I_2^{\alpha},J_1)))]_1({\tt q}) =
                      [\tau(\lsem \mbox{\tt $\lambda$R.$\mysim$(w R)}\rsem(\tau^{-1}(I_2^{\alpha},J_1)))]_1 =
                      [\tau(\lambda v.v)]_1\\
\hspace{2cm}\cdots
\end{array}
\]
because for all ordinals $\alpha$, $I_2^{\alpha}({\tt w}) = [\tau(\lambda v.v^{-1})]_1$ and
$J_1({\tt w}) = [\tau(\lambda v.v^{-1})]_2$. Similarly, we have:
\[
\begin{array}{l}
J_2^0({\tt q})  = [\tau(\Psi_\mathsf{P}(\tau^{-1}(I_1,I_1^*)))]_2({\tt q})  =
                      [\tau(\lsem \mbox{\tt $\lambda$R.$\mysim$(w R)}\rsem(\tau^{-1}(I_1,I_1^*)))]_2 \\
J_2^1({\tt q})  =  [\tau(\Psi_\mathsf{P}(\tau^{-1}(I_1,J^0_2)))]_2({\tt q}) =
                      [\tau(\lsem \mbox{\tt $\lambda$R.$\mysim$(w R)}\rsem(\tau^{-1}(I_1,J^0_2)))]_2  =
                      [\tau(\lambda v.v)]_2\\
\hspace{2cm}\cdots\\
J_2^{\alpha+1}({\tt q})  =   [\tau(\Psi_\mathsf{P}(\tau^{-1}(I_1,J^{\alpha}_2)))]_2({\tt q}) =
                      [\tau(\lsem \mbox{\tt $\lambda$R.$\mysim$(w R)}\rsem(\tau^{-1}(I_1,J_2^{\alpha})))]_2  =
                      [\tau(\lambda v.v)]_2\\
\hspace{2cm}\cdots
\end{array}
\]
because $I_1({\tt w}) = [\tau(\lambda v.v^{-1})]_1$ and for all ordinals $\alpha$, $J_2^\alpha({\tt w}) = [\tau(\lambda v.v^{-1})]_2$.
The above imply that ${\cal M}_{\mathsf{P}}({\tt q}) = \lambda v.v$. In other words, the denotation of {\tt q}
is the identity function over our 3-valued truth domain. Notice that despite their different definitions,
{\tt p} and {\tt q} denote the same 3-valued relation (in some sense, the two negations in the definition
of {\tt q} cancel each other).

Finally, consider the predicate constant {\tt s}. We have:
\[
\begin{array}{l}
I_1^0({\tt s})   =
                      [\tau(\lsem \mbox{\tt $\lambda$Q.$\lambda$V.(Q V)}\rsem(\tau^{-1}(\perp,J_1)))]_1  =
                      [\tau(\lambda q.\lambda v.(q\,  v))]_1\\
I_1^1({\tt s})   =
                      [\tau(\lsem \mbox{\tt $\lambda$Q.$\lambda$V.(Q V)}\rsem(\tau^{-1}(I_1^0,\top)))]_1  =
                      [\tau(\lambda q.\lambda v.(q\,  v))]_1\\
\hspace{2cm}\cdots  \\
I_1^{\alpha+1}({\tt s})   =
                      [\tau(\lsem \mbox{\tt $\lambda$Q.$\lambda$V.(Q V)}\rsem(\tau^{-1}(I_1^{\alpha},\top)))]_1  =
                      [\tau(\lambda q.\lambda v.(q\,  v))]_1\\
\hspace{2cm}\cdots
\end{array}
\]
and also:
\[
\begin{array}{l}
J_1^0({\tt s})   =
                      [\tau(\lsem \mbox{\tt $\lambda$Q.$\lambda$V.(Q V)}\rsem(\tau^{-1}(I_1,\perp)))]_2  =
                      [\tau(\lambda q.\lambda v.(q\,  v))]_2\\
J_1^1({\tt s})   =
                      [\tau(\lsem \mbox{\tt $\lambda$Q.$\lambda$V.(Q V)}\rsem(\tau^{-1}(\perp,J_1^0)))]_2  =
                      [\tau(\lambda q.\lambda v.(q\,  v))]_2\\
\hspace{2cm}\cdots  \\
J_1^{\alpha+1}({\tt s})   =
                      [\tau(\lsem \mbox{\tt $\lambda$Q.$\lambda$V.(Q V)}\rsem(\tau^{-1}(\perp,J_1^{\alpha})))]_2  =
                      [\tau(\lambda q.\lambda v.(q\,  v))]_2\\
\hspace{2cm}\cdots
\end{array}
\]
The above imply that ${\cal M}_{\mathsf{P}}({\tt s}) = \lambda q.\lambda v. (q\, v)$.

\end{document}